\documentclass[11pt]{article}
\usepackage[margin=0.8in]{geometry}

\usepackage{graphicx}
\usepackage{framed}
\usepackage[normalem]{ulem}
\usepackage{amsmath}
\usepackage{xfrac, amsthm,thm-restate,thmtools}
 \usepackage{amssymb}
\usepackage{amsbsy}
\usepackage{amsfonts}
\usepackage{cancel}
\usepackage{enumerate}
\usepackage{float}
\usepackage{etoolbox}
\usepackage{subcaption}
\usepackage{rotating}
\usepackage{url}
\usepackage{enumitem}
\usepackage{mathtools}
\usepackage{comment}
\usepackage[font={small,it}]{caption}
\usepackage{bbm}
\usepackage[usenames,dvipsnames]{color}
\usepackage{xcolor}
\usepackage{algorithmic} %[noend]
\usepackage{algorithm}
\usepackage{array}
\usepackage{nicefrac}
\usepackage{bm}
\usepackage{tikz}
\usepackage{rotating}
\usepackage{booktabs}
\usepackage{tabularx}
\usepackage{array}
\usepackage{amssymb}
\usepackage{pifont}
\usetikzlibrary{arrows.meta, positioning, calc}
\usetikzlibrary{shapes.geometric, arrows.meta, positioning}
\usepackage{placeins}
\usepackage{mdframed}
\usepackage{ragged2e} 
\definecolor{gray230}{RGB}{240,240,240}
\usepackage{pgfplots}
\usepackage{subcaption}
\pgfplotsset{compat=1.18}
\usepackage[colorlinks=true, allcolors=black]{hyperref}
\hypersetup{    
  colorlinks   = true, %Colours links instead of ugly boxes
  urlcolor     = blue, %Colour for external hyperlinks
  linkcolor    = blue, %Colour of internal links
  citecolor   = red %Colour of citations, could be ``red''
}
\providecommand{\keywords}[1]{\textbf{\textit{Index terms---}} #1}

\usepackage{thmtools,thm-restate}
\usepackage{tcolorbox}

\DeclareMathOperator*{\argmax}{arg\,max}

\newtheorem{theorem}{Theorem}
\newtheorem{lemma}{Lemma}

\newtheorem{definition}{Definition}
\newtheorem{observation}{Observation}
\newtheorem*{example}{Example}

\newcommand{\x}{\mathbf x}

\newcommand{\cmark}{\ding{51}} % check mark
\newcommand{\xmark}{\ding{55}} % cross mark
\newcolumntype{Y}{>{\raggedright\arraybackslash}X}
\newcolumntype{C}{>{\centering\arraybackslash}m{0.9cm}}
\title{\bfseries TRUST-SC: Truthful Multi-Task Double Auction for Quality-Aware Spatial Crowdsourcing in Strategic Environment}

\author{Chattu Bhargavi\thanks{\textcolor{blue}{School of Computer Science and Engineering, VIT-AP University, Amaravati, India.} {\tt \textcolor{blue}{bhargavi.chattu506@gmail.com}}}~\href{https://orcid.org/0000-0003-4481-827X}{\includegraphics[scale=0.0045]{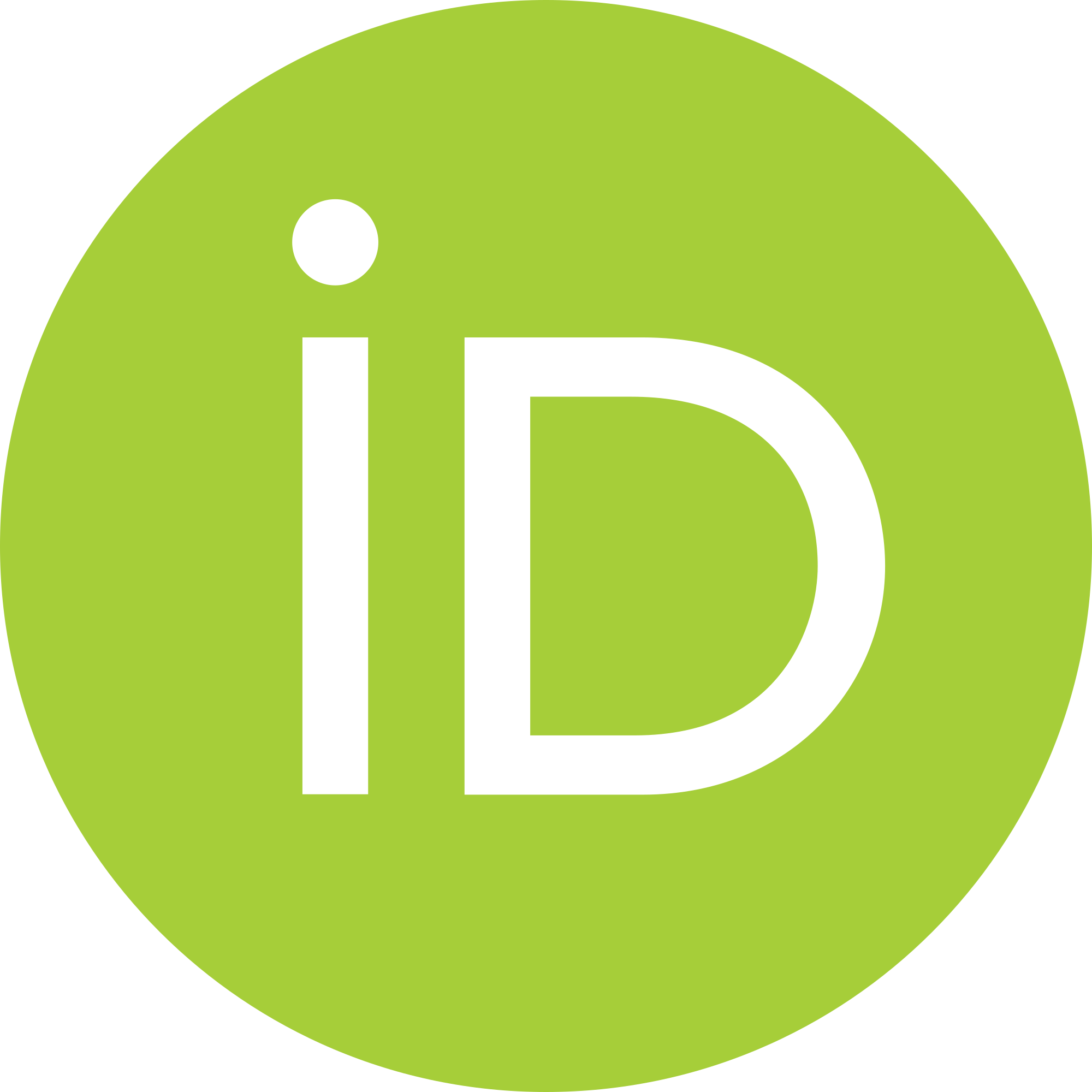}}\and Vikash Kumar Singh\thanks{\textcolor{blue}{School of Computer Science and Engineering, VIT-AP University, Amaravati, India.} {\tt \textcolor{blue}{vikash.singh@vitap.ac.in}}}~\href{https://orcid.org/0000-0002-8747-1627}{\includegraphics[scale=0.0047]{orcid.png}} \and Alok Kumar Shukla\thanks{\textcolor{blue}{Thapar Institute of Engineering \& Technology, Patiala, Punjab, India.} {\tt \textcolor{blue}{alok.shukla@thapar.edu}} }}
\date{}
\begin{document}
\maketitle

\begin{abstract}
Spatial crowdsourcing (SC) enables the assignment of location-based tasks to mobile users who must travel to specific locations to perform sensing or service activities. However, SC systems often operate in strategic environments where both task requesters and task executors possess private information about valuation, leading to challenges in designing efficient and truthful incentive mechanisms. To address these issues, this paper proposes a truthful multi-task double Auction for quality-aware spatial crowdsourcing (TRUST-SC). The proposed framework adopts a three-tier architecture. First, task executors are grouped into spatial clusters to improve scalability and reduce allocation complexity. Second, reliable executors are identified using a majority voting–based quality evaluation process. Third, tasks are allocated, and payments are determined through a multi-unit double auction mechanism that guarantees incentive compatibility and individual rationality. Theoretical analysis and simulation results demonstrate that the proposed mechanism achieves efficient task allocation, reliable executor selection, and improved performance compared with existing benchmark mechanisms.

 \end{abstract}
\keywords{Spatial crowdsourcing, multi-unit double auction, incentive compatibility, strategic agents, computational efficiency.}
\section{Introduction}
\label{sec:intro}
Spatial crowdsourcing (SC) \cite{11202225, 11163677, 10.1145/3291933, 10.1007/s00778-019-00568-7} involves assigning geographical location-based tasks to crowd workers, where successful task completion requires physical presence at the locations to gather information or carry out the tasks \cite{tong2017survey, tong2020survey, gummidi2019sc_survey}. In SC systems, a large number of sensor-equipped mobile users is employed to execute location-based sensing or actuation activities. It has become more popular in real-world applications such as \emph{traffic monitoring} \cite{9458714}, \emph{smart transportation} \cite{feng2024three}, \emph{environmental monitoring} \cite{wang2024sqcs}, \emph{infrastructure auditing} \cite{puerta2024spatial}, and \emph{emergency response} \cite{nhess-2018-206, alamri2024geospatial}. Such systems typically involve two parties: \emph{task requesters}- who publish location-dependent tasks, and task executors (e.g., mobile users or IoT devices)- who complete these tasks in return for rewards (may be monetary). However, real-world spatial crowdsourcing faces several challenges, including strategic behavior of task requesters or task executors or both, limited budgets, task heterogeneity, privacy concerns, and most importantly, the need for well-designed incentive schemes that motivate self-interested agents (task requesters and task executors) to participate in the SC system.\\
    \indent In real-world SC settings, there is a two-sided market including \textit{task requesters} who float tasks associated with particular geographic locations and define their execution constraints and \textit{task executors} who may consist of mobile users capable of successfully performing these tasks. Each party behaves strategically and possesses private information regarding task valuations and execution costs. Designing mechanisms for such a market that are \textit{truthful}, \textit{budget feasible}, \textit{computationally efficient}, and \textit{individually rational} presents a major research challenge. Furthermore, task diversity and heterogeneous executor capabilities require mechanisms that handle such \textit{heterogeneity} while still achieving efficient task assignment and fair rewards. A growing body of literature addresses these issues. For example, Xu \textit{et al.} (2022) \cite{9795698} proposed a three-stage Stackelberg game model to identify socially aware executors with uncertain quality. Zhou \textit{et al.} (2022) \cite{Zhou2022BiObjective} proposed a bi-objective mechanism aimed at maximizing both value and coverage subject to budget constraints. Feng (2025) \cite{feng2025spatial} studies role division in heterogeneous multi-task allocation using the RD-ISM framework. While these models address the key components of SC—\emph{truthfulness}, \emph{social trust}, \emph{quality awareness}, and real-time freshness, they largely tackle these challenges independently. Wang \textit{et al.} (2024) propose the SQCS framework, which models the interaction among workers, requesters, and platforms using evolutionary game theory to ensure sustainable quality control \cite{sqcs2024}. The framework derives conditions for stable system behavior and reports issues such as free-riding and false reporting. Other works focus on heterogeneous task allocation and reliability-aware assignment. For example, Feng \textit{et al.} (2025) study heterogeneous multi-task allocation and highlight the importance of considering diverse worker capabilities and task requirements \cite{feng2025}. Additionally, graph-based allocation approaches have been proposed to handle spatial heterogeneity and large-scale data efficiently \cite{li2023heterogeneous}. Strategic behavior by both requesters and executors in SC systems raises several research questions.\\

\noindent\textbf{\underline{Research Questions}:}  
\begin{itemize}
 \item How can we design a mechanism that ensures truthfulness and quality-awareness, supporting efficient task allocation across diverse types and locations?
 \item What strategies can effectively match heterogeneous tasks to capable executors in spatial crowdsourcing systems?
 \item Can a multi-unit double auction framework ensure truthfulness and individual rationality for both requesters and executors?
 \item What are the analytical and empirical performance bounds of such mechanisms in large-scale spatial crowdsourcing systems?
\end{itemize}
Motivated by the above-discussed scenarios and the research questions, this paper presents a novel three-tiered (or phase) framework for SC markets, which integrates ideas from double auction theory. An overall illustration of the proposed framework is provided in Fig. \ref{Fig:1}. The proposed setup consists of multiple task requesters, endowed with location-based multiple distinct tasks along with a budget (or cost) for each task. On the other side, multiple task executors request a set of tasks for execution purposes and, in return, ask for incentives reported as \emph{bids}. In the first tier of the proposed framework, considering the location information of the tasks and the location of the task executors, the tasks and the task executors are placed in clusters. The reason behind placing the task executors and the tasks into different clusters is that, in practical scenarios, the number of task executors and tasks can be very large, and directly matching all executors with tasks may lead to high computational complexity and inefficient allocation. By forming spatial clusters, the platform reduces the search space and performs task allocation within smaller localized groups. So, the output of the first tier is the set of clusters containing the task executors and the tasks.\\
\indent Once the clusters are formed, the next objective of the proposed framework is to determine the quality task executors in each of the clusters. For that purpose, as a second tier, the small part of the task(s) from a respective cluster will be provided to the task executors for execution. The task executors will be executing the submitted tasks and will return the completed tasks to the platform. On receiving the completed tasks, the platform gives them to the other task executors for grading purposes. The process iterates until all the task executors are graded. At the end of the process, the quality task executors are determined from each of the clusters. The output of the second tier is a set of quality task executors in each cluster.
\begin{figure}[H]
\centering
\includegraphics[scale=0.65]{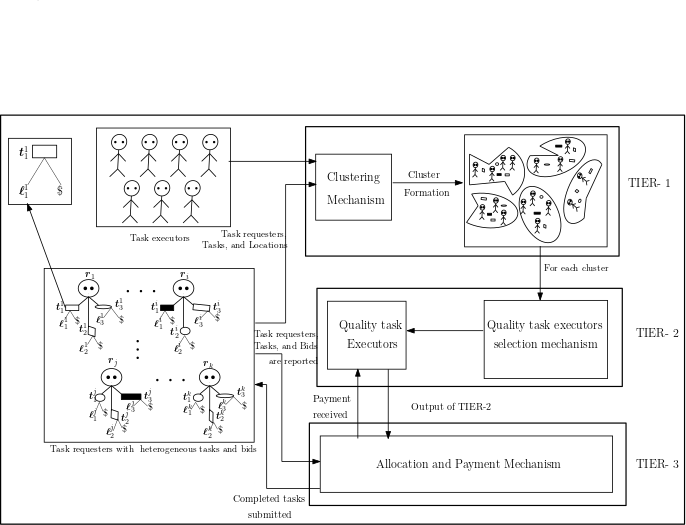}
\caption{A Three-tier multi-unit double auction framework for spatial crowdsourcing.}
\label{Fig:1}
\end{figure}
\indent Once the quality task executors are determined, in the third phase, the objective is to allocate the tasks to the quality task executors in each cluster and decide their payment. Once the tasks are completed, the results are returned to the platform and subsequently delivered to the corresponding task requesters. The task executors receive payments for their services according to the allocation and payment mechanism. Throughout the process, the bid values submitted by task requesters and the ask values reported by task executors are treated as private information and remain undisclosed to other participants.\\
 \indent To address the challenges of efficient task allocation and reliable executor selection in spatial crowdsourcing environments, this paper proposes a novel three-tier mechanism termed the TRUthful multi-task double Auction for quality-aware Spatial Crowdsourcing (TRUST-SC) in a strategic setting. In the first tier, task executors are grouped into spatial clusters based on their geographical locations, motivated by \cite{likas2003global, chong2021k, oti2021comprehensive}, which reduces the search space and improves the efficiency of task allocation. In the second tier, reliable task executors are identified through a quality determination subroutine inspired by \cite{NNis_Pre_2007, roughgarden2016cs269i}, ensuring that only high-quality participants are considered for task execution. In the third tier, a subset of these qualified executors is selected as winners, and their payments are determined according to allocation and pricing rules motivated by \cite{SegalHalevi2018MUDAAT, doi:10.1287/moor.2021.1124}. The main contributions of this work are summarized in the following subsection.
\subsection{Our Contributions}
Our main contributions are as follows:
\begin{itemize}
 \item To ensure reliable task execution, we first introduce a spatial clustering strategy that groups tasks and executors based on geographical proximity, which improves scalability and reduces the computational complexity of task allocation.
 \item We then develop a majority voting–based task executor evaluation mechanism to identify reliable task executors within each cluster, thereby enhancing the quality and reliability of crowdsourced data.
 \item Furthermore, we design a randomized split-market multi-unit double auction mechanism that determines equilibrium prices and allocates tasks efficiently while mitigating strategic manipulation by participating agents.
 \item We theoretically analyze the proposed framework and show that it satisfies important economic properties, including truthfulness (incentive compatibility), individual rationality, and computational efficiency.
 \item Finally, extensive simulations demonstrate that the proposed mechanism significantly improves task allocation efficiency, payment fairness, and system reliability compared with existing baseline approaches.
\end{itemize}

\subsection{Paper Organization}
The remainder of this paper is organized as follows. Section \ref{CABsec:rpw} reviews the related work. Section \ref{CABsec:wum} presents the system model and problem formulation. Section \ref{section:algorithm} describes the proposed TRUST-SC mechanism. Section \ref{sec:TheA} provides Economic Analysis and Probabilistic Guarantees. Section \ref{sec:Sim} presents experimental results, and Section \ref{se:conc} concludes the paper.

\section{Related Prior Works}\label{CABsec:rpw}
In this section, we review prior research that forms the foundation of our study. The discussion is structured into two parts. First, we examine the literature on double auction mechanisms, which provides key insights into efficient and truthful resource allocation among multiple buyers and sellers. Next, we turn to spatial crowdsourcing, where location-dependent tasks and incentive models are discussed. 
% Finally, we explore contributions in the broader field of the Internet of Things (IoT), emphasizing how IoT-enabled devices and platforms support large-scale data collection and task execution.

\subsection{Double auction mechanism}
The double auction mechanism serves as a subroutine in the proposed mechanism; hence, related works in this area are reviewed here. In a typical double auction, multiple sellers seek to sell items, multiple buyers are interested in purchasing them, and an auctioneer (or intermediary) facilitates the process. The mechanism allows both buyers and sellers to simultaneously disclose their private types, while also permitting them to act strategically \cite{RPM_The_1992, NNis_Pre_2007, MYERSON1983265}. A double auction essentially answers the key question: \emph{who buys which item, and at what price?} Its applications span diverse domains such as the spectrum market \cite{5062011, 7420720, Leyton-Brown7202}, Internet advertising \cite{DBLP:journals/corr/FeldmanG16}, and emission trading \cite{10.1007/s10479-018-2826-y}. For two-sided markets, Myerson and Satterthwaite’s impossibility result \cite{MYERSON1983265} establishes that no mechanism can simultaneously satisfy \emph{incentive compatibility} (IC), \emph{budget balance} (BB), and \emph{individual rationality} (IR) while also maximizing gain-from-trade\footnote{Gain-from-trade is defined as the difference between the total valuations of buyers and the total asks of sellers.}.\\
\indent Over the years, numerous double auction mechanisms have been proposed \cite{PLOTT1990245, RPM_The_1992, 10.1007/3-540-45749-6_34}. McAfee’s mechanism \cite{RPM_The_1992} is a landmark contribution, designed for a market where each seller offers a single unit and each buyer demands a single unit. It is \emph{truthful}, \emph{individually rational}, \emph{budget-balanced}, and \emph{prior-free}—that is, it does not rely on assumptions about the agents’ valuation distributions. Other mechanisms, however, often assume specific forms of valuations: \emph{decreasing marginal returns} (DMR) \cite{Blumrosen2014ReallocationM, SegalHalevi2018MUDAAT}, \emph{additive valuations} \cite{doi:10.1287/moor.2021.1124, 45749, DBLP:journals/corr/FeldmanG16}, or \emph{gross substitutes} \cite{ijcai2018-68}, with some models reducing valuations to a single parameter \cite{Gonen2017DYCOMAD, 10.1145/1250910.1250914}. For example, Blumrosen and Dobzinski \cite{Blumrosen2014ReallocationM} proposed a mechanism for multi-parametric agents with DMR valuations, though it is not asymptotically efficient and achieves only a $\tfrac{1}{48}$ competitive ratio. Segal-Halevi et al. \cite{SegalHalevi2018MUDAAT} extended the model to allow multiple-unit trading per agent under DMR valuations, achieving truthfulness and a competitive ratio of at least $1 - O\left(M \sqrt{\tfrac{\ln(k)}{k}}\right)$, where
$k$ is again the total number of units traded in the optimal situation and $M$ is the maximum number of units per trader. Similarly, in \cite{ijcai2018-68}, a truthful mechanism was designed for multiple sellers and buyers where sellers own multiple distinct goods, and buyers have unlimited budgets, under the gross substitute assumption. Further, Segal-Halevi et al. \cite{ijcai2018-68} introduce Multi-Item Double Auction (MIDA), a mechanism for two-sided markets with multiple kinds of goods where traders have gross-substitute valuations. Extending the random market-halving technique, MIDA achieves \emph{truthfulness}, \emph{strong budget balance}, and \emph{prior-freeness}, while guaranteeing asymptotically near-optimal gain-from-trade as market sizes grow.\\
\indent In our paper, we adopt the ideas presented in \cite{ijcai2018-68} to allocate tasks among task executors and determine their payments, once the tasks are distributed in respective clusters and the set of quality task executors has been identified.

\subsection{Spatial crowdsourcing}
 To understand the recent developments in Spatial crowdsourcing (SC), readers may refer to \cite{tong2017survey, 10.1145/3291933, 10.1007/s00778-019-00568-7}. The term \emph{spatial crowdsourcing} (SC) $-$ workers who physically travel to task destinations can be assigned location-dependent tasks. \emph{Mobility}, \emph{travel expenses}, \emph{privacy}, and \emph{fairness} are some of the specific challenges that SC poses. Designing the incentive mechanism is more difficult.\\ %Comprehensive survey of SC provides in the work \cite{tong2017survey, 10.1145/3291933, 10.1007/s00778-019-00568-7} offers an extensive assessment of SC, covering the task assignment, quality control, incentive mechanisms, and privacy.\\
\indent Tong et al. \cite{tong2018dynamic} introduced the Global Dynamic Pricing problem and proposed a matching-based regional pricing strategy (MAPS) that balances worker availability and task demand to maximize revenue. The approach uses task-worker matching to guide spatial price adjustments. Later ride-hailing studies showed that similar location-aware pricing strategies outperform uniform pricing and simple bonuses. Xu et al. \cite{xu2022incentive} address incentives in spatial crowdsourcing with unknown worker quality and social influence among workers. It proposes TACT, which combines learning and game-theoretic pricing to jointly maximize the utilities of all participants. In \cite{wang2024bundle}, bundling tasks with uncertain utilities in spatial crowdsourcing was addressed via a combinatorial incentive mechanism. Similarly, recent studies on auction-based pricing models aim to ensure strategy-proofness and budget feasibility in multi-task allocation scenarios \cite{liu2025group}. 
More recent work has investigated group strategy-proof mechanisms and clock-auction-based pricing, which improve robustness against collusion and strategic manipulation in spatial task allocation \cite{liu2025group}. These mechanisms highlight the importance of combining economic properties such as truthfulness, individual rationality, and budget balance.\\
\begin{sidewaystable}[!htbp]
\centering
\caption{Comparative study of TRUST-SC with existing works with spatial crowdsourcing}
\label{tab:comp1}

\renewcommand{\arraystretch}{1.65}
\setlength{\tabcolsep}{8pt}

\begin{tabularx}{\textheight}{c c Y C C C C C C}
\toprule
\textbf{SI. No.} & \textbf{Paper} & \textbf{Set-up} & \textbf{T} & \textbf{SW} & \textbf{IR} & \textbf{QoTE} & \textbf{RT} & \textbf{CL} \\
\midrule

1  &  \cite{tong2018dynamic} & Multiple task requesters  and multiple spatial task executors with dynamic pricing and matching                                         & \xmark & \xmark & \xmark & \xmark & \cmark & \xmark \\
2  &  \cite{wang2024bundle} & Multiple task requesters (tasks) and multiple workers with task bundles                                        & \cmark & \cmark & \cmark & \xmark & \cmark & \xmark \\
3  &  \cite{liu2017budgetaware} & single task requester with multiple locations and multiple workers                                        & \xmark & \xmark & \xmark & \cmark & \cmark & \xmark \\

4  &  \cite{feng2025hmta} & Multiple task requesters with multiple workers  & \xmark & \xmark & \xmark & \cmark & \cmark & \xmark \\

5 &  \cite{liu2022multi} & Timestamp of task, platform, and set of multiple workers with skills     & \xmark & \xmark & \xmark & \xmark & \cmark & \xmark \\

6 &  \cite{doi:10.1177/09266801251388373} & multiple task providers and multiple task executors & \xmark & \xmark & \xmark & \xmark & \cmark & \xmark \\
7  & \cite{feng2025spatial}       & Multiple heterogeneous tasks and multiple workers                                              & \xmark & \cmark & \xmark & \cmark & \cmark & \xmark\\
8  & \cite{puerta2024spatial}     & Multiple spatial tasks and multiple volunteers                                                & \xmark & \xmark & \xmark & \cmark & \xmark & \xmark\\
9  & \cite{wang2024sqcs}          & Multiple task requesters and multiple workers                                                 & \xmark & \xmark & \xmark & \cmark & \cmark & \xmark\\
10 & \cite{xie2023satisfaction}   & Multiple task requesters and multiple task executors                                          & \xmark & \xmark & \cmark & \cmark & \cmark & \xmark\\
11 & \cite{xu2022incentive}       & Multiple task requesters and multiple workers                                                 & \xmark & \cmark & \xmark & \cmark & \cmark & \xmark\\
12 & \cite{liu2022multi}          & Multiple task requesters and multiple mobile workers                                          & \xmark & \cmark & \xmark & \xmark & \cmark & \xmark\\
13 & \cite{ye2021task}            & Multiple task requesters with spatial tasks and multiple task executors                       & \xmark & \xmark & \xmark & \xmark & \cmark & \cmark\\
14 & \cite{ma2024clustering}      & Multiple task requesters and multiple task executors                                          & \xmark & \xmark & \xmark & \xmark & \cmark & \cmark\\
15 & \textbf{TRUST-SC}              & Multiple task requesters and multiple workers                                                 & \cmark & \cmark & \cmark & \cmark & \cmark & \cmark\\

\bottomrule
\end{tabularx}
\vspace{1mm}
\footnotesize
\textbf{T}: Truthfulness, \textbf{SW}: Social Welfare, \textbf{IR}: Individual Rationality, \textbf{QoTE}: Quality of Task Executors, \textbf{RT}: Running Time, \textbf{CL}: Clustering.
\end{sidewaystable}
\begin{sidewaystable}[!htbp]
\centering
\caption{Comparison of Existing Spatial Crowdsourcing Mechanisms with TRUST-SC}
\label{tab:comparison}
\renewcommand{\arraystretch}{1.65}
\setlength{\tabcolsep}{5pt}
\begin{tabular}{lcccccc}
\toprule
%\hline
\textbf{Work} & \textbf{Incentive} & \textbf{Quality-aware} & \textbf{Truthful} & \textbf{Multi-unit} & \textbf{Learning-based} & \textbf{Application} \\
\midrule

Xu et al. (2023) \cite{xu2023} & Game-theoretic & Partial & Yes & No & No & General SC \\

Wang et al. (SQCS, 2024) \cite{sqcs2024} & Game-theoretic & Yes & No & No & No & Environmental monitoring \\

Feng et al. (2025) \cite{feng2025} & Optimization-based & Partial & No & Yes & No & Heterogeneous tasks \\

Liu et al. (2025) \cite{liu2025group} & Auction-based & No & Yes & Yes & No & Multi-type SC \\

Li et al. (2023) \cite{li2023heterogeneous} & Matching-based & Partial & No & No & No & Large-scale SC \\

RL-based Allocation (2025) \cite{rl2025} & Learning-based & Yes & No & No & Yes & Dynamic SC \\

Fairness-based (2023) \cite{zhou2023} & Game-theoretic & No & Partial & Yes & No & Fair allocation \\

\hline
\textbf{TRUST-SC (Proposed)} & Auction-based & Yes & Yes & Yes & No & Environmental SC \\
\hline
\bottomrule
\end{tabular}
\end{sidewaystable}
\indent In Liu et al. \cite{liu2017budgetaware} to balance task collection across underserved rural areas, budget-aware dynamic incentive mechanisms (DFBA and DABA) were created. These mechanisms modify remuneration to workers in remote areas. In order to maximize user satisfaction measures beyond cost, Xie et al. \cite{xie2023satisfaction} investigate satisfaction-aware task assignment (SATA). Evolutionary game theory is used by Wang et al. \cite{wang2024sqcs} to predict long-term work engagement and fatigue dynamics in SQCS, a sustainable quality control system. Feng \cite{feng2025hmta} investigates service-quality-based incentive schemes, demonstrating increased employee engagement when quality is explicitly rewarded. Freshness vs. cost trade-offs for vehicle-based mapping assignments are optimized by new models of AoI-based incentives for HD map crowdsourcing \cite{ye2024draim}. \cite{huang2024crowdsourcing} proposes a learning-based incentive mechanism for spatial crowdsourcing that handles unknown worker quality and social influence. This approach improves overall utility through smart worker selection and payment design. This \cite{wu2024incentive} addresses task imbalance in multi-type spatial crowdsourcing, where workers prefer low-cost tasks over high-cost ones. This paper counters the issue using a gamified, team-based incentive mechanism with adaptive rewards that balance participation across task types. \cite{wang2021incentive} explores cooperative revenue allocation for spatial crowdsourcing platforms, using cooperative game theory. It proposes a Shapley value-based allocation method and a coalition-based approximation to ensure fairness and computational efficiency. To boost worker enthusiasm in spatial crowdsourcing, this \cite{peng2022research} proposes quality-aware incentive strategies to enhance worker participation based on service quality and idle time, and an optimized task allocation method. It achieves higher incentive effectiveness and fairness in experiments by applying an improved glowworm swarm algorithm. Quality control has become a critical issue in spatial crowdsourcing systems. Wang \textit{et al.} (2024) propose the SQCS framework, which models the interaction among workers, requesters, and platforms using evolutionary game theory to ensure sustainable quality control \cite{sqcs2024}. The framework derives conditions for stable system behavior and mitigates issues such as free-riding and false reporting. Other works focus on heterogeneous task allocation and reliability-aware assignment. For example, Feng \textit{et al.} (2025) study heterogeneous multi-task allocation and highlight the importance of considering diverse worker capabilities and task requirements \cite{feng2025}. Additionally, graph-based allocation approaches have been proposed to handle spatial heterogeneity and large-scale data efficiently \cite{li2023heterogeneous}.
 \\
\indent Xiao et al. \cite{huang2024crowdsourced} highlights how citizen-generated geospatial data is transforming urban science by enabling real-time, large-scale insights into cities. It reviews the applications across mobility, infrastructure, perception, and tourism, while discussing key challenges and future directions. \cite{liu2022multi} explores how to allocate workers to complex spatial crowdsourcing tasks that consist of multiple dependent stages. It introduces a new optimization model and develops greedy and game-theoretic methods to maximize overall profit while respecting skills, deadlines, and dependencies. Debnath et al. \cite{doi:10.1177/09266801251388373} propose a quality-aware task allocation mechanism for spatial crowdsourcing that leverages AI-driven decision strategies to efficiently match location-based tasks with suitable workers. The framework incorporates a procrastination prevention mechanism to ensure timely task execution while maintaining task reliability. Experimental results demonstrate improvements in task allocation efficiency and overall system performance. The \cite{ye2021task} tackles a spatial crowdsourcing framework with multiple service centers, where workers are restricted to specific regions. It introduces an adaptive region-partitioning method with a reinforcement learning strategy to balance workload while maximizing task completion. \cite{ma2024clustering} focuses on how to assign time and location-sensitive tasks to mobile workers in a crowdsourcing system. It proposes an efficient optimization-based mechanism that increases task completion while reducing system cost and worker overload.

\section{Preliminaries}\label{CABsec:wum}
In this section, first of all, the model and notation are discussed. After that, the game-theoretic properties are presented. 
\subsection{Model and Notation}
In this section, we formulate the problem using the framework of a \emph{double auction}. In spatial crowdsourcing set-up, there are $n$ task requesters given as  $\boldsymbol{r} = \{\boldsymbol{r}_1, \boldsymbol{r}_2, \ldots, \boldsymbol{r}_n\}$ and $m$ task executors given as $\boldsymbol{e} = \{\boldsymbol{e}_1, \boldsymbol{e}_2, \ldots, \boldsymbol{e}_m\}$. Here, $n \ll m$. Each of the task requesters $\boldsymbol{r}_i$ is endowed with multiple distinct tasks and is given as $\boldsymbol{t}^i = \{\boldsymbol{t}_1^i, \boldsymbol{t}_2^i, \ldots, \boldsymbol{t}_{n_i}^i\}$ and an infinite budget. It means that any task requester $\boldsymbol{r}_i$ has $|\boldsymbol{n}_i|$ distinct tasks assigned to him/her (henceforth him). 
Each task $\boldsymbol{t}_i^j$ is associated with the location $\boldsymbol{\ell}_i^j$. The location vector of the tasks of $\boldsymbol{r}_k$ task requester is given as $\boldsymbol{\ell}^k = \{\boldsymbol{\ell}_1^k, \boldsymbol{\ell}_2^k, \ldots, \boldsymbol{\ell}_{n_k}^k\}$. The location vector of the tasks of all the task requesters is given as $\boldsymbol{\ell} = \{\boldsymbol{\ell}^1, \boldsymbol{\ell}^2, \ldots, \boldsymbol{\ell}^n\}$. Each of the task requester $\boldsymbol{r}_i$ reports a \emph{tuple} $<\boldsymbol{t}^i, \boldsymbol{\ell}^i>$. For all the task requesters, it is given as $\boldsymbol{\eta} = \{<\boldsymbol{t}^1, \boldsymbol{\ell}^1>, <\boldsymbol{t}^2, \boldsymbol{\ell}^2>, \ldots, <\boldsymbol{t}^i, \boldsymbol{\ell}^i>, \ldots, <\boldsymbol{t}^n, \boldsymbol{\ell}^n>\}$.\\
\indent The set of all the tasks of all the task requesters is given as $\boldsymbol{t} = \{\boldsymbol{t}^1, \boldsymbol{t}^2, \ldots, \boldsymbol{t}^n\}$. In the proposed model, both task requesters and task executors are jointly termed as \emph{agents}, and each may act strategically. There are $\chi$  distinct types of tasks, and each agent is capable of holding at most $\beta$ units of each type of task, for some constants $\chi$ and $\beta$. A bundle of tasks is a vector $Y \in \{0, \ldots, \beta\}^{\chi}$. The empty bundle is represented as $\phi = (0, 0, \ldots, 0)$. The agents are inherently \emph{heterogeneous}.  By \emph{heterogeneity}, we mean that a task requester may submit different types of tasks---for instance, \emph{assessing road conditions}, \emph{traffic congestion monitoring}, \emph{parking availability}, \emph{monitoring air pollution}, etc.---while task executors differ in the types of tasks they are capable of executing. Every task executor declares both its bundle of tasks for execution and the corresponding asking price, which represents the cost charged for executing tasks. For the $j^{th}$ task executor, the maximum number of tasks it can handle is $\Delta_j$, where $\Delta_j$ is much smaller than the total task count $m$. \\
\indent In the discussed setup, the problem is studied as a three-tier process. In the first tier, the task requesters, along with the tasks and the task executors, are first grouped into clusters based on their locations. This clustering step helps reduce the search space and improves the efficiency of task allocation. For clustering, the number of clusters to be formed is reported apriori, say $k$. The set of cluster centers is represented by $\boldsymbol{C}$. It holds randomly selected $k$ points.Let us say any $i^{th}$ cluster out of $k$ clusters is represented as $\xi_i$. So, the set of clusters is represented as $\xi =\{ \xi_1, \xi_2, \dots \xi_k\}$. In the second tier, from the pool of available task executors within each cluster, a subset of quality task executors is identified for task execution. To assess the task executors' quality in each cluster, the platform initially assigns a small task to the task executors. Upon receiving these tasks, the task executors execute them and submit the completed tasks back to the platform. Each completed task is then forwarded to peer task executors within the same cluster for grading purposes (or ranking). Based on the task executors' ranking, the platform determines the set of quality task executors. Once the quality task executors are determined, the mechanism addresses the following key challenges in the third tier, in each cluster:
\begin{itemize}
    \item Which subset of quality task executors should be selected for final task execution?
    \item What payment should be assigned to the selected quality task executors?
\end{itemize}
\indent In the third tier, the above-mentioned two challenges are resolved; for that, the platform distributes the tasks among the qualified task executors for execution. Once the tasks are executed, the executed tasks are returned to the platform and delivered to the respective task requesters, while the task executors receive payment for their services. The bid values of task requesters and the ask values of task executors are treated as \emph{private} information. The bids of task requesters and the asks of task executors are together referred to as \emph{valuations}. Agents are equipped with valuation functions $v$ that return the value of holding $\ell$ units, with zero value assigned to holding none. In this setting, all agents are assumed to have Gross Substitute (GS) valuations. Prices are represented by a vector  $\boldsymbol{p} = (\boldsymbol{p}_1, \boldsymbol{p}_2, \ldots, \boldsymbol{p}_\chi)$, and the price of a task bundle $Y$ is given by $\boldsymbol{p} \cdot Y$. Each agent $i$ has a value function $v_i$ over bundles of tasks, normalized such that $v_i(\phi) = 0$. Since the agents follow a quasi-linear utility model, the utility of a requester $\boldsymbol{r}_i$ is given by $u_i^ {\boldsymbol{r}}$ as:

\begin{equation}
\label{equ:1}
u_i^ {\boldsymbol{r}} (Y, \boldsymbol{p}) = 
    \begin{cases}
        v_i(Y) - \boldsymbol{p} \cdot Y,~ if~\boldsymbol{r}_i~wins\\
        0,~ Otherwise
    \end{cases}
\end{equation}
The utility of task executor $\boldsymbol{e}_j$ is given by the quantity $u_i^{\boldsymbol{e}}$ and is given as:
\begin{equation}
\label{equ:2}
u_j^{\boldsymbol{e}} = 
    \begin{cases}
        \boldsymbol{p} \cdot Y - (v_j(\boldsymbol{e}_j - Y) - v_j(\boldsymbol{e}_j) ),~ if~\boldsymbol{e}_j~wins\\
        0,~ Otherwise
    \end{cases}
\end{equation}
\begin{table}[H]
\centering
\caption{List of Symbols and Their Descriptions}
\label{tab:symbols}
\begin{tabular}{|c|p{10cm}|}
\hline
\textbf{Symbol} & \textbf{Description} \\
\hline
$\boldsymbol{r} = \{\boldsymbol{r}_1, \boldsymbol{r}_2, \ldots, \boldsymbol{r}_n\}$ & Set of $n$ task requesters \\
$\boldsymbol{e} = \{\boldsymbol{e}_1, \boldsymbol{e}_2, \ldots, \boldsymbol{e}_m\}$ & Set of $m$ task executors \\
$\boldsymbol{t}_i = \{\boldsymbol{t}_{i1}, \boldsymbol{t}_{i2}, \ldots, \boldsymbol{t}_{i n_i}\}$ & Tasks associated with task requester $\boldsymbol{r}_i$ \\

$\boldsymbol{\ell}_{ji}$ & Location of task $\boldsymbol{t}_{ji}$ \\
$\boldsymbol{\ell}^k = \{\boldsymbol{\ell}_1^k, \boldsymbol{\ell}_2^k, \ldots, \boldsymbol{\ell}_{n_k}^k\}$ & Location vector for tasks of $\boldsymbol{r}_k$ \\

$\boldsymbol{\ell} = \{\boldsymbol{\ell}^1, \boldsymbol{\ell}^2, \ldots, \boldsymbol{\ell}^n\}$ & Location vectors of all the tasks for all task requesters \\
$\eta = \{ \langle \boldsymbol{t}_1, \boldsymbol{\ell}_1 \rangle, \ldots, \langle \boldsymbol{t}_n, \boldsymbol{\ell}_n \rangle \}$ & Tuple of tasks and locations from all requesters \\
$\chi$ & Number of different types of tasks \\
$\beta$ & Maximum number of units of each task type an agent can hold \\
$Y \in \{0, \ldots, \beta\}^{\chi}$ & Bundle of tasks (multi-type) \\
$\phi$ & Empty bundle $(0, 0, \ldots, 0)$ \\
$\Delta_j$ & Maximum number of similar tasks that executor $\boldsymbol{e}_j$ can execute \\
$v_i$ & Valuation function of agent $i$ \\
$\boldsymbol{p} = (\boldsymbol{p}_1, \boldsymbol{p}_2, \ldots, \boldsymbol{p}_\chi)$ & Price vector for task types \\
$\boldsymbol{p} \cdot Y$ & Total price for bundle $Y$ \\
$u^{\boldsymbol{r}}_i(Y, \boldsymbol{p})$ & Utility of requester $\boldsymbol{r}_i$ for bundle $Y$ at price $\boldsymbol{p}$ \\
$u^{\boldsymbol{e}}_j$ & Utility of executor $\boldsymbol{e}_j$ \\
$\pi(\boldsymbol{\ell}, z)$ & Distance between task location $\boldsymbol{\ell}$ and cluster center $z$ \\
$\xi$ & Set of clusters formed \\
$\boldsymbol{C}$ & Set of cluster centers \\
$Z$ & Set of possible points (locations) for clustering \\
$\hat{z}, \hat{z}_i$ & Randomly selected or recomputed cluster center \\
$d^{\boldsymbol{r}(L)}$ & Demand of task requesters in left spatial crowdsourcing zone. \\
$s^{\boldsymbol{e}(L)}$ & Supply of task executors in left spatial crowdsourcing zone. \\
$\boldsymbol{p}^L$ & Equilibrium price of left spatial crowdsourcing zone.\\
$\boldsymbol{e}^R$ & Set of task executors in the right spatial crowdsourcing zone.\\
$\boldsymbol{r}^R$ & Set of task requesters in the right spatial crowdsourcing zone.\\
$d_i^ {\boldsymbol{r}} (\boldsymbol{p})$ & The demand of task requester $\boldsymbol{r}_i$ at price $\boldsymbol{p}$.\\
$s_i^ {\boldsymbol{r}} (\boldsymbol{p})$ & The supply of task executor $\boldsymbol{e}_i$ at price $\boldsymbol{p}$.\\
$\Tilde{\boldsymbol{r}}^R$ & Store all the selected task requesters in a queue in the right spatial crowdsourcing zone. \\
$\Tilde{\boldsymbol{e}}^R$ &  Store all the selected task executors in a queue in the right spatial crowdsourcing zone.\\
$\boldsymbol{r}^W$ &  The selected requesters to the Winning set. \\
$\boldsymbol{e}^W$ & Selected executors to the Winning set. \\
$d^{\boldsymbol{r}(L)}(\boldsymbol{p})$ &  Demand of requesters in the left spatial crowdsourcing Zone at price $\boldsymbol{p}$.\\
$s^{\boldsymbol{e}(L)}(\boldsymbol{p})$ &  Supply of executors in the left spatial crowdsourcing Zone at price $\boldsymbol{p}$.\\
\hline
\end{tabular}
\end{table}

The demand of task requester $\boldsymbol{r}_i$ at an equilibrium price $p$ is the set of available tasks, say $Y$ that maximizes the utility value $i.e.$  $u^{\boldsymbol{r}}_i(Y, \boldsymbol{p})$ and is given as: 
\begin{equation}
\label{equ:1a}
d_i^ {\boldsymbol{r}} (\boldsymbol{p}) = \argmax_{Y \in [0, \chi.\beta ]} u_i^{\boldsymbol{r}}(Y, \boldsymbol{p})    
\end{equation}
The total demand of the task requesters at an equilibrium price $\boldsymbol{p}$ is given as:
\begin{equation}
\label{equ:1b}
d^ {\boldsymbol{r}} (\boldsymbol{p}) = \displaystyle\sum_{i = 1}^{n}  d_i^ {\boldsymbol{r}} (\boldsymbol{p})
\end{equation}
Similarly, the supply of any task executor $\boldsymbol{e}_j$ at an equilibrium price $p$ is the set of tasks say $Y$ that maximizes the utility value $i.e.$  $u^{\boldsymbol{e}}_j(Y, \boldsymbol{p})$ and is given as:
\begin{equation}
\label{equ:1c}
s_i^ {\boldsymbol{e}} (\boldsymbol{p}) = \argmax_{Y \in [0, \chi.\beta ]} u_i^{\boldsymbol{e}}(Y, \boldsymbol{p})    
\end{equation}
The total demand of the task executors at an equilibrium price $p$ is given as: 
\begin{equation}
\label{equ:1b}
s^ {\boldsymbol{e}} (\boldsymbol{p}) = \displaystyle\sum_{i = 1}^{m}  s_i^ {\boldsymbol{e}} (\boldsymbol{p})
\end{equation}

 \subsection{Game theoretic properties}
The game theoretic properties that will be utilized throughout the paper are presented here. 

\begin{tcolorbox}[title=\textbf{Truthful or Incentive Compatible (IC) \cite{NNisa_Pre_2007}}]
\begin{definition}
\label{def:1}
A mechanism is said to be incentive compatible (IC) when the task executor obtains the maximum possible utility only by revealing its true private valuation, with no benefit from lying.
\end{definition}
\end{tcolorbox}
\begin{tcolorbox}[title=\textbf{Individual Rationality (IR) \cite{NNisa_Pre_2007}}]
\begin{definition}
\label{def:2}
A mechanism is said to be individually rational if every participating task executor in the spatial crowdsourcing market obtains a non-negative utility. $i.e.$  $u^{\boldsymbol{r}}_i(Y, \boldsymbol{p}) \geq 0 $ or $u^{\boldsymbol{e}}_j \geq 0 $.
\end{definition}
\end{tcolorbox}

\begin{tcolorbox}[title=\textbf{Computational Efficiency}]
\begin{definition}
\label{def:3}
A mechanism is said to be computationally efficient when every step of the algorithm runs in polynomial time.
\end{definition}
\end{tcolorbox}   

Let us try to understand GS valuation using an example. 
\begin{tcolorbox}[title=Gross Substitute Valuation: Definition and Illustrative Example, before upper=\setlength{\parindent}{2em}]
\begin{definition}[\textbf{Gross Substitute (GS) \cite{T.roughgarden_20145, NNisa_Pre_2007}}]
 \label{def:gs1}
The task executor's valuation satisfies the Gross Substitutes condition if increasing the prices of some tasks does not reduce its demand for tasks whose prices remain unchanged.
  \end{definition}
  \begin{example}
Running the example from the task requester side. Consider a SC market with two items $a$ and $b$. A task requester $\boldsymbol{r}_i$ has valuation $v_i(a) = 10\$$ for item $a$, $v_i(b)$ = $10\$$ for item $b$, and $v_i({a,b}) = 20\$$ for both the items $a$ and $b$. Let us assume that the equilibrium price of items $a$ and $b$ is $\boldsymbol{p}_a = 5\$$ and $\boldsymbol{p}_b = 5\$$ respectively. The task requester $\boldsymbol{r}_i$ chooses the bundle that maximizes the utility function given in equation \ref{equ:1}. Following equation \ref{equ:1}, if task requester $\boldsymbol{r}_i$ ask for bundle $\{a\}$ then the utility of $\boldsymbol{r}_i$ will be $u_i^{\boldsymbol{r}}$ = $10\$ - 5\$$ = $5\$$. But, if task requester $\boldsymbol{r}_i$ ask for bundle $\{b\}$ then the utility of $\boldsymbol{r}_i$ will be $u_i^{\boldsymbol{r}} (\{b\},~5\$) = 10\$ - 5\$ = 5\$$. However, if the task requester $\boldsymbol{r}_i$ ask for bundle $\{a,~b\}$ then the utility of $\boldsymbol{r}_i$ will be $u_i^{\boldsymbol{r}} (\{a,~b\},~5\$) = 20\$ - (5\$ + 5\$) = 10\$$. Considering the above scenario, task requester $\boldsymbol{r}_i$ will ask for the bundle $\{a,~b\}$ because it gives maximum utility of $10\$$.\\
\begin{figure}[H]
\centering
\includegraphics[scale=0.47]{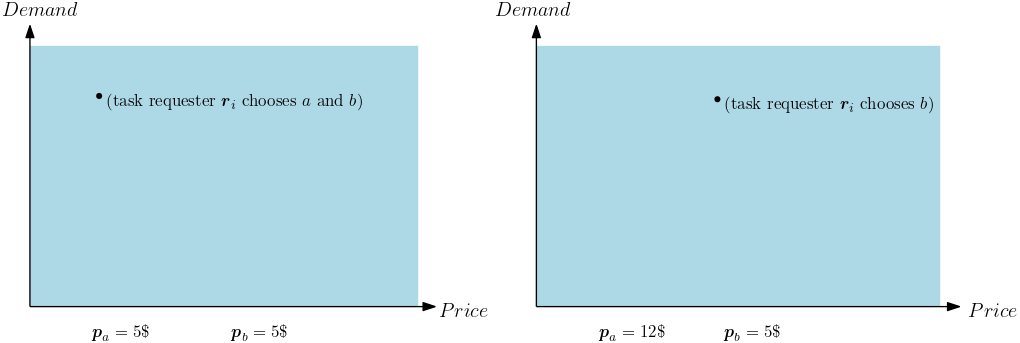}
\caption{Figure on the left shows that when the price of items was $5\$$ each, then task requester $\boldsymbol{r}_i$ asks for both the items, i.e., $a$ and $b$. The figure on the right shows that when the price of the item $a$ got increased to $12\$$, then task requester $\boldsymbol{r}_i$ demands for item $b$ (whose price has not increased and remains the same) only.}
\label{Fig:2}
\end{figure}
\indent Now, let us say the price of item $a$ is increased from $5\$$ to $12\$$ and the price of item $b$ remains same i.e. $\boldsymbol{p}_a = 12\$$ and $\boldsymbol{p}_b = 5\$$. In this case, if task requester $\boldsymbol{r}_i$ ask for bundle $\{a\}$ then the utility of $\boldsymbol{r}_i$ will be $u_i^{\boldsymbol{r}}$ = $10\$ - 12\$$ = $-2\$$. But, if task requester $\boldsymbol{r}_i$ ask for bundle $\{b\}$ then the utility of $\boldsymbol{r}_i$ will be $u_i^{\boldsymbol{r}} (\{b\},~5\$) = 10\$ - 5\$ = 5\$$. However, if the task requester $\boldsymbol{r}_i$ ask for bundle $\{a,~b\}$ then the utility of $\boldsymbol{r}_i$ will be $u_i^{\boldsymbol{r}} (\{a,~b\},~5\$) = 20\$ - (12\$ + 5\$) = 3\$$. Considering the above scenario, task requester $\boldsymbol{r}_i$ will ask for the bundle $\{b\}$ because it gives maximum utility of $5\$$.\\
\indent From the above two scenarios, it can be observed that with the increase in the price of item $a$, the task requester $\boldsymbol{r}_i$ stopped demanding item $a$ but still demands item $b$ (whose price remains the same and has not gone up). Similarly, the discussion could be carried out from the task executors' side. 
\end{example}
\end{tcolorbox}

\section{\textsc{TRUthful multi-task double Auction for quality-aware for Spatial Crowdsourcing Mechanism (TRUST-SC)}}
\label{section:algorithm}
 TRUST-SC consists of the following components:
\begin{itemize}
    \item \textcolor{blue}{\textbf{Cluster Formation (CF)}} $-$ It groups tasks and executors based on geographical proximity, which improves scalability and reduces the computational complexity of task allocation.  
     \item \textcolor{blue}{\textbf{Quality Task Executors Selection Mechanism (QTESM)}} $-$ It determines the quality task executors among the available ones.  
    
    \item \textcolor{blue}{\textbf{Allocation and Pricing Mechanism (APM)}} $-$ Hires the quality task executors and decide their payments.
\end{itemize}
\subsection{\textsc{Cluster Formation}}
\begin {algorithm}[H]
\caption{\textsc{Cluster Formation}} 
\label{algo:1}
\noindent
\begin{algorithmic}[1]
\REQUIRE Set of requesters $\boldsymbol{r}$, number of clusters $k$, and task locations $\boldsymbol{\ell}$
\ENSURE Set of clusters $\xi \leftarrow \phi$
\STATE $\boldsymbol{\mathcal{C}} \gets \emptyset$
\STATE \textbf{while} $|\boldsymbol{\mathcal{C}}| \neq k$ \textbf{do}
    \STATE \hspace{0.5cm} $\hat{z} \gets$ random$(\boldsymbol{\ell})$ \COMMENT{\textcolor{blue}{Pick a random point $z_\ell \in \boldsymbol{\ell}$}}
    \STATE \hspace{0.5cm} $\boldsymbol{\mathcal{C}} \gets \boldsymbol{\mathcal{C}} \cup \{\hat{z}\}$ \COMMENT{\textcolor{blue}{$\boldsymbol{\mathcal{C}}$~ holds each of the randomly selected points.}}
\STATE \textbf{end while}
\REPEAT
    \STATE $\xi \gets \emptyset$, $\xi_i \gets \emptyset$ \COMMENT{\textcolor{blue}{$\xi$ will hold the set of clusters that will be formed. $\xi_i$ hold the minimum distance of any task $\boldsymbol{t}_j^i$ from randomly selected point $z_j$. }}
    \FOR{each $\boldsymbol{r}_i \in \boldsymbol{r}$}
    \FOR{each task $\boldsymbol{t}_k^i \in \boldsymbol{t}^i$}
        \FOR{each $z_j \in \ell$}
            \STATE $\boldsymbol{\Delta}_i \gets \boldsymbol{\Delta}_i \cup \{ \pi(\boldsymbol{\ell}_k^i, z_j) \}$ \COMMENT{\textcolor{blue}{Distance between $\boldsymbol{\ell}_k^i$ and $z_j$ and stored in $\boldsymbol{\Delta}_i$.}}
        \ENDFOR
        \ENDFOR
        \STATE $\hat{j} \gets \arg\min_j \boldsymbol{\Delta}_i$ \COMMENT{\textcolor{blue}{Task with minimum distance from the randomly selected point is determined and held in $\hat{j}$.}}
        \STATE $\xi_{\hat{j}} \gets \xi_{\hat{j}} \cup \{\boldsymbol{t}_{\hat{j}}^i\}$ \COMMENT{\textcolor{blue}{The selected task in Line 14 is stored in $\xi_{\hat{j}}$.}}
    \ENDFOR
    \STATE $\boldsymbol{\mathcal{C}} \gets \emptyset$ \COMMENT{\textcolor{blue}{$\boldsymbol{\mathcal{C}}$ initialized to $\phi$.}}
    \FOR{$i = 1$ \textbf{to} $k$}
        \STATE $\xi \gets \xi \cup \xi_i$ \COMMENT{\textcolor{blue}{$\xi$ holds the set of all the formed clusters.}}
    \ENDFOR
    \FOR{each $\xi_i \in \xi$}
        \STATE $\hat{z}_i \gets \frac{1}{|\xi_i|} \sum_{z_\ell \in \xi_i} z_\ell$ \COMMENT{\textcolor{blue}{$z_i$ is the centroid of cluster $\xi_i$}}
        \STATE $\boldsymbol{\mathcal{C}} \gets \boldsymbol{\mathcal{C}} \cup \{ \hat{z}_i \}$
    \ENDFOR
\UNTIL{no change in clusters takes place}
\STATE \textbf{return} $\xi$ \COMMENT{\textcolor{blue}{Returns the set of formed clusters.}}
\end{algorithmic}
\end{algorithm}
The input to the clustering algorithm is the set of task requesters $\boldsymbol{r}$, the number of clusters $k$, and task locations $\boldsymbol{\ell}$. The output of the clustering algorithm is the set of clusters. Line 1 Initialize the current set of cluster centers to empty. Lines 2-5 begin seeding$-$keep choosing initial centers until exactly $k$ centers are selected. Line 3 selects the point at random from the set $\boldsymbol{\ell}$ and holds it in $\hat{z}$. Add the sampled point to the current set of points in line 4. Lines 6-25 will iterate until there is a change in the clusters that are identified. In line 7, $\xi$ and $\xi_i$ are set to $\phi$. Lines 8-16 iterate over each requester $\boldsymbol{r}_i$ and over each task $t_k^i$ associated with requester $\boldsymbol{r}_i$. Line 11 compute and collect the distance $\pi(\boldsymbol{\ell}_k^i, z_j)$ between the task location $\boldsymbol{\ell}_k^i$ and center $z_j$ into a temporary set (or array) $\boldsymbol{\Delta}_i$. Line 14 chooses the index of the nearest center for the (current) task according to the distances stored in $\boldsymbol{\Delta}_i$. Assign the task to the cluster $\hat{j}$ (append the task into cluster bucket $\xi_{\hat{j}}$). Line 17 clears the center set before recomputing updated centers (centroids). Lines 18-20 iterate over clusters to aggregate them into the global set $\xi$. In line 22, for every cluster $\xi_i \in \xi$, recompute its new center (centroid) and hold it in $\boldsymbol{\mathcal{C}}$ in line 23. Once there are no changes seen in the clusters, in line 26, the set of clusters formed is returned.

\subsubsection{\textsc{Illustration of Cluster Formation with an Example}}
Let us consider an example to understand Algorithm \ref{algo:1} in a detailed manner. Suppose there are five task requesters $\boldsymbol{r}=\{\boldsymbol{r}_1, \boldsymbol{r}_2, \boldsymbol{r}_3, \boldsymbol{r}_4, \boldsymbol{r}_5\}$, where each task requester has specific tasks located at certain coordinates. Say, requester $\boldsymbol{r}_1$ has two tasks $\boldsymbol{t}_{1}^1, \boldsymbol{t}_{2}^1$ and is located at $(4, 1)$ and $(3, 3)$ respectively. Requester, $\boldsymbol{r}_2$ has two tasks, $\boldsymbol{t}_{1}^2$ and $\boldsymbol{t}_{2}^2$, located at $(3, 2)$ and $(2, 2)$ respectively. Requester $\boldsymbol{r}_3$ has three tasks, $\boldsymbol{t}_{1}^3$, $\boldsymbol{t}_{2}^3$, and $\boldsymbol{t}_{3}^3$ located at $(1, 5)$, $(2, 3)$, and $(1, 1)$ respectively. Requester $\boldsymbol{r}_4$ has two tasks, $\boldsymbol{t}_{1}^4$ and $\boldsymbol{t}_{2}^4$, located at $(3, 5)$ and $(4, 2)$ respectively. Requester $\boldsymbol{r}_5$ has one task, $\boldsymbol{t}_{1}^5$, located at $(5, 3)$. These location points are depicted in the left-most figure of Figure \ref{Fig:3}. Now, we aim to group all these tasks into $k=3$ clusters based on their location points. Firstly, the set of cluster centroids $\boldsymbol{\mathcal{C}}$ is empty. By following the Algorithm \ref{algo:1}, we randomly select three initial cluster centroids from the available task location points, i.e., $(4, 1)$, $(1, 5)$, and $(5, 3)$ respectively. So, now the cluster centroids becomes $\boldsymbol{\mathcal{C}}= \{$(4, 1)$, $(1, 5)$, $(5, 3)$\}$. In the first iteration, we initialize the clusters $\xi_1$, $\xi_2$, and $\xi_3$ to be empty. Then, for each location point, we compute the Euclidean distance $\left(i.e., d = \sqrt{(x_2 - x_1)^2 + (y_2 - y_1)^2}\right)$  to each of the three cluster centroids and assign the point to the cluster whose centroid is closest to it. The location points $(4, 1)$, $(3, 2)$, $(2, 2)$, $(1, 1)$, and $(4, 2)$ are found to be closest to the same centroid compared to the other centroids. Therefore, they are grouped in cluster 1. Thus, cluster $\xi_1$ becomes: $\xi_1 = \{$(4, 1)$, $(3, 2)$, $(2, 2)$, $(1, 1)$, $(4, 2)$\}$.  Next, the algorithm computes the new centroid of cluster 1 by averaging the coordinates: $x = 4+3+2+1+4 / 5 = 2.8$, $y = 1+2+2+1+2 / 5 = 1.6$. Hence, the new cluster 1 centroid becomes $(2.8, 1.6)$. Similarly, the location points $(1, 5), (2, 3)$, and $(3, 5)$ are closer to the cluster 2 centroid compared to the remaining centroids. Therefore, these points are grouped into cluster 2. Thus, cluster $\xi_2$ becomes: $\xi_2 = \{(1, 5), (2, 3), (3, 5)\}$.
\begin{figure}[H]
\centering
\includegraphics[scale=0.50]{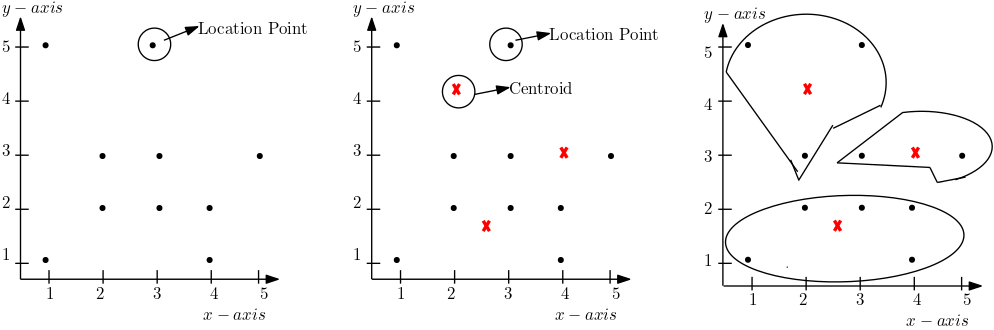}
\caption{Figure on the left shows the location points of tasks. The figure in the middle depicts the centroid points selected for the given location points. Given the centroid points and the location points of the tasks, the 3 clusters are formed.}
\label{Fig:3}
\end{figure}
 Next, the algorithm computes the new centroid of cluster 2 by averaging the coordinates: $x = 1+2+3 / 3 = 2$, $y = 5+3+5 / 3 = 4.3$. Hence, the new cluster 2 centroid becomes $(2, 4.3)$.   The remaining location points $(3, 3), (5, 3)$ are closer to the cluster 3 centroid and therefore formed cluster 3. Thus, cluster $\xi_3$ becomes: $\xi_3 = \{($(3, 3), (5, 3)$\}$. The centroid is calculated as:  $x = 3 + 5 / 2 = 4$, $y = 3 + 3 / 2 = 3$. Thus, the cluster $\xi_3$ new centroid becomes $(4, 3)$. These new centroids now serve as updated cluster centers. \\
\indent In the second iteration, the algorithm again computes the Euclidean distance of each task location point to the new centroids obtained in previous iterations. The location point $(4, 1)$, $(3, 2)$, $(2, 2)$, $(1, 1)$, and $(4, 2)$ are closer to the centroid $(2.8, 1.6)$, therefore it remains in same cluster $\xi_1$. Similarly, the location points $(1, 5), (2, 3)$, and $(3, 5)$ are closer to the centroid $(2, 4.3)$ and stays in the same cluster $\xi_2$. The remaining location points $(3, 3)$, and $(5, 3)$ are closer to the centroid $(4, 3)$ and remains in the same cluster $\xi_3$. Since no change in cluster centroids occurs in this iteration, the algorithm converges. The centroids determined after the second iteration are shown in the second figure of Figure \ref{Fig:3}. \\
\indent Finally, the Algorithm \ref{algo:1} returns the output as the set of clusters formed, i.e., $\xi_1$ containing locations points $(4, 1)$, $(3, 2)$, $(2, 2)$, $(1, 1)$, and $(4, 2)$ at centroid $(2.8, 1.6)$; $\xi_2$ containing locations points $(1, 5)$, $(2, 3)$, and $(3, 5)$ at centroid $(2, 4.3)$; and $\xi_3$ containing locations points $(3, 3)$, and $(5, 3)$ at centroid $(4, 3)$. The final clusters formed are shown in the right-most figure of Figure \ref{Fig:3}.

\subsection{Selection of Quality Task Executors}
Once the clusters are formed, the first objective is to select the quality task executors among the available task executors. For this purpose, the idea of the \emph{majority voting rule} \cite{levin2010review5, roughgarden2016cs269i} is utilized. In Algorithm \ref{algo:2}, line 1 initializes  ${\boldsymbol{e}}''$ and ${\boldsymbol{e}}'$ by $ \boldsymbol{e}$, $W$ to $\phi$, $i$ to 1, and $T$ to $\phi$. In lines 2 - 28, the task executor's quality is determined. In lines 2 - 28, the while loop will iterate until the condition in line 2 is satisfied. In line 3, the \textit{f} task executors are selected from ${\boldsymbol{e}}'$ that are to be voted and are held in $\eta$. In line 4, a random selection of task executors is made from the set ${\boldsymbol{e}}'' \setminus \eta$ and stored in $\alpha$. The completed tasks by the task executors in $\eta$ are ranked by the task executors in $\alpha$. For the ranking purposes, the idea of the majority voting rule is utilized. From lines 5 - 7, each task executor $\boldsymbol{e}_i$ in $\alpha$ is assigned a random preference ordering over the task executors in $\eta$ and is stored in  $P(\boldsymbol{e}_i)$. Lines 8 - 10 initialize vote counts for all the executors in the selected group to 0. In lines 11-14, the first preference of each of the task executors $\boldsymbol{e}_i \in \alpha$ is stored in $X$. Based on the number of first preferences received, the vote count is increased using line 13. In line 15, the maximum vote any task executor $\boldsymbol{e}_j$ received is stored in $M$.\\
\indent Further, the set of task executors receiving the first preference equal to the maximum number of first preferences received by any task executor $\boldsymbol{e}_j \in \eta$ is stored in $T$.  In lines 17 - 24, if there is a tie between the selected executor's votes, additionally, we have to check their next preferences until one executor gains a majority of votes among the tied ones. Line 25 selects the executor with the most votes as the winner for that round (or iteration).  Line 26 adds the winning task executor to the winning set $W$. Line 27 removes the evaluated executors from the set ${\boldsymbol{e}}'$. Lines 2 - 28 repeat until all executors get ranked. In line 29, the set of quality task executors is returned.
\subsubsection{\textsc{Illustrative of Selection of Quality Task Executors with an Example}}
\label{algo:Example}
Let us consider an example to understand Algorithm \ref{algo:2}. Let us suppose that there are 12 task executors $\boldsymbol{e} =$ $\{\boldsymbol{e}_1, \boldsymbol{e}_2, \dots, \boldsymbol{e}_{12}\}$ in one of the clusters. Now, out of these 12 task executors, the goal is to select the subset of quality task executors. For this purpose, let us apply Algorithm \ref{algo:2}. We have considered randomly selecting the few task executors in the first iteration, i.e., $\eta =$ $\{\boldsymbol{e}_3, \boldsymbol{e}_7, \boldsymbol{e}_4, \boldsymbol{e}_2\}$ that are to be voted on and the remaining task executors $\alpha =$ $\{\boldsymbol{e}_1, \boldsymbol{e}_5, \boldsymbol{e}_6, \boldsymbol{e}_8, \boldsymbol{e}_9, \boldsymbol{e}_{10}, \boldsymbol{e}_{11}, \boldsymbol{e}_{12}\}$ are provide preference over the set of $\eta$.  Now, each task executor in $\alpha$ gives preference over the selected task executors in $\eta$. The task executor $\boldsymbol{e}_1$ gives preference to \{$e_4, e_3, e_7, e_2$\}. Further, $e_5$ gives preference to \{$\boldsymbol{e}_3, \boldsymbol{e}_2, \boldsymbol{e}_4, \boldsymbol{e}_7$\}. $\boldsymbol{e}_6$ gives preference to \{$\boldsymbol{e}_4, \boldsymbol{e}_3, \boldsymbol{e}_7, \boldsymbol{e}_2$\}. In a similar manner $\boldsymbol{e}_8, \boldsymbol{e}_9, \boldsymbol{e}_{10}, \boldsymbol{e}_{11}$, and $\boldsymbol{e}_{12}$ are giving preferences on \{$\boldsymbol{e}_4, \boldsymbol{e}_3, \boldsymbol{e}_2, \boldsymbol{e}_7$\}; \{$\boldsymbol{e}_3, \boldsymbol{e}_7, \boldsymbol{e}_4, \boldsymbol{e}_2$\}; \{$\boldsymbol{e}_4, \boldsymbol{e}_3, \boldsymbol{e}_2, \boldsymbol{e}_7$\}; \{$\boldsymbol{e}_4, \boldsymbol{e}_2, \boldsymbol{e}_3, \boldsymbol{e}_7$\}; and \{$\boldsymbol{e}_7, \boldsymbol{e}_4, \boldsymbol{e}_2, \boldsymbol{e}_3$\} respectively. Here, $\boldsymbol{e}_3$ got 2 votes, $\boldsymbol{e}_7$ got 1 vote, $\boldsymbol{e}_4$ got 5 votes, and $\boldsymbol{e}_2$ got 0 votes.
\begin {algorithm}[H]
\caption{\textsc{Selection of Quality Task Executors}} 
\label{algo:2}
\noindent
\begin{algorithmic}[1]
\REQUIRE Set of Task Executors.
\ENSURE Set of quality Task Executors.
\STATE ${\boldsymbol{e}}'' = {\boldsymbol{e}}' =\boldsymbol{e}, W \gets \phi, i \gets 1, T \gets \phi$ 
\WHILE{${\boldsymbol{e}}' \neq \phi$ }
\STATE $\eta \gets$ \textsc{Rand Select} (${\boldsymbol{e}}', f)$ \COMMENT{\textcolor{blue}{Selects \textit{f} task executors from ${\boldsymbol{e}}'$ that are to be voted.}}
\STATE $\alpha \gets$  \textsc{Rand Select} (${\boldsymbol{e}}''\setminus \eta,~ g$) 
\COMMENT{\textcolor{blue}{Selects $g$ task executors from ${\boldsymbol{e}}'' \setminus \eta$ that will provide preferences over the completed tasks of task executors in $\eta$.}}
\FOR{each ${\boldsymbol{e}}_i \in \alpha$}
\STATE $P(\boldsymbol{e}_i) \gets$ \textsc{Rand preference}($\eta)$
\COMMENT{\textcolor{blue}{ Assign random preference order over $\eta$ to voter $e_i$ and stored in $P(\boldsymbol{e}_i)$.}}
\ENDFOR
\FOR{each ${\boldsymbol{e}}_j \in \eta$}
\STATE vote$(\boldsymbol{e}_j) \gets 0$
\COMMENT{\textcolor{blue}{Initialize vote count for each task executor $\boldsymbol{e}_j$ $\in$ $\alpha$.}}
\ENDFOR
\FOR{each ${\boldsymbol{e}}_i \in \alpha$}
\STATE $X \gets$ First preference $(P(\boldsymbol{e}_i))$ \COMMENT{\textcolor{blue}{Each voter gives one vote to its top preferred task executor}}
\STATE $vote(X) \gets vote(X)+1$ \COMMENT{\textcolor{blue}{Keeps track of number of votes received by each task executor $\boldsymbol{e}_j \in \alpha$.}} 
\ENDFOR
\STATE $M \gets$ max(vote($\boldsymbol{e}_j$)), $\forall{\boldsymbol{e}_j \in \eta}$ \COMMENT{\textcolor{blue}{$M$ holds the maximum value of votes that is received by any task executor $\boldsymbol{e}_j$.}}
%\STATE $T \gets$ {$\boldsymbol{e}_j \in \eta|$ vote$(\boldsymbol{e}_j) = M$} \COMMENT{\textcolor{blue}{Set of executors receiving max votes.}}
\STATE $T \gets T \cup \{\boldsymbol{e}_j| $ vote$(\boldsymbol{e}_j) = M, \boldsymbol{e}_j\in \eta \}$ \COMMENT{\textcolor{blue}{$T$ holds the task executors $\boldsymbol{e}_j \in \eta$ whose number of votes equal to the maximum vote by any task executor stored in $M$.}} 
\WHILE{$|T| > 1$ and remaining preferences exist} 
\FOR{ each ${\boldsymbol{e}}_i \in \alpha$}
\STATE  $X \gets$ Next preference $(P(\boldsymbol{e}_i), T)$ \COMMENT{\textcolor{blue}{Each voter votes for the next preferred executor among tied ones.}}
\STATE $vote(X) \gets vote(X)+1$ \COMMENT{\textcolor{blue}{Keeps track of number of votes received by each task executor $\boldsymbol{e}_i \in \alpha$.}}
\ENDFOR
\STATE $M \gets$ max(vote($\boldsymbol{e}_j$)), $\forall {\boldsymbol{e}_j \in T}$  \COMMENT{\textcolor{blue}{$M$ holds the maximum value of votes that is received by any task executor $\boldsymbol{e}_j \in T$.}}
\STATE $T \gets T \cup \{\boldsymbol{e}_j| $ vote$(\boldsymbol{e}_j) = M, \boldsymbol{e}_j\in \eta \}$ \COMMENT{\textcolor{blue}{$T$ holds the task executors $\boldsymbol{e}_j \in \eta$ whose number of votes equal to the maximum vote by any task executor stored in $M$.}}   
\ENDWHILE
\STATE $i \gets argmax(vote(\boldsymbol{e}_j)), \boldsymbol{e}_j \in T$ \COMMENT{\textcolor{blue}{Executor with majority votes is the quality task executor of this iteration.}}
\STATE $W \gets W \cup \{i\}$ \COMMENT{\textcolor{blue}{Add the quality task executors in set $W$.}}
\STATE ${\boldsymbol{e}}' \gets {\boldsymbol{e}}'\setminus \eta$ \COMMENT{\textcolor{blue}{Remove the evaluated executors from the set ${\boldsymbol{e}}'$.}}
\ENDWHILE
\STATE Return $W$ \COMMENT{\textcolor{blue}{Return the set of quality task executors selected through majority voting.}}
\end{algorithmic}
\end{algorithm}
  Now, the task executor who got the maximum votes, that task executor is selected as the winning executor, i.e., $\boldsymbol{e}_4$  in the first iteration. In the second iteration, the task executors in $\eta =$ $\{\boldsymbol{e}_9, \boldsymbol{e}_8, \boldsymbol{e}_{10}, \boldsymbol{e}_{11}\}$ that are to be voted on and the remaining task executors $\alpha =$ $\{\boldsymbol{e}_1, \boldsymbol{e}_2, \boldsymbol{e}_3, \boldsymbol{e}_4, \boldsymbol{e}_5, \boldsymbol{e}_6, \boldsymbol{e}_7, \boldsymbol{e}_{12}\}$ are provide preference over the set of $\eta$. Now, each task executor in $\alpha$ gives preference over the selected task executors in $\eta$. The task executor $\boldsymbol{e}_1$ gives preference to \{$\boldsymbol{e}_9, \boldsymbol{e}_{11}, \boldsymbol{e}_8, \boldsymbol{e}_{10}$\}. Further, $\boldsymbol{e}_2$ gives preference to \{$\boldsymbol{e}_{11}, \boldsymbol{e}_9, \boldsymbol{e}_8, \boldsymbol{e}_{10}$\}. $\boldsymbol{e}_3$ gives preference to \{$\boldsymbol{e}_{11}, \boldsymbol{e}_8, \boldsymbol{e}_{10}, \boldsymbol{e}_9$\}. In a similar manner $\boldsymbol{e}_4, \boldsymbol{e}_5, \boldsymbol{e}_6, \boldsymbol{e}_7$, and $\boldsymbol{e}_{12}$ are giving preferences on \{$\boldsymbol{e}_9, \boldsymbol{e}_{10}, \boldsymbol{e}_8, \boldsymbol{e}_{11}$\}; \{$\boldsymbol{e}_8, \boldsymbol{e}_{11}, \boldsymbol{e}_{10}, \boldsymbol{e}_9$\}; \{$\boldsymbol{e}_{11}, \boldsymbol{e}_9, \boldsymbol{e}_{10}, \boldsymbol{e}_8$\}; \{$\boldsymbol{e}_9, \boldsymbol{e}_{10}, \boldsymbol{e}_8, \boldsymbol{e}_{11}$\}; and \{$\boldsymbol{e}_{11}, \boldsymbol{e}_8, \boldsymbol{e}_{10}, \boldsymbol{e}_9$\} respectively. Here, $\boldsymbol{e}_9$ got 3 votes, $\boldsymbol{e}_8$ got 1 vote, $\boldsymbol{e}_{10}$ got 0 votes, and $\boldsymbol{e}_{11}$ got 4 votes. Now, the task executor who got the maximum votes, that task executor is selected as the winning executor, i.e., $\boldsymbol{e}_{11}$  in the second iteration.
Similarly, repeat the same process until all the task executors have voted. In the third iteration, $\eta =$ $\{\boldsymbol{e}_1, \boldsymbol{e}_5, \boldsymbol{e}_{12}, \boldsymbol{e}_6\}$ that are to be voted on and the remaining task executors $\alpha =$ $\{\boldsymbol{e}_2, \boldsymbol{e}_3, \boldsymbol{e}_4, \boldsymbol{e}_7, \boldsymbol{e}_8, \boldsymbol{e}_9, \boldsymbol{e}_{10}, \boldsymbol{e}_{11}\}$ are provide preference over the set of $\eta$. Now, each task executor in $\alpha$ gives preference over the selected task executors in $\eta$. The task executor $\boldsymbol{e}_2$ gives preference to \{$\boldsymbol{e}_{12}, \boldsymbol{e}_6, \boldsymbol{e}_5, \boldsymbol{e}_1$\}. Further, $\boldsymbol{e}_3$ gives preference to \{$\boldsymbol{e}_{12}, \boldsymbol{e}_1, \boldsymbol{e}_5, \boldsymbol{e}_6$\}. $\boldsymbol{e}_4$ gives preference to \{$\boldsymbol{e}_6, \boldsymbol{e}_5, \boldsymbol{e}_{12}, \boldsymbol{e}_1$\}. In a similar manner $\boldsymbol{e}_7, \boldsymbol{e}_8, \boldsymbol{e}_9, \boldsymbol{e}_{10}$, and $\boldsymbol{e}_{11}$ are giving preferences on \{$\boldsymbol{e}_5, \boldsymbol{e}_{12}, \boldsymbol{e}_1, \boldsymbol{e}_6$\}; \{$\boldsymbol{e}_6, \boldsymbol{e}_{12}, \boldsymbol{e}_1, \boldsymbol{e}_5$\}; \{$\boldsymbol{e}_1, \boldsymbol{e}_6, \boldsymbol{e}_5, \boldsymbol{e}_{12}$\}; \{$\boldsymbol{e}_5, \boldsymbol{e}_{12}, \boldsymbol{e}_6, \boldsymbol{e}_1$\}; and \{$\boldsymbol{e}_1, \boldsymbol{e}_{12}, \boldsymbol{e}_5, \boldsymbol{e}_6$\} respectively. Here, $\boldsymbol{e}_1$ got 2 votes, $\boldsymbol{e}_5$ got 2 votes, $\boldsymbol{e}_{12}$ got 2 votes, and $\boldsymbol{e}_6$ got 2 votes. Here we have a tie, all the selected task executors got 2 votes each. Now, by following lines 17 - 21 in Algorithm \ref{algo:2}, for the selected task executors, we have to check their second preferences in the list. By checking the second preference of each task executor in $\alpha$, the task executor $\boldsymbol{e}_{12}$ got 4 votes in the third iteration. Final winning task executors from all the iterations are $\{\boldsymbol{e}_4, \boldsymbol{e}_{11}, \boldsymbol{e}_{12}\}$. Repeat the same process for the remaining clusters. 
\subsection{Splitting and Equilibrium Price Determination}  
\begin{algorithm}[H]
\caption{\textsc{Splitting and Equilibrium Price Determination} ($\boldsymbol{r}$, $\boldsymbol{e}$)} 
\label{algo:3}
\noindent
\textbf{Output:} $\boldsymbol{p}^L$.
\begin{algorithmic}[1]
\STATE Divide the spatial crowdsourcing zone into two sub-zones: (1) Left Spatial Crowdsourcing Zone (LSCZ), and (2) Right Spatial Crowdsourcing Zone (RSCZ).\\
\STATE Randomly assign each task executors $\boldsymbol{e}_i \in \boldsymbol{e}$ and task requesters $\boldsymbol{r}_i \in \boldsymbol{r}$ independently to LSCZ and RSCZ respectively with probability 1/2.\\ 
\textcolor{blue}{/* Equilibrium price determination in LSCZ */}
\STATE $\boldsymbol{p} \gets 0$, $d^{\boldsymbol{r}(L)} = \infty$, $s^{\boldsymbol{e}(L)} = 0$     \COMMENT{\textcolor{blue}{initialized the variables to $0$ and $\infty$.}}
\WHILE{$d^{\boldsymbol{r}(L)} \neq s^{\boldsymbol{e}(L)}$ }
\STATE $\boldsymbol{p} \gets \boldsymbol{p}+\epsilon$   \COMMENT{\textcolor{blue}{The price $\boldsymbol{p}$ is increased each time by $\epsilon$.}}
\FOR{each $\boldsymbol{r}_i \in \boldsymbol{r}$}
\STATE $d_i^ {\boldsymbol{r}} (\boldsymbol{p}) = \argmax_{Y \in [0, \chi.\beta ]} u_i^{\boldsymbol{r}}(Y, \boldsymbol{p})$
\COMMENT{\textcolor{blue}{Determines the maximum demand of task requester $\boldsymbol{r}_i$ at price $\boldsymbol{p}$ and hold the demand in $d_i^ {\boldsymbol{r}} (\boldsymbol{p})$.}}
\ENDFOR
\FOR{each $\boldsymbol{e}_i \in \boldsymbol{e}$}
\STATE $s_i^ {\boldsymbol{e}} (\boldsymbol{p}) = \argmax_{Y \in [0, \chi.\beta ]} u_i^{\boldsymbol{e}}(Y, \boldsymbol{p})$
\COMMENT{\textcolor{blue}{Determines the maximum supply of task executor $\boldsymbol{e}_i$ at price $\boldsymbol{p}$ and hold the supply in  $s_i^ {\boldsymbol{e}} (\boldsymbol{p})$.}}
\ENDFOR
\STATE $d^ {\boldsymbol{r}} (\boldsymbol{p}) = \displaystyle\sum_{i = 1}^{n}  d_i^ {\boldsymbol{r}} (\boldsymbol{p})$
\COMMENT{\textcolor{blue}{Total demand at price $\boldsymbol{p}$ is calculated for all the $n$ task requesters and is stored in $d^ {\boldsymbol{r}} (\boldsymbol{p})$.}}
\STATE $s^ {\boldsymbol{e}} (\boldsymbol{p}) = \displaystyle\sum_{i = 1}^{m}  s_i^ {\boldsymbol{e}} (\boldsymbol{p})$
\COMMENT{\textcolor{blue}{Total supply at price $\boldsymbol{p}$ is calculated for all the $m$ task executors and is stored in $s^ {\boldsymbol{e}} (\boldsymbol{p})$.}}
\ENDWHILE
\STATE $\boldsymbol{p}^L \gets \boldsymbol{p}$     \COMMENT{\textcolor{blue}{The equilibrium price of LSCZ is stored in $\boldsymbol{p}^L$.}}
\STATE Returns $\boldsymbol{p}^L$.         \COMMENT{\textcolor{blue}{Returns equilibrium price of LSCZ.}}
\end{algorithmic}
\end{algorithm}
In this section, details of the Splitting and Equilibrium Price Determination are carried out. The spatial crowdsourcing zone is divided into two sub-zones, namely the Left spatial crowdsourcing Zone (LSCZ) and the Right spatial crowdsourcing Zone (RSCZ) in line 1 of the Algorithm \ref{algo:3}. In line 2, randomly assigned each task executor or task requester independently to LSCZ or RSCZ, respectively, with a 1/2 probability. Now, we have considered the LSCZ to determine the equilibrium price. Similarly, one can determine the equilibrium price of RSCZ. In line 3, initially, the equilibrium price $\boldsymbol{p}$ and supply $s^{\boldsymbol{e}(L)}$ are initialized as 0, and the demand $d^{\boldsymbol{r}(L)}$ at price $\boldsymbol{p} = 0$ is initialized as $\infty$. The while loop is taking care of determining the demand and supply of each task requester and task executor from lines 4-14, and once the demand and supply become equal, it terminates the loop. For every iteration of the while loop, the price $\boldsymbol{p}$ is increased by $\epsilon$. In lines 6-8, each task requester in LSCZ has a demand calculated at a price $\boldsymbol{p}$. Similarly, the supply of each task executor in LSCZ at price $\boldsymbol{p}$ is computed. The entire supply and demand for agents in $d^{\boldsymbol{r}(L)}$ and $s^{\boldsymbol{e}(L)}$, respectively, are shown in lines 12 and 13. The equilibrium price $\boldsymbol{p}$ is stored in $\boldsymbol{p}^L$ in line 15. Finally, the algorithm returns the list of $\boldsymbol{p}^L$ that equilibrates total supply and demand in the LSCZ in line 16. Similarly, the same process will run for the RSCZ to determine the equilibrium price $\boldsymbol{p}^R$. 
\subsubsection{Illustrative Example for Splitting and Equilibrium Price Determination}
\label{Example1}
Let us say, we have a set of 4 task requesters $\boldsymbol{r} = \{\boldsymbol{r}_1, \boldsymbol{r}_2, \boldsymbol{r}_3, \boldsymbol{r}_4\}$ and each task requester is endowed with multiple distinct tasks, i.e., $\boldsymbol{r}_1 = \{\boldsymbol{t}_1^1, \boldsymbol{t}_2^1, \boldsymbol{t}_3^1\}$ along with budget $\{12, 9, 5\}$; $\boldsymbol{r}_2 = \{\boldsymbol{t}_1^2, \boldsymbol{t}_2^2, \boldsymbol{t}_3^2\}$ along with budget $\{15, 10, 6\}$; $\boldsymbol{r}_3 = \{\boldsymbol{t}_1^3, \boldsymbol{t}_2^3, \boldsymbol{t}_3^3\}$ along with budget $\{10, 12, 4\}$; similarly, $\boldsymbol{r}_4 = \{\boldsymbol{t}_1^4, \boldsymbol{t}_2^4, \boldsymbol{t}_3^4\}$ along with budget $\{5, 4, 7\}$ respectively. On the other side, we have 3 quality task executors that are determined in subsection \ref{algo:Example}, i.e., $\boldsymbol{e} = \{\boldsymbol{e}_4, \boldsymbol{e}_{11}, \boldsymbol{e}_{12}\}$. Each task executor has a preference list over a bundle of tasks along with an estimated budget. Say, $\boldsymbol{e}_4$ gives preference over a bundle of tasks $\{\boldsymbol{t}_1^1, \boldsymbol{t}_1^3, \boldsymbol{t}_3^3\}$ with budget $\{10, 9, 9\}$; $\boldsymbol{e}_{11}$ gives preference over a bundle of tasks $\{\boldsymbol{t}_1^4, \boldsymbol{t}_3^2\}$ with budget $\{4, 8\}$; and  $\boldsymbol{e}_{12}$ gives preference over a bundle of tasks $\{\boldsymbol{t}_1^2, \boldsymbol{t}_2^4, \boldsymbol{t}_2^2, \boldsymbol{t}_3^4\}$ with budget $\{10, 9, 6, 4\}$ respectively. Now, by following the algorithm \ref{algo:3}, split the spatial crowdsourcing zone into two sub-zones, namely (1) Left Spatial Crowdsourcing Zone (LSCZ), and (2) Right Spatial Crowdsourcing Zone (RSCZ) by randomly assigning each task requester and task executor independently to LSCZ and RSCZ, respectively, with probability 1/2. After splitting, in LSCZ, we have task requesters $\boldsymbol{r}_1$, $\boldsymbol{r}_3$ along with their tasks and budgets, and task executors $\boldsymbol{e}_4$, $\boldsymbol{e}_{11}$ along with their preference bundle of tasks with budget. Let us first determine the equilibrium price $\boldsymbol{p}$ in LSCZ. So, let us initialize the price $\boldsymbol{p}$ to 0. At this price, all the task requesters want their tasks to get executed (i.e., $d^{\boldsymbol{r}(L)} = 6$), but on the other hand, none of the task executors want to execute the tasks (i.e., $s^{\boldsymbol{e}(L)} = 0$). Line 4 of the Algorithm \ref{algo:3} is satisfied, and $\boldsymbol{p}$ value is incremented by 3. Now at price $\boldsymbol{p} = 3$, the demand from the task requesters is 6 (i.e., $d^{\boldsymbol{r}(L)} = 6$) and the supply of the task executors is zero (i.e., $s^{\boldsymbol{e}(L)} = 0$). Again, the $\boldsymbol{p}$ value is increased by 3 since the condition in line 4 of the algorithm \ref{algo:3} is satisfied. Now $\boldsymbol{p}$ is 6. The $d^{\boldsymbol{r}(L)} = 4$ and $s^{\boldsymbol{e}(L)} = 1$, still, the demand and supply are not equal. Once again, the $\boldsymbol{p}$ value is incremented by 3, at $\boldsymbol{p} = 9$. The demand of the task requesters is 4 (i.e., $d^{\boldsymbol{r}(L)} = 4$), and the supply of the task executors is 4 (i.e., $s^{\boldsymbol{e}(L)} = 4$). Now the demand and supply are equal (i.e., $d^{\boldsymbol{r}(L)} = s^{\boldsymbol{e}(L)}$). So the equilibrium price $\boldsymbol{p}$ in the LSCZ is 9. Similarly, the equilibrium price of RSCZ can be calculated, and it is 6. 
\subsection{Demand and Supply Calculation}
In this section, the equilibrium price of LSCZ is used to determine the demand and supply of task requesters and task executors in RSCZ, respectively, in Algorithm \ref{algo:4}. In lines 1-7, for each task requester $\boldsymbol{r}_i \in \boldsymbol{r}^R$, the individual demand is computed. Specifically, in line 2, each requester determines its optimal demand $d_i^ {\boldsymbol{r}} (\boldsymbol{p}^L)$ by maximizing its utility over all feasible allocations $Y \in [0, \chi.\beta]$. In line 3, it is checked whether the obtained demand is positive or not. If the demand is greater than zero, then in line 4, the corresponding task requester is added to the winning task requesters set $\Tilde{\boldsymbol{r}}^R$. Subsequently, in line 5, the total demand  $d^{\boldsymbol{r}(R)}$ in RSCZ is updated by adding the demand of that task requester. This process continues for all task requesters, and only those with positive demand at price $\boldsymbol{p}^L$ are considered.\\
\indent In lines 8-14, for each task executor $\boldsymbol{e}_i \in \boldsymbol{e}^R$, the supply is computed. In line 9, each task executor determines its optimal supply $s_i^ {\boldsymbol{e}} (\boldsymbol{p}^L)$ by maximizing its utility over all feasible allocations $Y \in [0, \chi.\beta]$. In line 10, it is verified whether the computed supply is positive or not. If the supply is greater than zero, then in line 11, the task executor is added to the set of active task executors $\Tilde{\boldsymbol{e}}^R$. In line 12, the total supply $s^{\boldsymbol{e}(R)}$ is updated by adding the supply of that executor. This process is repeated for all task executors, and only those with positive supply at price $\boldsymbol{p}^L$ are retained. Finally, in line 15, the algorithm returns the total demand $d^{\boldsymbol{r}(R)}$ and total supply $s^{\boldsymbol{e}(R)}$ computed for RSCZ.

\begin{algorithm}[H]
\caption{\textsc{Demand and Supply Calculation} ($\boldsymbol{e}^R$, $\boldsymbol{r}^R$, $\boldsymbol{p}^L$).} 
\label{algo:4}
\noindent
\textbf{Output:} $d^{\boldsymbol{r}(R)} \gets 0, s^{\boldsymbol{e}(R)} \gets 0$
\begin{algorithmic}[1]
%\STATE /*\textcolor{blue}{Demand of task executors in RSCZ at price $p^L$ }*/
\FOR{each $\boldsymbol{r}_i \in \boldsymbol{r}^R$}
\STATE $d_i^ {\boldsymbol{r}} (\boldsymbol{p}^L) = \argmax_{Y \in [0, \chi.\beta ]} u_i^{\boldsymbol{r}}(Y, \boldsymbol{p}^L)$ \COMMENT{\textcolor{blue}{Calculating the demand of $\boldsymbol{r}_i$ at equilibrium price $\boldsymbol{p}^L$.}}
\IF{$d_i^ {\boldsymbol{r}} (\boldsymbol{p}^L)  > 0$}
\STATE $\Tilde{\boldsymbol{r}}^R \gets \Tilde{\boldsymbol{r}}^R \cup \{\boldsymbol{r}_i\}$   \COMMENT{\textcolor{blue}{Each time $\Tilde{\boldsymbol{r}}^R$ 
holds the task requester $\boldsymbol{r}_i$ if the criteria in line 3 is satisfied.}}
\STATE  $d^{\boldsymbol{r}(R)} =  d^{\boldsymbol{r}(R)} +  d_i^{\boldsymbol{r}(R)}(\boldsymbol{p}^L)$      \COMMENT{\textcolor{blue}{Total demand in RSCZ is calculated and held in $d^{\boldsymbol{r}(R)}$.}}
\ENDIF
\ENDFOR
\\
\textcolor{blue}{/* Supply of task executors in RSCZ at price $\boldsymbol{p}^L$ */}\\
\FOR{each $\boldsymbol{e}_i \in \boldsymbol{e}^R$}
\STATE $s_i^ {\boldsymbol{e}} (\boldsymbol{p}^L) = \argmax_{Y \in [0, \chi.\beta ]} u_i^{\boldsymbol{e}}(Y, \boldsymbol{p})$     \COMMENT{\textcolor{blue}{Each time $\boldsymbol{e}^R$ holds the task executor $\boldsymbol{e}_i$ if the criteria in line 10 is satisfied.}}
\IF{$s_i^ {\boldsymbol{e}} (\boldsymbol{p}^L) > 0$}
\STATE $\Tilde{\boldsymbol{e}}^R \gets \Tilde{\boldsymbol{e}}^R \cup \{\boldsymbol{e}_i\}$       \COMMENT{\textcolor{blue}{Each time $\Tilde{\boldsymbol{e}}^R$ 
holds the task executor $\boldsymbol{e}_i$ if the criteria in line 10 is satisfied.}}
\STATE $s^{\boldsymbol{e}(R)}= s^{\boldsymbol{e}(R)} + s_i^ {\boldsymbol{e}(\boldsymbol{R})}( \boldsymbol{p}^L)$     \COMMENT{\textcolor{blue}{Total supply in RSCZ is calculated and held in $s^{\boldsymbol{e}(R)}$.}}
\ENDIF
\ENDFOR
\STATE Return $d^{\boldsymbol{r}(R)}, s^{\boldsymbol{e}(R)}$    \COMMENT{\textcolor{blue}{Returns, total demand and supply from RSCZ.}}
\end{algorithmic}
\end{algorithm}
\subsubsection{Illustrative Example for Demand and Supply Calculation}
\label{Example3}
We are considering the equilibrium price calculated in LSCZ  determined in subsection \ref{Example1}. In RSCZ, we have 2 task requesters $\boldsymbol{r}_2$, $\boldsymbol{r}_4$, along with their tasks and budgets, i.e.,  $\boldsymbol{r}_2 = \{\boldsymbol{t}_1^2, \boldsymbol{t}_2^2, \boldsymbol{t}_3^2\}$ along with its budget $\{15, 10, 6\}$ and $\boldsymbol{r}_4 = \{\boldsymbol{t}_1^4, \boldsymbol{t}_2^4, \boldsymbol{t}_3^4\}$ along with its budget $\{5, 4, 7\}$ respectively. Similarly, we have 1 task executors $\boldsymbol{e}_{12}$ along with their preferences on a bundle of tasks with a budget, i.e., $\boldsymbol{e}_{12}$ gives preference over a bundle of tasks $\{\boldsymbol{t}_1^2, \boldsymbol{t}_2^4, \boldsymbol{t}_2^2, \boldsymbol{t}_3^4\}$ with budget $\{10, 9, 6, 4\}$ respectively. 
By following the Algorithm \ref{algo:4}, in RSCZ at $\boldsymbol{p}^L = 9$, $d^{\boldsymbol{r}(R)}$ is 2 and $s^{\boldsymbol{r}(R)}$ is 3. Similarly, the demand and supply calculation for LSCZ at $\boldsymbol{p}^R = 6$, $d^{\boldsymbol{r}(L)}$ is 4, and $s^{\boldsymbol{r}(L)}$ is 1.
\subsection{Winner determination and payment}
This section explains the procedure for identifying winners and calculating their payments under RSCZ. The corresponding winner determination and payment rules for LSCZ follow directly from Algorithm \ref{algo:5}. After computing the demand of task requesters and the supply of task executors in RSCZ with respect to $\boldsymbol{p}^L$, the three possible cases can occur: (1) $d^{\boldsymbol{r}(R)} == s^{\boldsymbol{e}(R)}$, (2) $d^{\boldsymbol{r}(R)} > s^{\boldsymbol{e}(R)}$, and (3) $d^{\boldsymbol{r}(R)} < s^{\boldsymbol{e}(R)}$. In lines 1-7, the case where $d^{\boldsymbol{r}(R)} == s^{\boldsymbol{e}(R)}$ is handled. In this scenario, the total demand exactly matches the total supply; therefore, all the agents can be accommodated. The algorithm determines the set of winning task requesters $\boldsymbol{r}^W$, the set of winning task executors $\boldsymbol{e}^W$, and the payments for each. In lines 2-4, include each task requester's tasks from the $\boldsymbol{r}_i$ to the winning set $\boldsymbol{r}^W$. Similarly, in lines 5-7, include each task executor's $\boldsymbol{e}_i$ in the winning set $\boldsymbol{e}^W$. Lines 8-20 describe the procedure for the case when demand exceeds supply, i.e., $d^{\boldsymbol{r}(R)} > s^{\boldsymbol{e}(R)}$. 
\begin{algorithm}[H]
\caption{\textsc{Winner determination and payment} ($d^{\boldsymbol{r}(R)}$, $s^{\boldsymbol{e}(R)}$, $\boldsymbol{e}^R$, $\boldsymbol{r}^R$, $\boldsymbol{p}^L$).} 
\label{algo:5}
\noindent
\begin{algorithmic}[1]
\IF{$d^{\boldsymbol{r}(R)} == s^{\boldsymbol{e}(R)}$} %\COMMENT{\textcolor{blue}{Total demand equals to total supply.}}
\FOR{each $\boldsymbol{r}_i \in \boldsymbol{r}^R$} %\COMMENT{\textcolor{blue}{For each requester $\boldsymbol{r}_i^R$.}}
%\FOR{each $\boldsymbol{r}^R_{i,l} \in \boldsymbol{r}^R_i$} %\COMMENT{\textcolor{blue}{For each requester of $\boldsymbol{r}^R_{i,l}$.}}
\STATE $\boldsymbol{r}^W \gets \boldsymbol{r}^W \cup \{\boldsymbol{r}_{i}\}$ \COMMENT{\textcolor{blue}{Add all task requesters to the winning set}}
%\ENDFOR
\ENDFOR
\FOR{each $\boldsymbol{e}_i \in \boldsymbol{e}^R$} %\COMMENT{\textcolor{blue}{For each executor $\boldsymbol{e}_i^R$}}
%\FOR{each $\boldsymbol{e}^R_{i,q} \in \boldsymbol{e}^R_i$} %\COMMENT{\textcolor{blue}{For each executor capacity unit}}
\STATE $\boldsymbol{e}^W \gets \boldsymbol{e}^W \cup \{\boldsymbol{e}_{i}\}$ \COMMENT{\textcolor{blue}{Add all task executors to the winning set.}}
%\ENDFOR
\ENDFOR
\ELSIF{$d^{\boldsymbol{r}(R)} > s^{\boldsymbol{e}(R)}$} % \COMMENT{\textcolor{blue}{Demand exceeds supply}}

\STATE $\Tilde{\boldsymbol{r}}^R \gets EnQueue~(\boldsymbol{r}^R)$ \COMMENT{\textcolor{blue}{Store all the task requesters in a queue}}
\WHILE{$|\boldsymbol{r}^W| \neq s^{\boldsymbol{e}(R)}$} %\COMMENT{\textcolor{blue}{Continue until supply is fully allocated}}
\STATE $\boldsymbol{r}^* \gets dequeue~(\Tilde{\boldsymbol{r}}^R)$ \COMMENT{\textcolor{blue}{Remove next task requester from the queue}}
\FOR{each $\boldsymbol{r}_{i} \in \boldsymbol{r}^*$}
\WHILE{$ v_i > \boldsymbol{p}^R$ and $|\boldsymbol{r}^W| \neq s^{\boldsymbol{e}(R)}$} %\COMMENT{\textcolor{blue}{Select tasks whose value is below price threshold}}
\STATE $\boldsymbol{r}^W \gets \boldsymbol{r}^W \cup \{\boldsymbol{r}^R_{i}\}$ \COMMENT{\textcolor{blue}{Add selected task requester to the winning set.}}
\ENDWHILE
\ENDFOR
\ENDWHILE
\FOR{each $\boldsymbol{e}_{i} \in \boldsymbol{e}^R$} %\COMMENT{\textcolor{blue}{Assign executors corresponding to selected tasks.}}
\STATE $\boldsymbol{e}^W \gets \boldsymbol{e}^W \cup \{\boldsymbol{e}_{i}\}$  \COMMENT{\textcolor{blue}{Update winner set by including the chosen executor unit.}}
\ENDFOR
\ELSE  %\COMMENT{\textcolor{blue}{Supply exceeds demand.}}
\STATE $\Tilde{\boldsymbol{e}}^R \gets EnQueue~(\boldsymbol{e}^R)$ \COMMENT{\textcolor{blue}{Store all task executors in a queue.}}
\WHILE{$|\boldsymbol{e}^W| \neq d^{\boldsymbol{r}(R)}$}%\COMMENT{\textcolor{blue}{Continue until demand is fulfilled}}
\STATE $\boldsymbol{e}^* \gets dequeue(\Tilde{\boldsymbol{e}}^R)$ \COMMENT{\textcolor{blue}{Select the next task executor from the queue.}}
\FOR{each $\boldsymbol{e}_{i} \in \boldsymbol{e}^*$}%\COMMENT{\textcolor{blue}{Iterate over each capacity unit of the selected executor}}
\WHILE{$ v_i < \boldsymbol{p}^R$ and $|\boldsymbol{e}^W| \neq d^{\boldsymbol{r}(R)}$}%\COMMENT{\textcolor{blue}{Add executor units whose value satisfies the threshold until demand is met.}}
\STATE $\boldsymbol{e}^W \gets \boldsymbol{e}^W \cup \{\boldsymbol{e}^R_{i}\}$  \COMMENT{\textcolor{blue}{Add the selected executor unit to the winning set.}}
\ENDWHILE
\ENDFOR
\ENDWHILE
\FOR{each $\boldsymbol{r}_{i} \in \boldsymbol{r}^R$} %\COMMENT{\textcolor{blue}{Assign all requesters since demand is smaller}}
\STATE $\boldsymbol{r}^W \gets \boldsymbol{r}^W \cup \{\boldsymbol{r}_{i}\}$ \COMMENT{\textcolor{blue}{Add selected task requester to the winning set.}}
\ENDFOR
\ENDIF
\STATE Return $\boldsymbol{r}^W$, $\boldsymbol{e}^W$, $\boldsymbol{p}^L$ \COMMENT{\textcolor{blue}{Return winning task requesters, executors and market price.}}
\end{algorithmic}
\end{algorithm}
In line 9, the selected task requester's tasks are placed in the queue. The while loop in lines 10-17 continues until the number of selected winning task requester's task units becomes equal to the available supply. In line 11, one task requester's task is dequeued from the queue at a time. For each dequeued requester, the for loop in line 12 iterates through its task units. The stopping condition in line 13 checks whether the valuation of each task unit satisfies the feasibility condition with respect to the market price $\boldsymbol{p}^R$, and whether the supply constraint is still respected. If the condition is satisfied, the corresponding unit is added to the winning requester set $\boldsymbol{r}^W$ as shown in line 14. Once the supply is exhausted, the loop terminates. The for loop in lines 18-20 then iterates through all the task executors and places them into the winning executor set $\boldsymbol{e}^W$. In lines 21-34, the case with excess supply $d^{\boldsymbol{r}(R)} < s^{\boldsymbol{e}(R)}$ is addressed. In line 22, the selected task executors are placed into a queue in their arrival order. The while loop in lines 23-30 continues until the number of selected task executor units becomes equal to the total demand. In line 24, one task executor is dequeued from the queue at a time. The for loop in line 25 iterates over the preferred task list of the selected task executors.  The stopping condition in line 26 checks whether the cost of each task is feasible with respect to the market price $\boldsymbol{p}^R$ and whether the demand constraint is still satisfied. If the condition holds, the task is added to the winning task executor set $\boldsymbol{e}^W$ as shown in line 27. Once the demand is fulfilled, the loop terminates. Finally, the for loop in lines 31-33 iterates over all the task requesters' tasks and places them into the winning requester set $\boldsymbol{r}^W$, since all task requesters are accepted in the excess supply case. Finally, in line 35, the algorithm returns the set of winning task requesters $\boldsymbol{r}^W$, the set of winning task executors $\boldsymbol{e}^W$, and the clear price $\boldsymbol{p}^L$, which is used for determining the payment. 

\subsection{Illustrative Example for Winner determination and payment}
Now, let us apply Algorithm \ref{algo:5} to the output obtained from from subsection \ref{Example3}, i.e., in RSCZ at $\boldsymbol{p}^L = 9$, $d^{\boldsymbol{r}(R)}$ is 2, and $s^{\boldsymbol{e}(R)}$ is 3. For RSCZ, lines 21-34 of Algorithm \ref{algo:5} will be activated as $d^{\boldsymbol{r}(R)} < s^{\boldsymbol{e}(R)}$. Following line 22, the selected task executors' tasks are placed in queue $\Tilde{\boldsymbol{e}}^R = [\boldsymbol{t}_2^4, \boldsymbol{t}_2^2, \boldsymbol{t}_3^4]$. After that, using lines 23-30, the winning task executors' preferred tasks are determined by sequentially dequeuing each preferred executor's tasks and checking the feasibility condition with respect to the opposite market price $\boldsymbol{p}$, which is 9. The winning task executor's preferred tasks are placed in queue $\Tilde{\boldsymbol{e}}^R = [\boldsymbol{t}_2^4, \boldsymbol{t}_2^2, \boldsymbol{t}_3^4]$. In this, ready to execute the task is $\boldsymbol{e}^W = [\boldsymbol{t}_2^2]$ with budget 6 and utility is $9-6 = 3$. Following lines 31-33, the winning task requesters' tasks are determined by sequentially dequeuing each task and checking the feasibility condition with respect to the opposite market price $\boldsymbol{p}$, which is 9. The selected task requester's task is placed in queue $\Tilde{\boldsymbol{r}}^R = [\boldsymbol{t}_1^2, \boldsymbol{t}_2^2]$. In this, ready to execute the tasks is $\boldsymbol{r}^W = [\boldsymbol{t}_2^2]$ with  budget 10 and utility is $10-9 = 1$. The price at which the trading takes place is $\boldsymbol{p}^L = 9$.

\section{Economic Analysis and Probabilistic Guarantees}
\label{sec:TheA}
\begin{lemma}
\label{lemma:1}
TRUST-SC is computationally efficient.
\end{lemma}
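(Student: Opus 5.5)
We bound the running time of each component of TRUST-SC separately and then add the bounds. By Definition~\ref{def:3}, it suffices to show that every algorithm runs in time polynomial in the input parameters. These parameters are the number of requesters $n$, the number of executors $m$, the total number of tasks $N=\sum_i n_i$, the number of clusters $k$, and the constants $\chi$ and $\beta$. Since the components are executed sequentially, the total running time is the sum of the individual bounds.

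\textbf{Algorithm~\ref{algo:1} (Cluster Formation).} Seeding in lines 2--5 takes $O(k)$ time. One pass of the repeat loop computes $k$ distances for each of the $N$ tasks and then recomputes the centroids, so it costs $O(Nk)$. We treat the number of iterations $I$ the way the paper's illustrative example does: as a bounded parameter, or equivalently as a cap on the number of Lloyd-type iterations. This gives $O(INk)$ overall.

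This step is the main obstacle. Lloyd's iteration can in the worst case need super-polynomially many rounds. We will therefore state explicitly that the platform runs at most a fixed number $I$ of iterations, so that this step is polynomial.

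\textbf{Algorithm~\ref{algo:2} (Quality Task Executor Selection).} Each round of the outer while loop removes $f$ executors from ${\boldsymbol{e}}'$, so there are at most $\lceil m/f\rceil$ rounds. Within a round:
\begin{itemize}
\item generating the $g$ random preference orders costs $O(gf\log f)$;
\item counting first preferences costs $O(g+f)$;
\item the tie-breaking loop runs at most $f$ times, because each pass moves every voter to its next preference among the tied candidates, and each pass costs $O(g+f)$.
\end{itemize}
A round therefore costs $O(gf\log f + f(g+f))$. With $f,g\le m$, the whole algorithm runs in $O(m^3\log m)$ per cluster, and summing over the $k$ clusters keeps the bound polynomial.

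\textbf{Algorithm~\ref{algo:3} (Splitting and Equilibrium Price Determination).} The random split takes $O(n+m)$ time. Each iteration of the price loop computes every agent's demand or supply as an argmax over bundles in $\{0,\dots,\beta\}^{\chi}$. That is at most $(\beta+1)^{\chi}$ bundles, a constant because $\chi$ and $\beta$ are constants, so one iteration costs $O((n+m)(\beta+1)^{\chi})$. The price rises by $\epsilon$ each iteration and never exceeds the maximum reported valuation $v_{\max}$, so the loop runs at most $v_{\max}/\epsilon$ times. We will assume, as the paper's framework implicitly does, that valuations are bounded and $\epsilon$ is fixed, which makes this step polynomial (pseudo-polynomial in $v_{\max}$). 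This is the second delicate point, and we will state the assumption explicitly.

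\textbf{Algorithms~\ref{algo:4} and~\ref{algo:5} (Demand and Supply Calculation; Winner Determination and Payment).} Algorithm~\ref{algo:4} is a single pass over the agents in RSCZ with the same constant-size argmax, so it costs $O((n+m)(\beta+1)^{\chi})$. In Algorithm~\ref{algo:5}, each queue is dequeued at most once per agent. Each agent contributes at most $\chi\beta$ units, and every inner while iteration adds one unit to a winning set. The total work is therefore $O((n+m)\chi\beta)$, and the LSCZ side is symmetric.

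Adding these bounds gives a running time polynomial in $n$, $m$, $N$, $k$ (together with the fixed $I$ and $v_{\max}/\epsilon$), which proves Lemma~\ref{lemma:1}.
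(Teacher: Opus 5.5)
Your decomposition is the same as the paper's: bound Algorithms~\ref{algo:1}--\ref{algo:5} separately and add. The paper arrives at $O(Nn^2k + m^2 + L(n+m))$ and simply declares the iteration counts $N$ and $L$ polynomially bounded. On Algorithm~\ref{algo:1} you are more careful than the paper. Lloyd-type iteration can need exponentially many rounds in the worst case, and your cap matches the simulation setup's ``maximum number of iterations''.

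Your bound for Algorithm~\ref{algo:3}, however, rests on a step that fails. The price loop exits \emph{only} when $d^{\boldsymbol{r}(L)} = s^{\boldsymbol{e}(L)}$ holds exactly; nothing stops the price at $v_{\max}$. Once the price exceeds every requester's value, total demand is $0$. So if some executor in LSCZ supplies a positive amount, the loop runs forever, and with discrete increments the equality can be jumped over. For example, take $\epsilon = 1$, and let LSCZ contain two requesters each wanting one unit at value $10$ and one executor offering three units at per-unit cost $3/2$. Demand versus supply is then $2$ vs.\ $0$ at price $1$, $2$ vs.\ $3$ at prices $2,\dots,9$, and at most $2$ vs.\ $3$ from price $10$ on, so the loop never terminates. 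Bounded valuations and a fixed $\epsilon$ therefore give only a dichotomy: the loop stops by the first grid price above $v_{\max}$, or it never stops. To obtain your $O(v_{\max}/\epsilon)$ iteration bound you have two options. You can assume explicitly that an exactly clearing price lies on the $\epsilon$-grid, which is what the paper tacitly does when it lets $L$ be ``the number of price increments required to reach equilibrium''. Alternatively, you can change the exit test to the first grid price with $d^{\boldsymbol{r}(L)} \le s^{\boldsymbol{e}(L)}$, which is reached by $v_{\max}+\epsilon$ unconditionally.

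A smaller instance of the same issue sits in your count for Algorithm~\ref{algo:5}. Read literally, the inner while loop adds the same element $\boldsymbol{r}^R_i$ repeatedly, so $|\boldsymbol{r}^W|$ stops growing. Moreover, the outer exit condition $|\boldsymbol{r}^W| = s^{\boldsymbol{e}(R)}$ need not be reachable, because the filter $v_i > \boldsymbol{p}^R$ uses a different price from the one ($\boldsymbol{p}^L$) at which $d^{\boldsymbol{r}(R)}$ was computed. Your $O((n+m)\chi\beta)$ bound is correct for the evident intended reading, in which each eligible unit is added once and the loop halts when the queue is empty. You should state that you adopt this reading, just as you did with the iteration cap for Algorithm~\ref{algo:1}. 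With these hypotheses made explicit, the rest of your argument is sound and more careful than the paper's; this includes the constant-size argmax over $\{0,\dots,\beta\}^{\chi}$ and the round and tie-breaking counts for Algorithm~\ref{algo:2}.
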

\begin{proof}
\indent In Algorithm \ref{algo:1}, performs centroid-based clustering on the set of tasks. During each iteration, every task is compared with each of the $k$ cluster centers to determine the nearest cluster. Therefore, the assignment step requires $O(Tk)$ operations. After assigning tasks to clusters, the algorithm recomputes cluster centroids, which takes $O(T)$ time. Since the clustering process iterates until convergence, let $N $ denote the number of iterations required for stabilization. Thus, the total running time becomes $O(N n ^2 k)$.\\
\indent  In Algorithm \ref{algo:2}, let $m$ denote the number of task executors in a cluster. In each iteration, a subset of executors is evaluated, and the remaining executors provide preference rankings over them. Vote aggregation and preference processing require scanning the participating executors. In the worst case, the voting process and tie-breaking procedure may involve comparing all executors, leading to $O(m)$ operations per round. Since each executor can participate in at most one evaluation round, the overall running time is bounded by $O(m^2)$.\\
\indent In Algorithm \ref{algo:3}, first splits the spatial crowdsourcing zone into two sub-zones by randomly assigning each requester and executor, which takes $O(n+m)$ time. It then iteratively increases the price until market equilibrium is reached. In each iteration, the algorithm computes the demand of all $n$ requesters and the supply of all $m$ executors, which requires $O(n+m)$ time. Let $L$ denote the number of price increments required to reach equilibrium. Therefore, the total running time of Algorithm 3 is $O(L(n+m))$.\\
\indent In Algorithm \ref{algo:2}, the demand of requesters and the supply of executors at the equilibrium price. This requires scanning all $n$ requesters and $m$ executors once, resulting in a running time of $O(n+m)$.\\
\indent In Algorithm \ref{algo:5}, determines the winning task requesters and executors based on the computed demand and supply values. The procedure involves queue operations and sequential checks over requester tasks and executor preferences. In the worst case, the algorithm scans all requesters and executors once, resulting in a running time of $O(n+m)$. \\
\indent The proposed TRUST-SC mechanism runs in polynomial time. Combining the complexities of Algorithms 1--5, the overall running time of the TRUST-SC mechanism is $O(N n ^2 k + m^2 + L(n+m))$.
Since $N$, $k$, and $L$ are bounded by polynomial factors in the input size, the overall complexity of the proposed mechanism is polynomial. Hence, TRUST-SC satisfies the computational efficiency requirement.
\end{proof}

\begin{lemma}
TRUST-SC is truthful.
\end{lemma}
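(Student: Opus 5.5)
We would prove truthfulness by the standard random market-halving argument behind MIDA \cite{ijcai2018-68}, treating each agent in isolation and showing that no misreport can raise its utility in \eqref{equ:1} or \eqref{equ:2}. Fix an arbitrary agent, say a requester $\boldsymbol{r}_i$ or an executor $\boldsymbol{e}_j$, and condition on the random split of Algorithm \ref{algo:3}. Since the split is drawn independently of all reports, it suffices to prove truthfulness for every fixed realization of the split. By symmetry we may take the agent to lie in RSCZ.

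The first step is \emph{price independence}. The price the agent faces is $\boldsymbol{p}^L$, which Algorithm \ref{algo:3} computes only from the reports of agents in LSCZ. The agent's own report therefore cannot change $\boldsymbol{p}^L$, and from its point of view $\boldsymbol{p}^L$ is a take-it-or-leave-it posted price vector. The second step is \emph{demand-query optimality}. At a fixed price vector, Algorithm \ref{algo:4} computes $d_i^{\boldsymbol{r}}(\boldsymbol{p}^L)$ or $s_j^{\boldsymbol{e}}(\boldsymbol{p}^L)$ as the $\argmax$ of the agent's utility. Reporting $v_i$ truthfully thus yields a bundle in the true demand set, which is the utility-maximizing bundle among all bundles at that price. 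Any misreport either leaves the computed bundle utility-equivalent or replaces it with a suboptimal one.

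The third step is the rationing in Algorithm \ref{algo:5}. In the balanced case every agent receives its full demand or supply, so steps one and two finish the argument. In the excess-demand and excess-supply cases we would argue three things: the queue order is fixed independently of the reported values (arrival order or a random order chosen before bids are seen); the threshold tests $v_i > \boldsymbol{p}^R$ and $v_i<\boldsymbol{p}^R$ act unit by unit; and the total quantity $s^{\boldsymbol{e}(R)}$ or $d^{\boldsymbol{r}(R)}$ available to the short side is determined by the other side. An agent on the long side then receives a prefix of its demanded units at the fixed price. Under the GS assumption, and in particular decreasing marginal values for units of a single type, a prefix of the demand bundle is the best bundle the agent can obtain within that quantity cap. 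Overstating demand cannot enlarge the cap and can only add units with negative marginal utility, while understating can only forfeit units with nonnegative marginal utility.

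The main obstacle is this rationing step. As written, Algorithm \ref{algo:5} compares against $\boldsymbol{p}^R$, which depends on the RSCZ agent's own report, whereas the payment is made at $\boldsymbol{p}^L$. To close the argument we would first reinterpret the threshold as the opposite-market price $\boldsymbol{p}^L$, as in the illustrative example. We would then state explicitly that the queue order is report-independent. With both conventions fixed, the agent's allocation depends on its report only through its own demand at a report-independent price, and truthfulness follows.
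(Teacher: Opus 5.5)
Your argument is sound for the scalar price and quantity that Algorithms~\ref{algo:3}--\ref{algo:5} actually manipulate, given the two conventions you fix for Algorithm~\ref{algo:5}. It takes a genuinely different route from the paper.

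\textbf{How the paper argues.} The paper notes that clustering and quality filtering never look at bids or asks. It then runs the standard single-parameter under/over-bidding case analysis. A requester with a fixed bundle $Y$ faces a uniform clearing price treated as a critical threshold. Underbidding either leaves allocation and payment unchanged or makes the requester lose. Overbidding either changes nothing or produces a win with $\boldsymbol{p}\cdot Y > v_i(Y)$. Executors are declared analogous. This template is short, but it has three weaknesses. It simply asserts that the payment does not move with the agent's own report. It never uses the random split. It never lets the agent choose how many units it trades.

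\textbf{What your route adds.} Your decomposition has three steps: the RSCZ agent is charged $\boldsymbol{p}^L$, which is computed only from LSCZ reports; demand-query optimality at a posted price; and rationing with a report-independent queue and concave values. This supplies exactly the justification the paper's key assertion lacks. It also handles the multi-unit quantity choice and the rationing branches of Algorithm~\ref{algo:5}. Finally, it shows that the $\boldsymbol{p}^R$ threshold there must be replaced by $\boldsymbol{p}^L$ for the claim to hold.

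\textbf{Points to tighten.}

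First, say explicitly that the loops of Algorithm~\ref{algo:3} range over LSCZ only. The text says this, although the pseudocode sums over all of $\boldsymbol{r}$ and $\boldsymbol{e}$. Also state that Phases~1 and~2 use only locations and peer preferences. You may then condition on the clusters and the quality set $W$ exactly as you condition on the split.

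Second, the agent's own report decides which branch of Algorithm~\ref{algo:5} runs, so ``a long-side agent receives a prefix'' is not enough by itself. Observe the following. With a fixed queue and serial filling, a requester with reported demand $x_i$ receives $\min\{x_i, \max(0, s^{\boldsymbol{e}(R)} - \sum_{j<i} x_j)\}$ units in all three branches. Here the sum runs over requesters ahead of it in the queue, and the cap is at least $x_i$ in the balanced and excess-supply branches. The same holds symmetrically for executors. So the cap is report-independent whichever branch occurs.

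Third, your prefix claim is a single-type statement. With $\chi$ types and a price vector, rationing one type can make a substitute type strictly preferable for a GS agent. Reporting demand for the substitute could then beat truthful reporting. Covering that model would need a rule that gives each agent its favourite bundle from a feasible set that does not depend on its own report.
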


\begin{proof}
The TRUST-SC mechanism consists of three major phases: (i) clustering of task executors (Algorithm~1), (ii) quality-based filtering of executors (Algorithm~2), and (iii) iterative double auction-based allocation and payment computation (Algorithms~3--5). Since the first two phases are independent of economic bids and asks, truthfulness depends entirely on the auction phase. Let us consider task requesters.\\
\indent Let $\boldsymbol{r}_i$ be a task requester who demands $Y$ units of completed tasks with true valuation $v_i(Y)$ and reported bid $\hat{v}_i(Y)$. Let $\boldsymbol{p}$ denote the per-unit payment determined by the mechanism. The utility of a requester $\boldsymbol{r}_i$ is given by $u_i^ {\boldsymbol{r}}$ as:

\begin{equation}
\label{equ:1}
u_i^ {\boldsymbol{r}} (Y, \boldsymbol{p}) = 
    \begin{cases}
        v_i(Y) - \boldsymbol{p} \cdot Y,~ if~\boldsymbol{r}_i~wins\\
        0,~ Otherwise
    \end{cases}
\end{equation}

The winner determination in TRUST-SC is performed iteratively by matching the highest bids with the lowest asks until no feasible trade exists. The payment is determined based on a critical threshold (uniform clearing price).

\paragraph{Case 1: Underbidding ($\hat{v}_i(Y) < v_i(Y)$)}

Let $\boldsymbol{r}_i$ misreport by lowering his bid. Two scenarios arise:

\emph{(i) Final iteration case:} If the $i$-th iteration is the final iteration where demand equals supply, then $\boldsymbol{r}_i$ remains a winner. The payment remains unchanged, and hence
\[
\hat{u}_i^ {\boldsymbol{r}} (Y, \boldsymbol{p}) = v_i(Y) - \boldsymbol{p} \cdot Y = u_i^ {\boldsymbol{r}} (Y, \boldsymbol{p}).
\]

\emph{(ii) Continuation case:} If the mechanism proceeds to the $(i+1)$-th iteration, the reduced bid may cause $\boldsymbol{r}_i$ to drop out of the allocation process. In this case, $\boldsymbol{r}_i$ becomes a loser and obtains zero utility. Thus, underbidding cannot increase utility and may strictly decrease it.

\paragraph{Case 2: Overbidding ($\hat{v}_i(Y) > v_i(Y)$)}

If $r_i^j$ overbids, two scenarios arise:

\emph{(i) No change in outcome:} If the requester would win even under truthful bidding, then the allocation and payment remain unchanged, and utility is unaffected.

\emph{(ii) Unprofitable winning:} Overbidding may cause $\boldsymbol{r}_i$ to win in a situation where the clearing price exceeds his true valuation, i.e., $\boldsymbol{p} \cdot Y > v_i(Y)$. In this case, the utility becomes
\[
\hat{u}_i^ {\boldsymbol{r}} (Y, \boldsymbol{p}) = v_i(Y) - \boldsymbol{p} \cdot Y < 0.
\]
which is strictly worse than not participating. Thus, overbidding cannot increase utility.

\medskip

Combining both cases, truthful reporting maximizes the utility of each task requester. In similar line, the proof can be carried out for task executors. \\

Since both task requesters and task executors maximize their utility by reporting truthfully, and no participant can gain by unilateral deviation, the TRUST-SC mechanism is truthful.
\end{proof}

\begin{lemma}
TRUST-SC is individually rational.
\end{lemma}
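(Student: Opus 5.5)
The plan is to prove individual rationality by checking each winner's utility separately. The argument splits into the task-requester side and the task-executor side. It uses only the structure of Algorithms~\ref{algo:3}--\ref{algo:5}. The key point is that each agent in one sub-zone (say RSCZ) trades at the price $\boldsymbol{p}^L$ computed from the \emph{other} sub-zone. That price is therefore fixed independently of the agent's own report. Algorithms~\ref{algo:1} and \ref{algo:2} involve no payments. A loser in any of the three phases receives the outcome $\phi$, so by equations \eqref{equ:1} and \eqref{equ:2} its utility is exactly $0$. This settles every non-winner at once.

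For a winning requester $\boldsymbol{r}_i \in \boldsymbol{r}^W$ in RSCZ, I will first observe that the requester is admitted only if its demand $d_i^{\boldsymbol{r}}(\boldsymbol{p}^L)$ computed in Algorithm~\ref{algo:4} is positive. That demand is an $\argmax$ of $u_i^{\boldsymbol{r}}(Y,\boldsymbol{p}^L)$ over bundles that include $\phi$. Hence
\[
v_i(d_i^{\boldsymbol{r}}(\boldsymbol{p}^L)) - \boldsymbol{p}^L\cdot d_i^{\boldsymbol{r}}(\boldsymbol{p}^L) \;\ge\; v_i(\phi) - \boldsymbol{p}^L\cdot\phi = 0 .
\]
In the cases $d^{\boldsymbol{r}(R)} \le s^{\boldsymbol{e}(R)}$, every requester receives its full demand, and the inequality above is exactly the claim. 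In the excess-demand case, a requester may be rationed. Here I will use the per-unit admission test in line 13 of Algorithm~\ref{algo:5}, which admits a task unit only when its value exceeds the price. Summing over the admitted units then gives non-negative utility. I will invoke the GS assumption (Definition~\ref{def:gs1}) here so that the marginal value of each admitted unit is well defined and a sub-bundle of the demanded bundle keeps non-negative utility.

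The executor side is symmetric. A winning executor $\boldsymbol{e}_j$ has positive supply $s_j^{\boldsymbol{e}}(\boldsymbol{p}^L)$, and that supply maximizes \eqref{equ:2}, whose value at $\phi$ is $0$. So the payment $\boldsymbol{p}^L\cdot Y$ covers the cost $v_j(\boldsymbol{e}_j - Y) - v_j(\boldsymbol{e}_j)$. In the excess-supply case, line 26 admits only units whose cost is below the price, so each unit contributes a non-negative term. The same arguments apply to LSCZ with $\boldsymbol{p}^R$.

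The main obstacle is the rationing step. The algorithm's threshold tests compare against $\boldsymbol{p}^R$, while the trade price returned is $\boldsymbol{p}^L$. I will first try to read these as the same opposite-zone price, which the illustrative example supports, since trade there happens at $9=\boldsymbol{p}^L$. I will then argue unit by unit using the decreasing marginal values implied by GS, so that any prefix of a utility-maximizing bundle still has non-negative utility.
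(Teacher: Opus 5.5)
Your proposal is correct and takes essentially the same route as the paper. The paper argues that at the announced price an agent trades only if $v_i(Y)\ge \boldsymbol{p}\cdot Y$ and otherwise gets utility $0$, which is exactly your point that the demanded (or supplied) bundle is an $\argmax$ over a set containing $\phi$. Your extra handling of rationing in the excess-demand and excess-supply cases fills in steps the paper's sketch skips. That step relies on sub-bundles of a demanded bundle keeping non-negative utility, which needs submodularity, i.e.\ GS applied unit by unit. Your reading of the $\boldsymbol{p}^R$ threshold in Algorithm~\ref{algo:5} as the opposite-zone price $\boldsymbol{p}^L$ is a similar gap-filling step.
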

\begin{proof}
In the TRUST-SC mechanism, after clustering (Algorithm~1) and quality-based filtering (Algorithm~2), the allocation and pricing are determined in the auction phase (Algorithms~3--5). At each iteration, a price $\boldsymbol{p}$ is announced, and agents are queried for their willingness to trade at this price. Consider a task requester $\boldsymbol{r}_i$ demanding $Y$ units of tasks with valuation $v_i(Y)$. The requester agrees to participate if and only if $v_i(Y) \ge \boldsymbol{p} \cdot Y$. If $v_i(Y) \ge \boldsymbol{p} \cdot Y$, then $\boldsymbol{r}_i$ participates in the trade and pays $\boldsymbol{p} \cdot Y$. Hence, the utility is
\[
u_i^{\boldsymbol{r}} (Y, \boldsymbol{p}) = v_i(Y) - \boldsymbol{p} \cdot Y \geq 0.
\]

If $v_i(Y) < \boldsymbol{p}$, then the requester declines participation and exits the market. In this case, the utility is
\[
u_i^{\boldsymbol{r}} (Y, \boldsymbol{p}) = 0.
\]

Thus, in all cases, the utility of each task requester is non-negative. A similar argument can be given for the task executors.\\

Since both task requesters and task executors obtain non-negative utility under all circumstances, the TRUST-SC mechanism satisfies individual rationality. 
\end{proof}

\begin{lemma}
Under standard boundedness assumptions, this random splitting preserves market efficiency approximately with high probability.
\end{lemma}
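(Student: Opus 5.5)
We would prove the statement by reducing it to a concentration argument on the random split performed in line~2 of Algorithm~\ref{algo:3}, in the spirit of the random market-halving analysis of \cite{ijcai2018-68, SegalHalevi2018MUDAAT}.

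\textbf{Step 1: fix the assumptions and the benchmark.} We make the ``standard boundedness assumptions'' explicit. Valuations are gross substitutes (Definition~\ref{def:gs1}) and bounded, say $v_i(Y)\le V_{\max}$. Each agent trades at most $\chi\beta$ units, and $\chi,\beta$ are constants. Let $k$ denote the number of units traded in the optimal (Walrasian) allocation of the whole cluster market, with optimal gain-from-trade $\mathrm{GFT}^*$. We fix a Walrasian price vector $\boldsymbol{p}^*$ of the full market, which exists under GS valuations. The target statement is
\[
\mathbb{E}[\mathrm{GFT}(\text{TRUST-SC})]\ \ge\ \Bigl(1-O\bigl(\chi\beta\sqrt{\ln k/k}\bigr)\Bigr)\,\mathrm{GFT}^*,
\]
holding with probability at least $1-O(1/k)$.

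\textbf{Step 2: concentration of demand and supply in each half.} Every agent lands in LSCZ independently with probability $1/2$. Hence, for any fixed price $\boldsymbol{p}$, the left demand $d^{\boldsymbol{r}(L)}(\boldsymbol{p})$ is a sum of independent terms, each bounded by $\chi\beta$, with mean $d^{\boldsymbol{r}}(\boldsymbol{p})/2$. The same holds for the left supply. Hoeffding's inequality gives deviations of at most $\chi\beta\sqrt{k\ln k}$ with probability $1-1/k^2$. This bound must hold uniformly over prices. Using GS, demand is monotone in each coordinate, so it suffices to union-bound over the $O(\mathrm{poly}(k))$ breakpoints of the aggregate demand and supply functions. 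The consequence is that the left equilibrium price $\boldsymbol{p}^L$ lies in a price band around $\boldsymbol{p}^*$. Within that band, the full-market excess demand is $O(\chi\beta\sqrt{k\ln k})$ units per type. The symmetric statement holds for $\boldsymbol{p}^R$.

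\textbf{Step 3: efficiency loss in the right market.} We apply $\boldsymbol{p}^L$ to RSCZ via Algorithms~\ref{algo:4}--\ref{algo:5}. By Step~2, the right demand and right supply at $\boldsymbol{p}^L$ each differ from $k/2$ by at most $O(\chi\beta\sqrt{k\ln k})$. The rationing in Algorithm~\ref{algo:5} therefore discards only that many units. Any unit that trades at $\boldsymbol{p}^L$ but not at $\boldsymbol{p}^*$, or conversely, has value close to the price, so the loss per such unit is bounded. Moreover, every unit discarded by rationing is at most as valuable as a retained unit, up to the price band. Summing, the loss on each side is $O(\chi\beta\sqrt{k\ln k})$ times the average per-unit gain. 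Adding both halves and dividing by $\mathrm{GFT}^*$ yields the ratio claimed in Step~1. The low-probability failure event contributes at most $\mathrm{GFT}^*/k$ to the expectation.

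\textbf{Main obstacle.} The hardest step is the uniform concentration in Step~2 for multi-type prices. Algorithm~\ref{algo:3} raises a single scalar $\boldsymbol{p}$ by $\epsilon$, so we would first treat $\chi=1$ and then extend coordinatewise using the GS monotonicity. We must also handle the per-unit loss when the discarded units are low-value marginal units rather than average ones. For that, we would try the ``marginal units are near-price'' argument of \cite{SegalHalevi2018MUDAAT}. If it fails, we would instead settle for an additive loss bound of $O(\chi\beta V_{\max}\sqrt{k\ln k})$.
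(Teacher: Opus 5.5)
\textbf{Comparison with the paper.} Your route is genuinely different from the paper's and aims much higher. The paper never touches gain-from-trade, Walrasian prices, or the data-dependent price $\boldsymbol{p}^L$. It fixes a price $\boldsymbol{p}$ and takes ``efficiency'' to mean the feasible trade volume $W(\boldsymbol{p})=\min\{d^{\boldsymbol{r}}(\boldsymbol{p}),s^{\boldsymbol{e}}(\boldsymbol{p})\}$. It notes $\mathbb{E}[\Gamma_L(\boldsymbol{p})]=\Gamma(\boldsymbol{p})/2$, applies Hoeffding with per-agent range $\chi\beta$ over all $n+m$ agents, and concludes $W_L(\boldsymbol{p})+W_R(\boldsymbol{p})\ge W(\boldsymbol{p})-2t$ with probability at least $1-4\exp\bigl(-2t^2/((n+m)(\chi\beta)^2)\bigr)$. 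The fixed-price part of your Step~2 is exactly this ingredient.

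What the paper's formulation buys is simplicity. Since $\Gamma_L+\Gamma_R=\Gamma$ and $\min\{a,b\}=\frac{a+b}{2}-\frac{|a-b|}{2}$, one gets $W-(W_L+W_R)\le|\Gamma_L-\Gamma/2|$ from a single fixed-price Hoeffding event, with no uniformity over prices. What it gives up is any statement about the welfare TRUST-SC actually realizes, since RSCZ trades at the random price $\boldsymbol{p}^L$ with rationing. Your MUDA-style plan targets exactly that. But then every extra step must hold for Algorithms~\ref{algo:3}--\ref{algo:5} as specified, and one does not.

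\textbf{The gap: rationing in Step~3.} The failing step is your claim that every unit discarded by rationing is at most as valuable as a retained unit. Algorithm~\ref{algo:5} rations by queue (arrival) order. Agents on the long side are dequeued in enqueue order and admitted until the short side is exhausted. Valuations enter only as a threshold test, never as a ranking, so the discarded units can be the highest-gain ones.

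Take $\chi=\beta=1$ and an instance where one requester carries most of $\mathrm{GFT}^\ast$ and sits last in the arrival order. That gain is lost whenever its half is long on demand at the opposite price. Nothing in your argument makes that event rare: the right-side imbalance at $\boldsymbol{p}^L$ is of order $\sqrt{k}$, with either sign. Hence the ratio $1-O(\chi\beta\sqrt{\ln k/k})$ does not follow for TRUST-SC.

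To get a multiplicative bound you would have to change the rationing rule. Uniform-lottery rationing makes discarded units carry the average per-unit gain in expectation. Value-ordered rationing gives your ``at most as valuable'' property. Otherwise, what you can actually prove is your additive fallback $\mathrm{GFT}\ge\mathrm{GFT}^\ast-O(\chi\beta V_{\max}\sqrt{k\ln k})$, and that should be the stated target.

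\textbf{Three smaller points.}

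First, Hoeffding over all agents, as in the paper, gives deviations of order $\chi\beta\sqrt{(n+m)\ln k}$, not $\chi\beta\sqrt{k\ln k}$. To get the latter, restrict to the $O(k)$ agents with nonzero demand or supply at prices in the band. Then use monotonicity of $d^{\boldsymbol{r}}$ and $s^{\boldsymbol{e}}$, so that ruling out $\boldsymbol{p}^L$ outside the band needs concentration only at the two band edges.

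Second, units that trade at one of $\boldsymbol{p}^\ast,\boldsymbol{p}^L$ but not the other are not cheap because their values are near the price. A unit's loss is its value minus its partner's cost. For $\chi=1$ the correct reason is sorting. Let $b_{(1)}\ge b_{(2)}\ge\cdots$ be the requester unit values and $c_{(1)}\le c_{(2)}\le\cdots$ the executor unit costs. Then $b_{(j)}-c_{(j)}$ is non-increasing in $j$, so the $O(\sqrt{k\ln k})$ marginal units carry at most the average gain $\mathrm{GFT}^\ast/k$.

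Third, for $\chi>1$, Algorithm~\ref{algo:3} ascends a single scalar price. So $\boldsymbol{p}^L$ is not a Walrasian price vector, and there is no multi-dimensional band to place it in; ``extend coordinatewise via GS'' does not supply one. The paper sidesteps all of this by treating demand and supply as scalar unit counts at a fixed $\boldsymbol{p}$.
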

\begin{proof}
Let

\[
d^ {\boldsymbol{r}} (\boldsymbol{p}) = \displaystyle\sum_{i = 1}^{n}  d_i^ {\boldsymbol{r}} (\boldsymbol{p}) \qquad \text{and} \qquad s^ {\boldsymbol{e}} (\boldsymbol{p}) = \displaystyle\sum_{i = 1}^{m}  s_i^ {\boldsymbol{e}} (\boldsymbol{p})
\]
denote the total demand and total supply in the full market at price $\boldsymbol{p}$, where $d_i^ {\boldsymbol{r}} (\boldsymbol{p})$ is the demand of requester $r_i$ and $s_i^ {\boldsymbol{e}} (\boldsymbol{p})$ is the supply of executor $e_j$. Let
\[
d^{\boldsymbol{r}(L)}(\boldsymbol{p}),\;s^{\boldsymbol{e}(L)}(\boldsymbol{p}),\;d^{\boldsymbol{r}(R)}(\boldsymbol{p}),\;s^{\boldsymbol{e}(R)}(\boldsymbol{p}),\;
%D_L(p),\; S_L(p),\; D_R(p),\; S_R(p)
\]
denote the corresponding quantities in LSCZ and RSCZ, respectively. Define the market imbalance at price $\boldsymbol{p}$ as
\[
\Gamma(\boldsymbol{p}) = d^{\boldsymbol{r}}(\boldsymbol{p})-s^{\boldsymbol{e}}(\boldsymbol{p})
%D(p)-S(p),
\]
and the imbalance in LSCZ and RSCZ as
\[
\Gamma_L(\boldsymbol{p})=d^{\boldsymbol{r}(L)}(\boldsymbol{p})-s^{\boldsymbol{e}(L)}(\boldsymbol{p}),  \qquad \Gamma_R(\boldsymbol{p})= d^{\boldsymbol{r}(R)}(\boldsymbol{p})-s^{\boldsymbol{e}(R)}(\boldsymbol{p})
%D_L(p)-S_L(p), \qquad \Gamma_R(p)=D_R(p)-S_R(p).
\]

Since each requester and executor is independently assigned to LSCZ with probability $1/2$, we have
\[
\mathbb{E}[\Gamma_L(\boldsymbol{p})] = \frac{\Gamma(\boldsymbol{p})}{2}, \qquad \mathbb{E}[\Gamma_R(\boldsymbol{p})] = \frac{\Gamma(\boldsymbol{p})}{2}.
\]

Assume that, for all requesters and executors,
\[
0 \le d_i^ {\boldsymbol{r}} (\boldsymbol{p}) \le \chi \cdot \beta, \qquad 0 \le s_i^ {\boldsymbol{e}} (\boldsymbol{p}) \le \chi \cdot \beta.
\]
Then the following approximation result holds.
\end{proof}
\begin{lemma}
For any fixed price $\boldsymbol{p}$ and any $t>0$,
\[
\Pr\!\left(\left|\Gamma_L(\boldsymbol{p})-\frac{\Gamma(\boldsymbol{p})}{2}\right| \ge t\right) \le 2\exp\!\left(-\frac{2t^2}{n (\chi \cdot \beta)^2 + m (\chi \cdot \beta)^2} \right).
\]
An identical bound holds for $\Gamma_R(\boldsymbol{p})$.
\end{lemma}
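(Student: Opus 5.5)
We write $\Gamma_L(\boldsymbol{p})$ as a sum of independent bounded random variables and apply Hoeffding's inequality. For each requester $\boldsymbol{r}_i$ let $X_i \in \{0,1\}$ be the indicator that $\boldsymbol{r}_i$ is assigned to LSCZ in line 2 of Algorithm \ref{algo:3}. Similarly, let $Z_j \in \{0,1\}$ be the indicator for executor $\boldsymbol{e}_j$. By construction these $n+m$ indicators are mutually independent Bernoulli$(1/2)$ variables. Since $\boldsymbol{p}$ is fixed and the individual demands $d_i^{\boldsymbol{r}}(\boldsymbol{p})$ and supplies $s_j^{\boldsymbol{e}}(\boldsymbol{p})$ are deterministic functions of each agent's own valuation and $\boldsymbol{p}$, we have
\[
\Gamma_L(\boldsymbol{p}) = \sum_{i=1}^{n} X_i\, d_i^{\boldsymbol{r}}(\boldsymbol{p}) \;-\; \sum_{j=1}^{m} Z_j\, s_j^{\boldsymbol{e}}(\boldsymbol{p}).
\]
This is a sum of $n+m$ independent terms. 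Here we use the scalar (aggregate unit count) interpretation of demand and supply used throughout Algorithms \ref{algo:3}--\ref{algo:4}. The representation also recovers $\mathbb{E}[\Gamma_L(\boldsymbol{p})] = \Gamma(\boldsymbol{p})/2$ from the preceding lemma.

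Next we record the range of each summand under the boundedness assumption $0 \le d_i^{\boldsymbol{r}}(\boldsymbol{p}),\, s_j^{\boldsymbol{e}}(\boldsymbol{p}) \le \chi\cdot\beta$ stated just before the lemma. The term $X_i d_i^{\boldsymbol{r}}(\boldsymbol{p})$ lies in $[0, d_i^{\boldsymbol{r}}(\boldsymbol{p})] \subseteq [0,\chi\beta]$, and $-Z_j s_j^{\boldsymbol{e}}(\boldsymbol{p})$ lies in $[-\chi\beta, 0]$. So every summand ranges over an interval of length at most $c_k = \chi\cdot\beta$, and
\[
\sum_k c_k^2 \le n(\chi\cdot\beta)^2 + m(\chi\cdot\beta)^2 .
\]
Hoeffding's inequality for independent bounded summands then gives
\[
\Pr\!\left(\left|\Gamma_L(\boldsymbol{p}) - \mathbb{E}[\Gamma_L(\boldsymbol{p})]\right| \ge t\right) \le 2\exp\!\left(-\frac{2t^2}{\sum_k c_k^2}\right).
\]
Substituting the bound on $\sum_k c_k^2$ and $\mathbb{E}[\Gamma_L(\boldsymbol{p})]=\Gamma(\boldsymbol{p})/2$ yields exactly the claimed inequality. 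An equivalent route is McDiarmid's bounded-differences inequality: viewing $\Gamma_L$ as a function of the $n+m$ independent coin flips, flipping one agent's side changes $\Gamma_L$ by at most $\chi\beta$.

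For $\Gamma_R(\boldsymbol{p})$ we repeat the argument with $1-X_i$ and $1-Z_j$. These are again independent Bernoulli$(1/2)$ variables with the same ranges. Alternatively, note that $\Gamma_R(\boldsymbol{p}) = \Gamma(\boldsymbol{p}) - \Gamma_L(\boldsymbol{p})$, so
\[
\Gamma_R(\boldsymbol{p}) - \tfrac{\Gamma(\boldsymbol{p})}{2} = -\left(\Gamma_L(\boldsymbol{p}) - \tfrac{\Gamma(\boldsymbol{p})}{2}\right),
\]
and the two deviation events coincide.

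The only delicate point is the setup, not the concentration step. We must make sure the summands really are independent and bounded at a fixed $\boldsymbol{p}$. This requires that $\boldsymbol{p}$ is chosen independently of the split, which is why the statement is for a fixed price rather than for the data-dependent $\boldsymbol{p}^L$. It also requires treating each $d_i^{\boldsymbol{r}}(\boldsymbol{p})$ as a scalar in $[0,\chi\beta]$, resolving any $\argmax$ ties by a fixed rule so that it is well defined. If vector-valued demands are intended, the same argument applies coordinatewise to each task type, with range $\beta$ per coordinate, and the stated bound remains valid since $\beta \le \chi\beta$.
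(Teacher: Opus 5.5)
Your proof is correct and follows the same route as the paper: you write $\Gamma_L(\boldsymbol{p}) = \sum_i X_i d_i^{\boldsymbol{r}}(\boldsymbol{p}) - \sum_j Z_j s_j^{\boldsymbol{e}}(\boldsymbol{p})$ with independent Bernoulli$(1/2)$ indicators and apply Hoeffding's inequality. You also supply details the paper leaves implicit: the explicit per-summand ranges (each of length at most $\chi\cdot\beta$), and the identity $\Gamma_R(\boldsymbol{p}) = \Gamma(\boldsymbol{p}) - \Gamma_L(\boldsymbol{p})$, which shows that the deviation events for the two sub-zones are the same event.
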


\begin{proof}
The quantity $\Gamma_L(\boldsymbol{p})$ can be written as
\[
\Gamma_L(\boldsymbol{p})=\sum_{i=1}^{n} X_i d_i^ {\boldsymbol{r}} (\boldsymbol{p}) - \sum_{j=1}^{m} Y_j s_i^ {\boldsymbol{e}} (\boldsymbol{p}),
\]
where $X_i$ and $Y_j$ are independent Bernoulli random variables with parameter $1/2$, representing whether requester $r_i$ and executor $\boldsymbol{e}_j$ are assigned to LSCZ. Therefore,
\[
\mathbb{E}[X_id_i^ {\boldsymbol{r}} (\boldsymbol{p})] = \frac{d_i^ {\boldsymbol{r}} (\boldsymbol{p})}{2}, \qquad \mathbb{E}[Y_js_i^ {\boldsymbol{e}} (\boldsymbol{p})] = \frac{s_i^ {\boldsymbol{e}} (\boldsymbol{p})}{2},
\]
which implies
\[
\mathbb{E}[\Gamma_L(\boldsymbol{p})] = \frac{\Gamma(\boldsymbol{p})}{2}.
\]
Since each term is bounded, Hoeffding's inequality yields the stated concentration bound.
\end{proof}
The above lemma shows that the imbalance observed in a split market is well concentrated around half of the imbalance in the original market. Therefore, the equilibrium price computed in one sub-market is a reliable estimator of the market state in the other sub-market. To connect this to efficiency, let $W(\boldsymbol{p})$ denote the total feasible trade volume at price $\boldsymbol{p}$, defined as
\[
W(\boldsymbol{p})=\min\{d^{\boldsymbol{r}}(\boldsymbol{p}), s^{\boldsymbol{e}}(\boldsymbol{p})\},
\]
and similarly define
\[
W_L(\boldsymbol{p})=\min\{d^{\boldsymbol{r}(L)}(\boldsymbol{p}),\;s^{\boldsymbol{e}(L)}(\boldsymbol{p})\}, \qquad W_R(\boldsymbol{p})=\min\{d^{\boldsymbol{r}(R)}(\boldsymbol{p}),\;s^{\boldsymbol{e}(R)}(\boldsymbol{p})\}.
\]
Then, by linearity of expectation,
\[
\mathbb{E}[d^{\boldsymbol{r}(L)}(\boldsymbol{p})] = \frac{d^{\boldsymbol{r}}(\boldsymbol{p})}{2}, \qquad \mathbb{E}[s^{\boldsymbol{e}(L)}(\boldsymbol{p})] = \frac{s^{\boldsymbol{e}}(\boldsymbol{p})}{2},
\]
and similarly for RSCZ. Hence, in expectation,
\[
\mathbb{E}[W_L(\boldsymbol{p})+W_R(\boldsymbol{p})] \approx W(\boldsymbol{p}),
\]
up to the loss induced by imbalance fluctuations due to random splitting.

\begin{theorem}[Approximate Market Efficiency under Random Splitting] Suppose the demand and supply contributions of all agents are bounded as above. Then, for any fixed price $\boldsymbol{p}$, the total feasible trade realized by the two split markets satisfies
\[
W_L(\boldsymbol{p})+W_R(\boldsymbol{p}) \ge W(\boldsymbol{p})-2t
\]
with probability at least
\[
1-4\exp\!\left(-\frac{2t^2}{n (\chi \cdot \beta)^2 + m (\chi \cdot \beta)^2} \right).
\]
\end{theorem}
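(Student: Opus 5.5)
The plan is to avoid working with the imbalance $\Gamma_L(\boldsymbol{p})$ directly, and instead to control the two one-sided aggregates $d^{\boldsymbol{r}(L)}(\boldsymbol{p})$ and $s^{\boldsymbol{e}(L)}(\boldsymbol{p})$ separately. Two facts do most of the work. First, every agent lands in exactly one zone, so
\[
d^{\boldsymbol{r}(R)}(\boldsymbol{p}) = d^{\boldsymbol{r}}(\boldsymbol{p}) - d^{\boldsymbol{r}(L)}(\boldsymbol{p}), \qquad s^{\boldsymbol{e}(R)}(\boldsymbol{p}) = s^{\boldsymbol{e}}(\boldsymbol{p}) - s^{\boldsymbol{e}(L)}(\boldsymbol{p}).
\]
Consequently $|d^{\boldsymbol{r}(R)}(\boldsymbol{p}) - d^{\boldsymbol{r}}(\boldsymbol{p})/2| = |d^{\boldsymbol{r}(L)}(\boldsymbol{p}) - d^{\boldsymbol{r}}(\boldsymbol{p})/2|$, and the same identity holds for supply. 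So the right zone needs no separate concentration argument. Second, $\min$ is $1$-Lipschitz in the sup norm:
\[
\min\{x,y\} \ge \min\{x',y'\} - \max\{|x-x'|,|y-y'|\}.
\]

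\textbf{Step 1 (concentration).} Write $d^{\boldsymbol{r}(L)}(\boldsymbol{p}) = \sum_{i=1}^n X_i d_i^{\boldsymbol{r}}(\boldsymbol{p})$, with the $X_i$ independent Bernoulli$(1/2)$ exactly as in the proof of the preceding lemma. Each summand lies in an interval of length at most $\chi\cdot\beta$. Hoeffding's inequality then gives
\[
\Pr\bigl(|d^{\boldsymbol{r}(L)}(\boldsymbol{p}) - d^{\boldsymbol{r}}(\boldsymbol{p})/2| \ge t\bigr) \le 2\exp\bigl(-2t^2/(n(\chi\cdot\beta)^2)\bigr).
\]
The identical argument with the $Y_j$ gives the supply bound with $m$ in place of $n$. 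Both denominators are at most $n(\chi\cdot\beta)^2 + m(\chi\cdot\beta)^2$, so each bound is at most $2\exp\!\left(-\frac{2t^2}{n(\chi\cdot\beta)^2+m(\chi\cdot\beta)^2}\right)$.

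\textbf{Step 2 (union bound).} By the union bound, with probability at least $1-4\exp\!\left(-\frac{2t^2}{n(\chi\cdot\beta)^2+m(\chi\cdot\beta)^2}\right)$ both deviations are below $t$. By the complement identity above, all four zone quantities are then within $t$ of half their full-market counterparts.

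\textbf{Step 3 (deterministic inequality).} On this good event, the Lipschitz property gives
\[
W_L(\boldsymbol{p}) \ge \min\{d^{\boldsymbol{r}}(\boldsymbol{p})/2,\, s^{\boldsymbol{e}}(\boldsymbol{p})/2\} - t = W(\boldsymbol{p})/2 - t,
\]
and likewise $W_R(\boldsymbol{p}) \ge W(\boldsymbol{p})/2 - t$. Adding the two inequalities yields $W_L(\boldsymbol{p})+W_R(\boldsymbol{p}) \ge W(\boldsymbol{p}) - 2t$.

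The only delicate point is Step 1. The earlier lemma controls only the difference $\Gamma_L(\boldsymbol{p})$, and that is not enough to bound a minimum of demand and supply. This is why I re-run Hoeffding on the demand and supply sums separately rather than invoking that lemma as a black box. The range bound $0 \le d_i^{\boldsymbol{r}}(\boldsymbol{p}), s_i^{\boldsymbol{e}}(\boldsymbol{p}) \le \chi\cdot\beta$ assumed in the theorem is exactly what Hoeffding's inequality needs.
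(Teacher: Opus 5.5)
Your argument is correct and complete, but it takes a different route from the paper. The paper applies the preceding lemma to the imbalances $\Gamma_L(\boldsymbol{p})$ and $\Gamma_R(\boldsymbol{p})$. It then asserts, without detail, that the trade lost to splitting is controlled by the imbalance deviation, and union-bounds over the two zones to get the factor $4$. You instead concentrate $d^{\boldsymbol{r}(L)}(\boldsymbol{p})$ and $s^{\boldsymbol{e}(L)}(\boldsymbol{p})$ separately, transfer to the right zone via the complement identity, and finish with the $1$-Lipschitz property of $\min$.

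Your remark that the imbalance lemma ``is not enough'' holds for each $W_L(\boldsymbol{p})$ on its own, but not for the sum. Since $d^{\boldsymbol{r}(L)}+d^{\boldsymbol{r}(R)}=d^{\boldsymbol{r}}$ and $s^{\boldsymbol{e}(L)}+s^{\boldsymbol{e}(R)}=s^{\boldsymbol{e}}$ deterministically, the identity $\min\{x,y\}=\tfrac12(x+y-|x-y|)$ gives
\[
W_L(\boldsymbol{p})+W_R(\boldsymbol{p})-W(\boldsymbol{p})=\tfrac12\bigl(|\Gamma(\boldsymbol{p})|-|\Gamma_L(\boldsymbol{p})|-|\Gamma_R(\boldsymbol{p})|\bigr).
\]
Because $\Gamma_R=\Gamma-\Gamma_L$, the single event $|\Gamma_L(\boldsymbol{p})-\Gamma(\boldsymbol{p})/2|<t$ already forces $W_L+W_R\ge W-t$. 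Its failure probability is at most $2\exp\!\left(-\frac{2t^2}{(n+m)(\chi\cdot\beta)^2}\right)$. This is the rigorous form of the paper's sketch, and it is sharper than the stated theorem. The paper's union bound over $\Gamma_L$ and $\Gamma_R$ is redundant, since the two deviation events coincide.

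What your route buys in return is the per-zone guarantee $W_L(\boldsymbol{p}),\,W_R(\boldsymbol{p})\ge W(\boldsymbol{p})/2-t$. Imbalance concentration alone cannot give this, because both sides of one zone can shift down together with $\Gamma_L$ unchanged. You also get per-side Hoeffding exponents with the smaller denominators $n(\chi\cdot\beta)^2$ and $m(\chi\cdot\beta)^2$, which you then loosen to match the statement.
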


\begin{proof}
By the concentration bound in the previous lemma, both $\Gamma_L(\boldsymbol{p})$ and $\Gamma_R(\boldsymbol{p})$ deviate from their expectations by at most $t$ with high probability. Since the realized trade volume in each sub-market is limited by the mismatch between demand and supply, the loss in feasible trade due to random splitting is at most proportional to the imbalance deviation. Summing the losses over the two sub-markets yields the bound
\[
W_L(\boldsymbol{p})+W_R(\boldsymbol{p}) \ge W(\boldsymbol{p})-2t.
\]
Applying the union bound over the two sub-markets completes the proof.
\end{proof}

The theorem above implies that the efficiency loss caused by the random splitting step is small with high probability. In particular, as the market size grows, the deviation term becomes negligible relative to the total market volume, showing that TRUST-SC preserves approximate market efficiency while benefiting from the truthfulness and robustness advantages of split-market price determination.

\begin{theorem}
In each cluster $\chi_i$, the expected number of quality task executors determined from among the available task executors, say, $m_i$, is given as $\mathbb{E}[|W|] \approx \frac{m_i}{f_i}$, where $f_i$ is the number of task executors selected for evaluating the other task executors in $\chi_i$ cluster.
\end{theorem}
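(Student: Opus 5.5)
The count of $|W|$ follows from the structure of Algorithm~\ref{algo:2}, run inside cluster $\xi_i$ with $m_i$ executors. Here $f_i$ is the batch size $f$ used in line~3. Strictly, that is the number of executors \emph{being evaluated} per round, not the number of voters $g$. The plan has three steps: show each round of the outer while loop adds exactly one executor to $W$; count the number of rounds; then pass from the exact count to the expectation.

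\textbf{Step 1 (one winner per round).} I would check that every pass through lines 3--27 adds exactly one element to $W$. Lines 11--14 give each of the $g \ge 1$ voters one vote, so $M \ge 1$ and $T \neq \emptyset$. The tie-breaking loop (lines 17--24) terminates because each voter's preference list over $\eta$ is finite. Line~25 then picks a single $\arg\max$ under a fixed deterministic tie-break. So line~26 adds exactly one executor, and it comes from the current $\eta$. Since line~27 removes $\eta$ from $\boldsymbol{e}'$, the executors chosen in different rounds come from disjoint batches and are distinct. Hence $|W|$ equals the number of rounds $R$.

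\textbf{Step 2 (counting rounds).} The set $\boldsymbol{e}'$ starts with $m_i$ elements. Each round removes $\min(f_i, |\boldsymbol{e}'|)$ of them, taking the natural reading of \textsc{Rand Select} when fewer than $f_i$ remain. Therefore $R = \lceil m_i/f_i \rceil$ deterministically. The randomness of \textsc{Rand Select} and \textsc{Rand preference} affects \emph{which} executors win, but not \emph{how many}.

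\textbf{Step 3 (expectation and approximation).} From Steps 1--2,
\[
\mathbb{E}[|W|] = \left\lceil \frac{m_i}{f_i}\right\rceil, \qquad \frac{m_i}{f_i} \le \mathbb{E}[|W|] < \frac{m_i}{f_i}+1.
\]
This gives $\mathbb{E}[|W|] \approx m_i/f_i$, with equality when $f_i \mid m_i$ and relative error $O(f_i/m_i)$ as $m_i$ grows.

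\textbf{Main obstacle.} The hard part is interpretive rather than technical. I would make explicit that $f_i$ must be read as the size of the evaluated batch $\eta$, since the statement's wording ("selected for evaluating the other task executors") could suggest the voter count $g$. I would also state the assumption on the last, partial batch. If one instead models the final batch size as random, the same sandwich bound still gives the approximation. I would add a remark noting this.
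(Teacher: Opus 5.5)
Your proposal is correct and follows the paper's proof. Each pass of the outer loop yields exactly one winner and removes the batch $\eta$ of size $f_i$, so there are $\lceil m_i/f_i\rceil$ passes and $\mathbb{E}[|W|]=\lceil m_i/f_i\rceil\approx m_i/f_i$; the paper also reads $f_i$ as the size of the evaluated batch $\eta$, not the voter count $g$. Your Step~1 is more careful than the paper's bare assertion, but the claim that the winner comes from the current $\eta$ requires $T$ to be cleared at the start of each round, which the literal pseudocode does not do (line~1 initializes $T$ once and lines~16 and~23 only enlarge it), so you rely on the intended per-round semantics, exactly as the paper does.
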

\begin{proof}
Fix cluster $\chi_i$. In Algorithm \ref{algo:2}, during each iteration \emph{while} loop in lines 2-28, a subset $\eta \subseteq e'$ of size $f_i$ is selected for evaluation, and exactly one executor is chosen as the winner of that round through the majority voting rule. After that, the evaluated set $\eta$ is removed from further consideration. Therefore, each iteration reduces the number of remaining executors by $f$ and contributes exactly one quality task executor to the winning set $W$.

If $m_i$ is divisible by $f_i$, then the number of iterations is exactly
\[
\frac{m_i}{f_i},
\]
and hence the number of quality task executors selected is
\[
|W| = \frac{m_i}{f_i}.
\]

If $m_i$ is not divisible by $f_i$, then Algorithm \ref{algo:2} terminates after
\[
\left\lceil \frac{m_i}{f_i} \right\rceil
\]
iterations. Since one quality executor is selected in each iteration, the total number of quality task executors is
\[
|W| = \left\lceil \frac{m_i}{f_i} \right\rceil.
\]

Thus, in expectation, the number of quality task executors selected from the available task executors is given by
\[
\mathbb{E}[|W|] = \left\lceil \frac{m_i}{f_i} \right\rceil,
\]
or equivalently,
\[
\mathbb{E}[|W|] \approx \frac{m_i}{f_i}
\]
for large $m_i$.
\end{proof}

\subsection{Task Success Rate Analysis}

Let $T$ denote the set of assigned tasks, where $|T|=N$. For each task $i$, define an indicator random variable
\[
X_i =
\begin{cases}
1, & \text{if task } i \text{ is successfully completed},\\
0, & \text{otherwise}.
\end{cases}
\]

Then, the total number of successfully completed tasks is
\[
X = \sum_{i=1}^{N} X_i.
\]

Accordingly, the task success rate is defined as
\[
\text{TSR} = \frac{X}{N}\times 100.
\]

\begin{theorem}
Let $N$ be the total number of assigned tasks, and suppose that each selected quality executor completes an assigned task successfully with probability at least $p_q$. Then, the expected task success rate of the proposed mechanism satisfies
\[
\mathbb{E}[\text{TSR}] \ge p_q \times 100.
\]
\end{theorem}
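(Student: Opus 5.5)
The argument is linearity of expectation applied to the indicator decomposition set up just before the statement. I will write $\mathrm{TSR} = \frac{100}{N}\sum_{i=1}^{N} X_i$ and take expectations. This gives $\mathbb{E}[\mathrm{TSR}] = \frac{100}{N}\sum_{i=1}^{N}\mathbb{E}[X_i] = \frac{100}{N}\sum_{i=1}^{N}\Pr(X_i=1)$. No independence among the $X_i$ is needed, which matters because tasks handled by the same executor may well have correlated outcomes.

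The substantive step is bounding each term by $\Pr(X_i = 1)\ge p_q$. Every assigned task is executed by some executor in the winning set $\boldsymbol{e}^W$ returned by Algorithm \ref{algo:5}. By construction, $\boldsymbol{e}^W$ is drawn from the executors that passed the majority-voting filter of Algorithm \ref{algo:2}, so it is contained in the quality set $W$ of the corresponding cluster.

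The allocation and the executor's identity are random, because of the random split in Algorithm \ref{algo:3} and the random selections in Algorithm \ref{algo:2}. I therefore plan to condition on the full allocation outcome $\mathcal{A}$, meaning which executor $\boldsymbol{e}_j\in W$ receives task $i$. The hypothesis gives $\Pr(X_i=1\mid \mathcal{A})\ge p_q$ for every realization of $\mathcal{A}$. The tower property then yields $\Pr(X_i=1)=\mathbb{E}[\Pr(X_i=1\mid\mathcal{A})]\ge p_q$.

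I expect this conditioning step to be the main obstacle, since it is the only non-routine point. The hypothesis has to be read as saying that success is bounded below by $p_q$ conditionally on the executor being selected and assigned the task. If it held only unconditionally, selection effects could in principle bias the outcome. I will state this reading explicitly as the interpretation of ``each selected quality executor completes an assigned task successfully with probability at least $p_q$.''

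One further point needs care: $N$ should be treated as fixed, or the argument should be run conditionally on $N$ and then averaged, so that dividing by $N$ commutes with expectation. If $N$ is random, I will condition on $N$ as well and note that the bound $\frac{100}{N}\cdot N p_q = 100\,p_q$ holds pointwise in $N$. Taking the outer expectation then gives $\mathbb{E}[\mathrm{TSR}]\ge p_q\times 100$.
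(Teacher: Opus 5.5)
Your proposal is correct and follows the paper's route: write $\mathrm{TSR} = \frac{100}{N}\sum_{i=1}^{N} X_i$, bound each $\mathbb{E}[X_i]\ge p_q$, and apply linearity of expectation to get $\mathbb{E}[X]\ge Np_q$. Conditioning on the allocation outcome and on $N$ simply makes explicit the reading of the hypothesis that the paper uses implicitly when it asserts $\mathbb{E}[X_i]\ge p_q$ directly from ``each selected executor succeeds with probability at least $p_q$.''
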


\begin{proof}
For each task $i$, let $X_i$ be the indicator variable defined above. Since the success probability of each selected executor is at least $p_q$, we have
\[
\mathbb{E}[X_i] \ge p_q, \qquad \forall i.
\]

By linearity of expectation,
\[
\mathbb{E}[X] = \sum_{i=1}^{N}\mathbb{E}[X_i] \ge \sum_{i=1}^{N} p_q = Np_q.
\]

Since
\[
\text{TSR} = \frac{X}{N}\times 100,
\]
Taking the expectation on both sides yields
\[
\mathbb{E}[\text{TSR}] = \frac{\mathbb{E}[X]}{N}\times 100 \ge \frac{Np_q}{N}\times 100 = p_q \times 100.
\]

Hence proved.
\end{proof}

\begin{theorem}
Suppose the average success probability of all available executors is $\bar{p}$, while the average success probability of the selected quality executors is $\bar{p}_q$, where $\bar{p}_q > \bar{p}$. Then the expected task success rate under the proposed quality-selection mechanism is strictly higher than that of the non-quality-selection mechanism.
\end{theorem}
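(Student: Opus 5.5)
The plan is to reuse the indicator-variable framework set up just before the preceding theorem and compare the two mechanisms through linearity of expectation. Under the non-quality-selection mechanism, each of the $N$ assigned tasks goes to an executor drawn from the full pool. Under the proposed mechanism, each task goes to an executor in the quality set $W$ returned by Algorithm \ref{algo:2}.

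I would first fix the modelling assumption that makes ``average success probability'' usable. Each task is handled by an executor whose success probability is, on average over the assignment, $\bar{p}$ for the baseline and $\bar{p}_q$ for the quality-selected pool. Concretely, let $X_i$ and $X_i'$ be the success indicators of task $i$ under the proposed and baseline mechanisms, respectively. Conditioning on the executor assigned to task $i$ and using the law of total expectation gives $\mathbb{E}[X_i]=\bar{p}_q$ and $\mathbb{E}[X_i']=\bar{p}$. This requires the assignment to be uniform over, or at least representative of, the respective pools.

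Next, exactly as in the proof of the preceding theorem, linearity gives
\[
\mathbb{E}[\text{TSR}_q]=\frac{1}{N}\sum_{i=1}^{N}\mathbb{E}[X_i]\times 100=\bar{p}_q\times 100 .
\]
Likewise $\mathbb{E}[\text{TSR}_{0}]=\bar{p}\times 100$ for the baseline. Subtracting,
\[
\mathbb{E}[\text{TSR}_q]-\mathbb{E}[\text{TSR}_0]=100(\bar{p}_q-\bar{p})>0,
\]
by the hypothesis $\bar{p}_q>\bar{p}$. This yields the strict inequality. The preceding theorem with $p_q=\bar{p}_q$ already supplies the lower bound for the proposed mechanism if only an inequality is available.

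The main obstacle is justifying that the per-task expected success probability equals the pool average. Algorithm \ref{algo:5} assigns tasks by queue order and prices, not uniformly at random, so some executors may receive more tasks than others. My first attempt would be to state explicitly that the allocation is independent of executors' success probabilities: asks are private costs, unrelated to quality. Then the expected success probability of the executor assigned to each task is the pool average. If that is too strong, I would define $\bar{p}_q$ and $\bar{p}$ as task-weighted averages, the average success probability over the executors actually assigned to tasks. With that definition the identity holds by construction, and the strict comparison follows immediately.
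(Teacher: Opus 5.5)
Your proposal is correct and is essentially the paper's proof. The paper simply asserts $\mathbb{E}[\text{TSR}_{\text{base}}]=\bar{p}\times 100$ and $\mathbb{E}[\text{TSR}_{\text{quality}}]=\bar{p}_q\times 100$ and compares them; your indicator-variable derivation, with the explicit assumption that each task's executor is representative of its pool (or that $\bar{p}$ and $\bar{p}_q$ are task-weighted averages), makes precise what the paper uses tacitly. One caveat on your aside: the preceding theorem requires every selected executor to succeed with probability at least $p_q$, so the average $\bar{p}_q$ cannot be substituted for $p_q$ directly, but your main argument does not depend on that remark.
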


\begin{proof}
Without quality selection, the expected task success rate is
\[
\mathbb{E}[\text{TSR}_{\text{base}}] = \bar{p}\times 100.
\]

With quality selection, the expected task success rate is
\[
\mathbb{E}[\text{TSR}_{\text{quality}}] = \bar{p}_q\times 100.
\]

Since $\bar{p}_q > \bar{p}$, we obtain
\[
\mathbb{E}[\text{TSR}_{\text{quality}}] > \mathbb{E}[\text{TSR}_{\text{base}}].
\]

Hence, the proposed quality-selection mechanism strictly improves the expected task success rate.
\end{proof}

\begin{theorem}
Let $e^\ast$ be the true best executor. Suppose each of the $g$ voters independently votes in favor of $e^\ast$ with probability $p > \frac{1}{2}$. Then the probability that Algorithm \ref{algo:2} fails to select the true best executor is bounded by
\begin{equation}
\Pr\!\left(S_g \le \frac{g}{2}\right)
\le
\exp\!\left(-2g\left(p-\frac{1}{2}\right)^2\right),
\end{equation}
where $S_g$ denotes the total number of correct votes. Equivalently, the selection probability satisfies
\begin{equation}
P_{\text{select}}(g)
=
\Pr\!\left(S_g > \frac{g}{2}\right)
\ge
1 - \exp\!\left(-2g\left(p-\frac{1}{2}\right)^2\right).
\end{equation}
\end{theorem}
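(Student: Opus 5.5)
The plan is to model the votes cast for the true best executor $e^\ast$ as a sum of independent Bernoulli variables and then apply Hoeffding's inequality, the same tool used in the earlier concentration lemma for $\Gamma_L(\boldsymbol{p})$.

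For each voter $\boldsymbol{e}_i \in \alpha$ with $|\alpha| = g$, let $Z_i \in \{0,1\}$ indicate that $\boldsymbol{e}_i$ casts its (first-preference) vote for $e^\ast$. By hypothesis, the $Z_i$ are independent with $\mathbb{E}[Z_i] = p$. Set $S_g = \sum_{i=1}^{g} Z_i$, so that $\mathbb{E}[S_g] = gp$.

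The first step links failure of Algorithm \ref{algo:2} to the event $S_g \le g/2$. If $S_g > g/2$, then $e^\ast$ holds a strict majority of the $g$ first-preference votes, and every other executor in $\eta$ receives at most $g - S_g < g/2 < S_g$ of them. Hence in line 15 $M = S_g$ and $T = \{e^\ast\}$. The tie-breaking loop in lines 17--24 is never entered, and line 25 selects $e^\ast$. Taking the contrapositive, the event that the algorithm fails to select $e^\ast$ is contained in $\{S_g \le g/2\}$, so
\[
\Pr(\text{fail}) \le \Pr\!\left(S_g \le \tfrac{g}{2}\right).
\]
This inclusion also yields the second display of the statement. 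Strictly, the quantity there is $\Pr(S_g > g/2)$, which lower-bounds the true selection probability; I would note that the statement identifies the two quantities.

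The second step is the one-sided Hoeffding bound for variables in $[0,1]$: for any $s > 0$,
\[
\Pr\!\left(S_g - gp \le -s\right) \le \exp\!\left(-2s^2/g\right).
\]
I take $s = g(p - \tfrac12)$, which is positive because $p > \tfrac12$. The event $S_g \le g/2$ is exactly the event $S_g - gp \le -g(p-\tfrac12)$. Substituting gives
\[
\exp\!\left(-2g^2(p-\tfrac12)^2/g\right) = \exp\!\left(-2g(p-\tfrac12)^2\right),
\]
which is the first claimed inequality. Taking complements then gives the lower bound on $P_{\text{select}}(g)$.

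The main obstacle is modelling rather than calculation. In Algorithm \ref{algo:2} as written, the preferences in line 6 are uniformly random, and the selected set $\eta$ need not even contain $e^\ast$. The proof therefore has to state explicitly that it works conditionally on $e^\ast \in \eta$ and under the assumed independent voting model with bias $p$. It must also check that the majority event makes the tie-breaking rule irrelevant, so that no argument about later preference rounds is needed. Once this reduction is in place, the rest is a direct substitution into Hoeffding's inequality.
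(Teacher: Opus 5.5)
Your proof is correct and follows the paper's argument: indicator variables for correct votes with $\mathbb{E}[S_g]=gp$, the one-sided Hoeffding bound with $t=g(p-\tfrac12)$, and complementation to get the lower bound on $P_{\text{select}}(g)$. Your explicit check that a strict majority forces $T=\{e^\ast\}$ and skips tie-breaking (taking $T$ to be reset at the start of each round, which the pseudocode omits) justifies the inclusion that the paper only asserts as ``Algorithm~\ref{algo:2} fails only if $S_g\le g/2$.''
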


\begin{proof}
Let $X_\ell$ be the indicator random variable such that
\begin{equation}
X_\ell =
\begin{cases}
1, & \text{if task executor } \ell \text{ votes correctly for } e^\ast,\\
0, & \text{otherwise}.
\end{cases}
\end{equation}
Then $\mathbb{E}[X_\ell] = p$, and
\begin{equation}
\label{equ:sg}
S_g = \sum_{\ell=1}^{g} X_\ell
\end{equation}
denotes the total number of correct votes. Taking expectation both side of equation \ref{equ:sg}, we get
\begin{equation}
\label{equ:sg2}
\mathbb{E}[S_g] = \mathbb{E}\bigg[\sum_{\ell=1}^{g} X_\ell\bigg]
\end{equation}
By linearity of expectation, we have  
\begin{equation}
\label{equ:sg1}
\mathbb{E}[S_g] = \sum_{\ell=1}^{g} \mathbb{E}[X_\ell]
\end{equation}
Substituting the value of $\mathbb{E}[S_g]$ as $p$ in equation \ref{equ:sg1}, we get
\begin{equation}
\label{equ:sg3}
\mathbb{E}[S_g] = \sum_{\ell=1}^{g} p
\end{equation}
\[
\mathbb{E}[S_g] = gp.
\]
Algorithm \ref{algo:2} fails only if $S_g \le \frac{g}{2}$. Therefore,
\[
\Pr\!\left(S_g \le \frac{g}{2}\right)
=
\Pr\!\left(S_g - \mathbb{E}[S_g] \le \frac{g}{2} - gp\right).
\]
Since $p > \frac{1}{2}$, we have
\[
\frac{g}{2} - gp = -g\left(p-\frac{1}{2}\right).
\]
Thus,
\[
\Pr\!\left(S_g \le \frac{g}{2}\right)
=
\Pr\!\left(S_g - \mathbb{E}[S_g] \le -g\left(p-\frac{1}{2}\right)\right).
\]
By Hoeffding's inequality,
\[
\Pr\!\left(S_g - \mathbb{E}[S_g] \le -t\right)
\le
\exp\!\left(-\frac{2t^2}{g}\right).
\]
Setting $t = g\left(p-\frac{1}{2}\right)$ gives
\[
\Pr\!\left(S_g \le \frac{g}{2}\right)
\le
\exp\!\left(-2g\left(p-\frac{1}{2}\right)^2\right).
\]
Therefore,
\[
P_{\text{select}}(g)
=
1-\Pr\!\left(S_g \le \frac{g}{2}\right)
\ge
1-\exp\!\left(-2g\left(p-\frac{1}{2}\right)^2\right).
\]
Hence proved.
\end{proof}
\begin{observation}
The above bound shows that the failure probability of Algorithm \ref{algo:2} decreases exponentially with the number of task executors as voters. Therefore, when each evaluator is better than random guessing, increasing the number of voters significantly improves the reliability of selecting the true best executor. This provides a strong theoretical justification for using multiple evaluators in the quality-selection phase.
\end{observation}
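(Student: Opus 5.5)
The plan is to read the Observation as a direct corollary of the preceding theorem, with one real question to settle: whether ``the failure probability decreases'' refers to the Hoeffding bound or to the true probability. I would handle these in three steps.

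\textbf{Step 1: the bound itself.} Fix $p > \frac{1}{2}$ and set $c = 2\left(p-\frac{1}{2}\right)^2 > 0$. The preceding theorem gives
\[
\Pr\!\left(S_g \le \frac{g}{2}\right) \le B(g) := e^{-cg}.
\]
Since $c>0$, $B(g+1)/B(g) = e^{-c} < 1$. So $B$ is strictly decreasing in $g$, decays geometrically, and $B(g)\to 0$ as $g\to\infty$. This is the ``exponential decrease'' claimed, and it is immediate.

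To make ``increasing the number of voters significantly improves reliability'' concrete, I would invert the bound. For any target failure level $\delta \in (0,1)$, taking
\[
g \ge \frac{\ln(1/\delta)}{2\left(p-\frac{1}{2}\right)^2}
\]
gives $P_{\text{select}}(g) \ge 1-\delta$. The required number of voters therefore grows only logarithmically in $1/\delta$.

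\textbf{Step 2: linking the event to Algorithm~\ref{algo:2}.} I need to check that $S_g > g/2$ really implies that $e^\ast$ is selected. If $e^\ast$ receives more than half of the $g$ first-preference votes, then every other executor in $\eta$ receives fewer than $g/2$. Hence $e^\ast$ is the unique maximiser of $M$ in lines 11--16, the tie-breaking loop in lines 17--24 is never entered, and line 25 returns $e^\ast$. So the failure event is contained in $\{S_g \le g/2\}$, which is exactly what the bound controls.

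\textbf{Step 3 (main obstacle): the true failure probability.} The previous theorem only bounds the failure probability; a decreasing upper bound does not by itself make the exact probability $\Pr(S_g \le g/2)$ monotone in $g$. Strictly, the Observation should be read as a statement about the guarantee. If a claim about the exact probability is wanted, I would restrict to odd $g$ and prove a Condorcet-jury-type monotonicity by the standard coupling argument:

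1. Add two independent voters $X_{g+1}, X_{g+2}$ to the first $g$.
2. The majority outcome can change only when the first $g$ votes are within one of a tie.
3. Conditioned on that event, the two new voters move the outcome towards correct with probability $p^2$ and towards incorrect with probability $(1-p)^2$.
4. Since $p > \frac{1}{2}$, we have $p^2 > (1-p)^2$, so the failure probability strictly decreases from $g$ to $g+2$.

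Even $g$ is the delicate case, because ties at exactly $g/2$ count as failure under the event $S_g \le g/2$. This is why I would restrict to odd $g$, or else rely on the bound alone. Combining Steps 1--3 yields the Observation.
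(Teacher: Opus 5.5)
Your Step~1 matches the paper. The Observation has no separate proof; it just reads the geometric decay off $e^{-2g(p-\frac12)^2}$ in the preceding theorem. On that reading your proposal agrees with the paper, and the inversion $g\ge \ln(1/\delta)/\bigl(2(p-\frac12)^2\bigr)$ is a correct addition. Steps~2 and~3 go beyond the paper, which only asserts that Algorithm~\ref{algo:2} fails only if $S_g\le g/2$. Each of these steps needs a repair.

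\textbf{Step 3.} The coupling computation is wrong as written. Condition on the first $g$ (odd) votes being within one of a tie. The two pivotal states are not equally likely: $\Pr(S_g=\frac{g+1}{2})=\frac{p}{1-p}\Pr(S_g=\frac{g-1}{2})$. So the state from which two wrong votes cause failure is \emph{more} likely than the state from which two correct votes rescue the outcome. Comparing $p^2$ with $(1-p)^2$ ignores this weighting. The correct net decrease in failure probability from $g$ to $g+2$ is
\[
\Pr\bigl(S_g=\tfrac{g-1}{2}\bigr)\,p^2-\Pr\bigl(S_g=\tfrac{g+1}{2}\bigr)\,(1-p)^2=\binom{g}{\frac{g-1}{2}}\bigl(p(1-p)\bigr)^{\frac{g+1}{2}}(2p-1).
\]
This is still positive for $p>\frac12$, so your conclusion survives, but only once the weighting is included.

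Your worry about even $g$ is well founded. Going from odd $g$ to $g+1$, the exact probability strictly \emph{increases}, by $(1-p)\Pr(S_g=\frac{g+1}{2})$. For example, with $p=0.7$ one gets $\Pr(S_3\le\frac32)=0.216$ but $\Pr(S_4\le 2)\approx 0.348$. So the Observation's phrase ``the failure probability decreases'' is literally true only of the bound, or along odd $g$.

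Step~3 also concerns only the proxy event $\{S_g\le g/2\}$. By your Step~2, that event merely contains the algorithm's failure event. Monotonicity of the algorithm's actual failure probability (plurality plus tie-breaking) therefore does not follow.

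\textbf{Step 2.} Your argument is right for the evidently intended semantics, but it tacitly assumes $T$ is recomputed in each round. As printed, $T$ is initialized only in line~1, and lines~16 and~23 take unions. Vote counts are also reset only for the current $\eta$. So from the second round on, $T$ already contains earlier winners with their old counts. Then $|T|>1$, the loop in lines~17--24 can be entered, and the $\arg\max$ in line~25 need not return $e^\ast$. You should state the corrected reading explicitly: line~16 sets $T\gets\{e_j\in\eta:\mathrm{vote}(e_j)=M\}$ afresh in each round.
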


\section{Experimental Analysis}
\label{sec:Sim}
In this section, the simulations for the three tiers of the proposed framework are conducted independently. For the first tier, the cluster formation mechanism is evaluated using the following performance metrics: (i) \emph{clustering time}, (ii) \emph{social welfare}, (iii) \emph{average intra-cluster distance}, and (iv) \emph{allocation time}. To assess the effectiveness of clustering, social welfare is analyzed under two scenarios: with clustering and without clustering. The results demonstrate that clustering consistently enhances social welfare across all system sizes by improving the matching efficiency between task requesters and executors. Similarly, the impact of clustering on computational efficiency is evaluated through allocation time. The comparison shows that clustering significantly reduces the allocation time, particularly as the number of executors increases. This improvement is attributed to the reduced search space achieved through spatial grouping, which enables faster and more efficient task assignment.\\
\indent In the second tier, the QTESM is tested using the following parameters: (i) \emph{task success rate}, (ii) \emph{number of quality task executors determined among the available task executors}, (iii) \emph{running time}, and (iv) \emph{selection probability}. In the third tier of the proposed framework, experiments are conducted to compare winner determination and payment rule with the existing mechanisms MUDA, PPM, and McAfee, in terms of truthfulness as the first metric. It will help us show that the winner-determination and payment rule is truthful relative to existing benchmark mechanisms. The second metric that is considered for comparing the winner determination and payment rule with the respective existing mechanisms is the total payment made to the task executors. The third metric that is considered for comparing winner determination and payment rule with respective existing mechanisms is the running time of the mechanisms. It will help us to show that the winner determination and payment rule is scalable.

\subsection{Simulation Set-up for first-tier}
\label{sec:Sima}
To evaluate the performance of the cluster formation, simulations are conducted in a two-dimensional spatial environment representing the geographical area of a spatial crowdsourcing system. 
\begin{table}[H]
\centering
\caption{Simulation Parameters and Experimental Environment}
\label{tab:simulation_parameters}
\renewcommand{\arraystretch}{1.2}
\setlength{\tabcolsep}{4pt}
\begin{tabular}{p{1.5cm} p{4.2cm} p{3.5cm}}
\hline
\textbf{Symbol} & \textbf{Parameter} & \textbf{Configuration} \\
\hline
$A$ & Simulation region & $100 \times 100$ spatial grid \\
$m$ & Task executors & 100--1600 \\
$n$ & Task requesters & 50--300 \\
$k$ & Number of clusters & 2--20 \\
$v_i$ & Requester valuation & $[8,30]$ \\
$c_j$ & Executor cost & $[5,25]$ \\
$R$ & Simulation repetitions & 30 runs \\
Dist & Distance function & Euclidean \\
Init & Initialization & Random centroids \\
\hline
\multicolumn{3}{c}{\textbf{Hardware and Software Configuration}} \\
\hline
-- & Processor & Intel Core i5 \\
-- & RAM & 16 GB \\
-- & Operating System & Windows \\
-- & Programming Language & Python \\
-- & Libraries/Tools & NumPy\\
-- & Plotting Tools & Gnuplot \\
\hline
\end{tabular}
\end{table}
Task executors are randomly distributed over the simulation region, and the clustering algorithm groups them based on their spatial proximity. The simulation area is set to $100 \times 100$ units. The number of task executors $m$ is varied from 100 to 1600 to analyze scalability. The number of clusters $k$ is varied from 2 to 30 to study its effect on cluster compactness and allocation efficiency. The initial cluster centers are selected randomly, and the algorithm iterates until cluster assignments stabilize or a maximum number of iterations is reached. To study the impact of clustering on the overall system, simulations are also carried out for the task allocation mechanism with and without clustering. The corresponding system-level metrics include social welfare and task allocation time. Each experiment is repeated 30 times with different random seeds, and the average results are reported.
\subsubsection{Simulation metrics for first tier}
\begin{itemize}
\item \textbf{Clustering time:} The time taken to determine the k-clusters from the given set of points. 
\item \textbf{Social welfare:} The sum of the valuations of the task executors.
\item \textbf{Average intra-cluster distance:} Intra-cluster distance measures the compactness of clusters by quantifying the average distance between data points within the same cluster. Lower intra-cluster distance indicates that the points are more tightly grouped, leading to more efficient spatial task allocation and reduced communication overhead. The intra-cluster distance of a cluster $C$ is defined as
\[
\text{IntraDist}(C) = \frac{1}{|C|} \sum_{x_i \in C} d(x_i, \mu_C),
\]
where $\mu_C$ denotes the centroid of cluster $C$, and $d(\cdot)$ represents a distance metric (e.g., Euclidean distance).

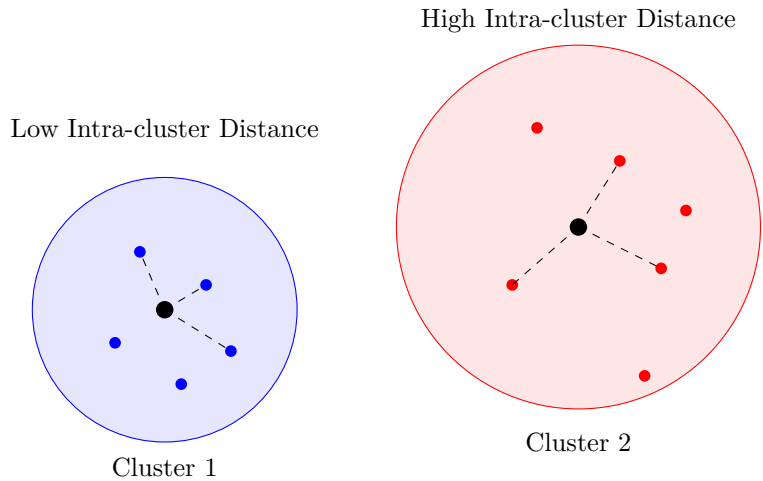
\begin{figure}[H]
\centering
\begin{tikzpicture}[scale=1.1]

% --- Cluster 1 ---
\draw[fill=blue!10, draw=blue] (0,0) circle (1.6);
\foreach \x/\y in {0.5/0.3, -0.3/0.7, 0.8/-0.5, -0.6/-0.4, 0.2/-0.9}
    \fill[blue] (\x,\y) circle (2pt);

% Centroid
\fill[black] (0,0) circle (3pt);
\node at (0,-1.9) {\small Cluster 1};

% Distance lines
\draw[dashed] (0,0) -- (0.5,0.3);
\draw[dashed] (0,0) -- (-0.3,0.7);
\draw[dashed] (0,0) -- (0.8,-0.5);

% --- Cluster 2 ---
\draw[fill=red!10, draw=red] (5,1) circle (2.2);
\foreach \x/\y in {5.5/1.8, 4.2/0.3, 6.0/0.5, 5.8/-0.8, 4.5/2.2, 6.3/1.2}
    \fill[red] (\x,\y) circle (2pt);

% Centroid
\fill[black] (5,1) circle (3pt);
\node at (5,-1.6) {\small Cluster 2};

% Distance lines
\draw[dashed] (5,1) -- (5.5,1.8);
\draw[dashed] (5,1) -- (4.2,0.3);
\draw[dashed] (5,1) -- (6.0,0.5);

% Labels
\node at (0,2.2) {\small Low Intra-cluster Distance};
\node at (5,3.5) {\small High Intra-cluster Distance};

\end{tikzpicture}
\caption{Graphical intuition of intra-cluster distance. Cluster 1 exhibits low intra-cluster distance with tightly grouped points, whereas Cluster 2 shows higher intra-cluster distance due to more dispersed points.}
\label{fig:intra_cluster_intuition}
\end{figure}
\item \textbf{Allocation time:} The time taken to allocate the tasks to the task executors.
\end{itemize}
\subsection{Simulation Set-up for second tier}
\label{sec:Simb}
To empirically validate the effect of the number of voters on the reliability of the quality-selection mechanism, we conduct a Monte Carlo simulation. The goal is to estimate the probability of correctly selecting the true best executor under varying numbers of voters. In each simulation round, a set of $f$ candidate executors is considered, among which one executor is designated as the true best executor $e^\ast$. Each voter independently evaluates the candidates and ranks $e^\ast$ above any other executor with probability $p > 0.5$, representing the accuracy of the evaluator. For a given number of voters $g$, the selection of the winning executor is determined using a quality determination mechanism. The experiment is repeated for $R = 10{,}000$ independent trials, and the probability of correctly selecting the true best executor is estimated as the fraction of trials in which $e^\ast$ is selected. The key simulation parameters are set as given in Table \ref{tab:quality_selection}. The resulting selection probability is then analyzed as a function of $g$.

\begin{table}[H]
\centering
\caption{Simulation Parameters for Quality Selection Analysis}
\label{tab:quality_selection}
\renewcommand{\arraystretch}{1.1}
\begin{tabular}{lll}
\hline
\textbf{Symbol} & \textbf{Parameter} & \textbf{Value} \\
\hline
$f$ & Number of candidate executors & 4 \\
$e^\ast$ & True best executor & Fixed per trial \\
$p$ & Voter accuracy & 0.7 \\
$g$ & Number of voters & 3--15 \\
$R$ & Number of simulation runs & 10,000 \\
Rule & Selection method & Quality Task Executors Determination \\
\hline
\end{tabular}
\end{table}
\begin{table}[H]
\centering
\caption{Empirical probability of selecting the true best executor under different numbers of voters}
\label{tab:true_best_probability}
\renewcommand{\arraystretch}{1.1}
\begin{tabular}{cc}
\hline
\textbf{Number of voters ($g$)} & \textbf{Selection probability} \\
\hline
3  & 0.71 \\
5  & 0.79 \\
7  & 0.85 \\
9  & 0.89 \\
11 & 0.92 \\
13 & 0.94 \\
15 & 0.96 \\
\hline
\end{tabular}
\end{table}

\subsubsection{Simulation metrics for second tier}
\begin{itemize}
\item \textbf{Task Success Rate (TSR):} \[
\text{Task Success Rate (TSR)} = \frac{\text{Number of successfully completed tasks}}{\text{Total number of assigned tasks}} \times 100
\]
\item \textbf{Running time:} The total time taken to determine the quality task executors using Algorithm \ref{algo:2}. 
\item \textbf{Selection probability:} It refers to the probability that the mechanism correctly selects the true best executor from a set of candidates.
\end{itemize}

\subsection{Simulation Set-up for third tier}
To evaluate the scalability of the proposed mechanism, we vary the number of task requesters (TRs) in the range $[6000, 45000]$ and the number of task executors (TEs) in the range $[10000, 50000]$. This setup enables us to analyze the performance under large-scale spatial crowdsourcing environments. To study the impact of clustering granularity, we consider three different configurations: (i) 80 clusters, (ii) 100 clusters, and (iii) 120 clusters. In each configuration, the number of clusters is kept fixed across all simulation iterations. This allows us to examine how different cluster sizes influence system efficiency and computational overhead. We compare the proposed TRUST-SC mechanism with baseline mechanisms, namely McAfee, MUDA, and PPM, with respect to key performance metrics including truthfulness, total payment to task executors, and runtime of the agents. This comparative analysis provides insights into the economic efficiency and computational scalability of the proposed approach. The results demonstrate that TRUST-SC maintains truthful behavior while achieving competitive payment efficiency and significantly lower runtime compared to the baseline mechanisms, particularly as the system size increases.

\subsection{Performance analysis for first tier}
Fig.~\ref{sim:6aa} illustrates the scalability of the clustering phase with respect to the number of task executors. As the number of executors increases, the clustering time grows in a near-linear fashion. This behavior is expected since the algorithm processes spatial information and performs distance computations for grouping executors into clusters. Even if the system size is increasing rapidly, the growth in runtime remains moderate, indicating that the clustering algorithm is computationally efficient and suitable for large-scale spatial crowdsourcing environments. The results demonstrate that the cluster formation approach can handle a substantial number of executors without incurring significant computational overhead, thereby enabling its practical deployment in real-world applications.
\begin{figure}[H]
\begin{subfigure}{0.52\textwidth}
\includegraphics[scale = 0.62]{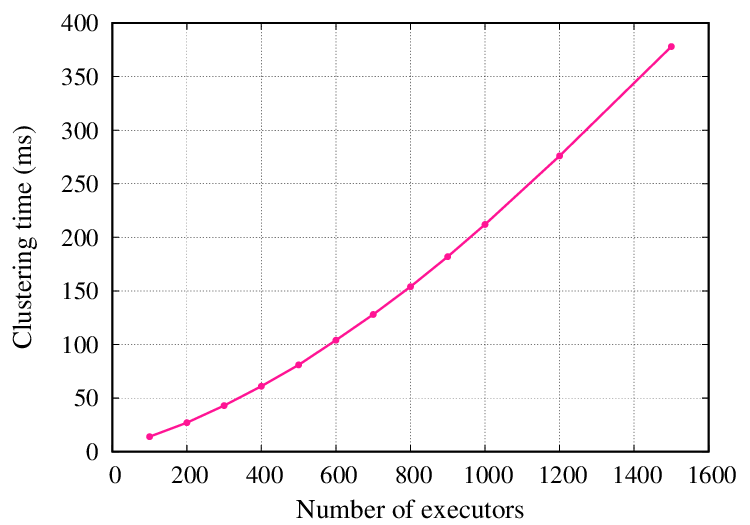}
\subcaption{Growth of clustering time with increasing number of task executors.}
\label{sim:6aa}
\end{subfigure}%
\begin{subfigure}{0.52\textwidth}
\includegraphics[scale = 0.62]{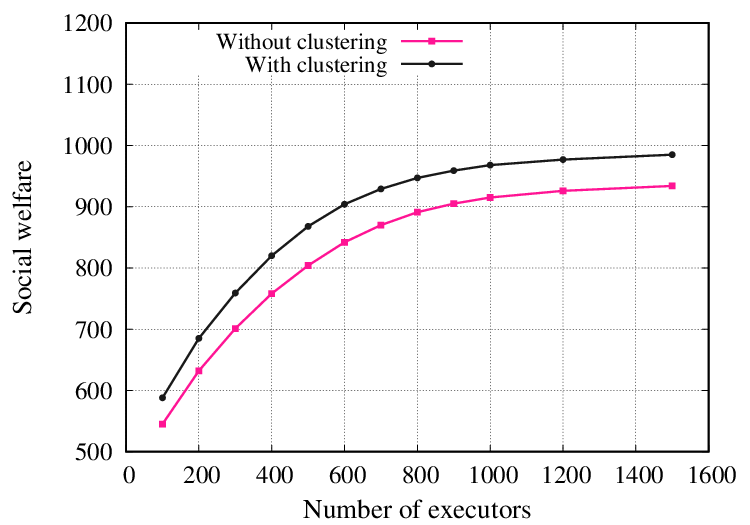}
\subcaption{Comparison of social welfare with the number of executors.}
\label{sim:6b}
\end{subfigure}
\begin{subfigure}{0.52\textwidth}
\includegraphics[scale = 0.62]{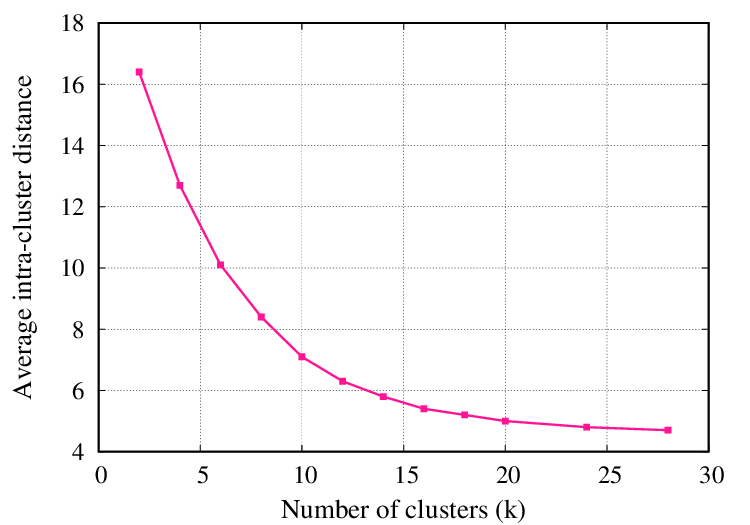}
\subcaption{Comparing average intra-cluster distance with number of clusters.}
\label{sim:6c}
\end{subfigure}%
\begin{subfigure}{0.52\textwidth}
\includegraphics[scale = 0.62]{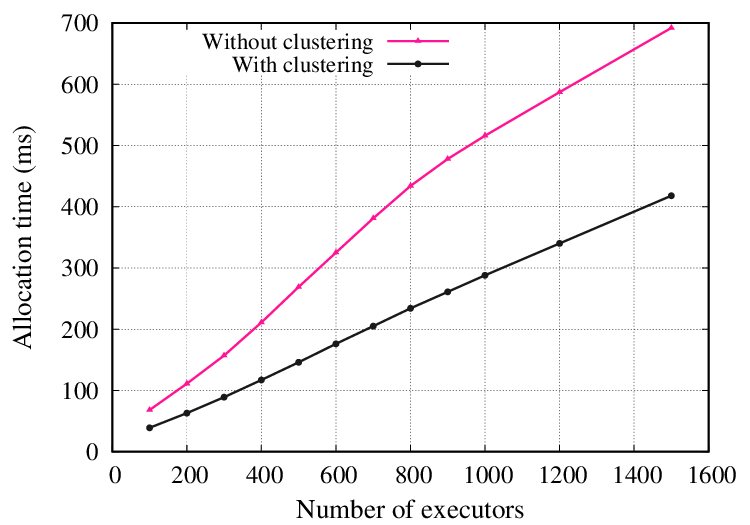}
\subcaption{Comparison of allocation time with number of executors.}
\label{sim:6d}
\end{subfigure}
\caption{Cluster formation mechanism evaluation: (a) clustering time, (b) social welfare with and without clustering, (c) average intra-cluster distance, and (d) task allocation time with and without clustering.}
\label{sim:6}
\end{figure}
\indent Fig.~\ref{sim:6b} compares the social welfare achieved with and without clustering. The results show that clustering consistently improves social welfare across all system sizes. This improvement arises because clustering restricts the matching process to spatially proximate executors, thereby reducing mismatches and improving allocation efficiency. The clustered approach enables better utilization of executor capabilities, leading to higher overall system efficiency.\\
\indent Fig.~\ref{sim:6c} shows the variation of the average intra-cluster distance with respect to the number of clusters. As the number of clusters increases, the intra-cluster distance decreases significantly, indicating improved spatial compactness of the clusters. However, the rate of improvement diminishes beyond a certain threshold, suggesting a trade-off between cluster granularity and computational efficiency. This observation highlights the importance of selecting an appropriate number of clusters to balance compactness and overhead.\\
\indent Fig.~\ref{sim:6d} presents the task allocation time under both clustered and non-clustered settings. It is evident that clustering significantly reduces the allocation time, particularly as the number of executors increases. This is because clustering reduces the effective search space from the entire executor set to smaller localized groups. Consequently, the allocation algorithm operates more efficiently, making the system scalable and responsive.

\subsection{Performance analysis for second tier}

\begin{figure}[H]
\begin{subfigure}{0.52\textwidth}
\includegraphics[scale = 0.62]{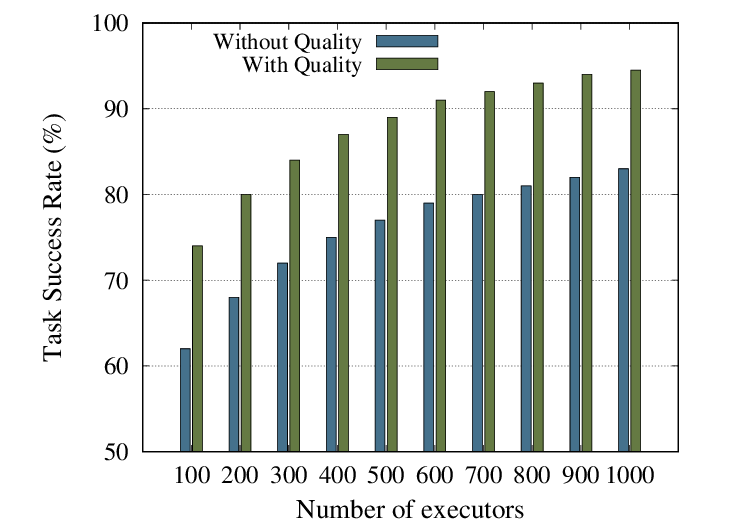}
\subcaption{Comparison of TSR with and without quality TEs}
\label{sim:7a}
\end{subfigure}%
\begin{subfigure}{0.52\textwidth}
\includegraphics[scale = 0.62]{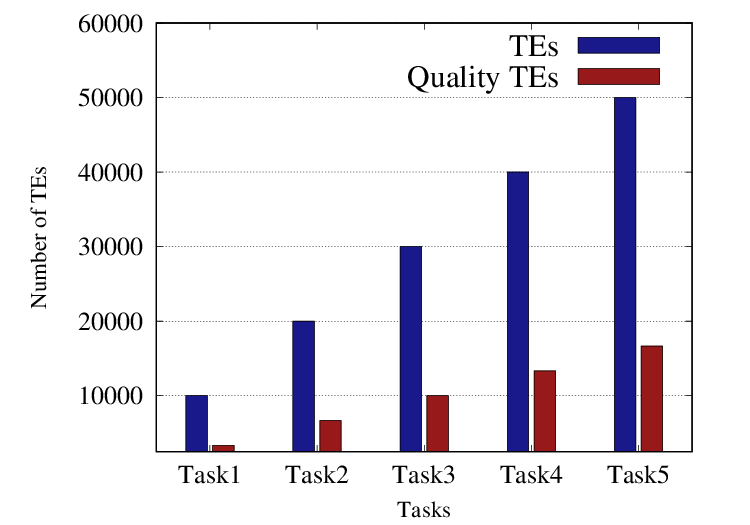}
\subcaption{Comparison of Number of TEs and Number of quality TEs with tasks}
\label{sim:7b}
\end{subfigure}
\begin{subfigure}{0.52\textwidth}
\includegraphics[scale = 0.62]{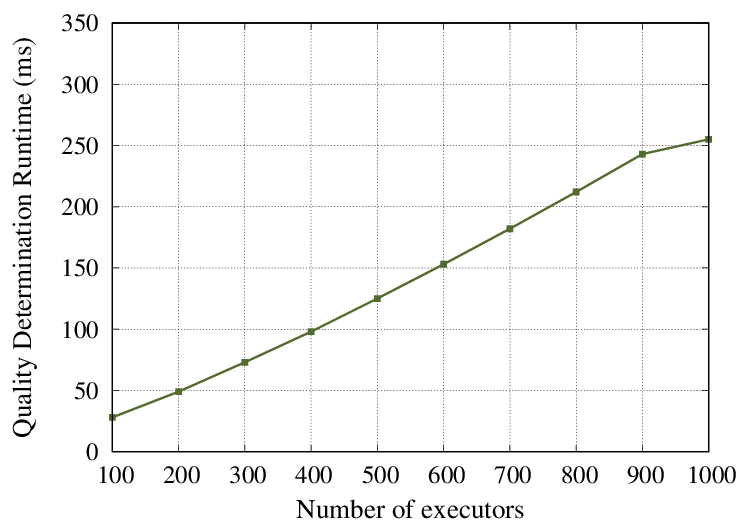}
\subcaption{Growth of running time of QTESM with increase in number of TEs}
\label{sim:7c}
\end{subfigure}%
\begin{subfigure}{0.52\textwidth}
\includegraphics[scale = 0.62]{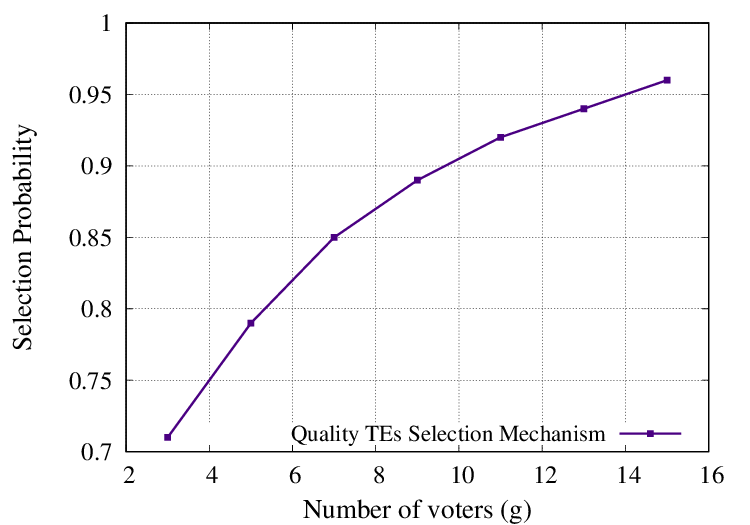}
\subcaption{Growth of selection probability with increase in number of voters}
\label{sim:7d}
\end{subfigure}
\caption{QTESM evaluation: (a) TSR, (b) Number of quality TEs, (3) running time, and (d) selection probability.}
\label{fig:quality_selection}
\end{figure}
Fig.~\ref{fig:quality_selection} illustrates the performance of the proposed Quality-based Task Executor Selection Mechanism (QTESM) under varying system parameters. In Fig. \ref{sim:7a}, the effect of quality task executors is shown on the success rate of the tasks. It is seen that when the tasks are carried out by the quality TEs, the TSR is higher compared to the case when the tasks are done by the TEs (a combination of both quality TEs and low-quality TEs). This behavior can be seen irrespective of the number of TEs participating in the market. In Fig. \ref{sim:7b}, the y-axis represents the number of TEs, and the x-axis represents the tasks. It compares the QTESM to determine how many TEs from among the available TEs are quality TEs. Fig. \ref{sim:7c} determines the running time of QTESM with respect to the number of TEs. As the number of TEs increases, the running time of QTESm also increases. As shown in Fig.~\ref{sim:7d}, the performance of QTESM is evaluated in terms of selection accuracy (or selection probability) with respect to the number of evaluators. It is observed that the probability of correctly selecting the true best executor increases monotonically as the number of evaluators increases. This behavior is consistent with the theoretical analysis, where aggregating multiple independent evaluations reduces the likelihood of incorrect selection. In particular, when the number of evaluators is small, the selection probability is relatively lower due to higher uncertainty in individual assessments. However, as the number of evaluators increases, the collective decision-making process becomes more reliable, leading to a significant improvement in selection accuracy. This trend demonstrates the robustness of QTESM against individual evaluation errors. Furthermore, the results in Fig.~\ref{fig:quality_selection} validate the probabilistic guarantees derived using Hoeffding’s inequality. The exponential reduction in error probability ensures that the mechanism can achieve near-perfect selection accuracy with a moderate number of evaluators. Overall, the results indicate that QTESM effectively enhances the reliability of task executor selection, which directly contributes to improved task success rates and overall system performance in the proposed TRUST-SC framework.

\subsection{Performance analysis for third tier}
\label{sec:RA}
The performance of TRUST-SC is compared with the existing benchmark mechanisms, namely McAfee double auction (DA) \cite{bredin2012models}, Truthful Multi-Unit Double-Auction Mechanism (MUDA) \cite{SegalHalevi2018MUDAAT}, and Posted Price Mechanism (PPM) \cite{chiu2003posted}. In our experimental section, we have considered cluster values of $80$, $100$, and $120$. The simulation results presented in Fig. \ref{Sim:1A} illustrate the performance of TRUST-SC with benchmark mechanisms on the grounds of the utility of TRs. In this, the x-axis of the graphs represents the number of TRs, and the y-axis represents the utility of TRs. Here, the utility of TRs is the sum of the utility of winning TRs. In this case, when the cluster number is $80$, then it can be seen in Fig. \ref{sim:1a} that the utility of winning TRs in TRUST-Sc is more than the utility of TRs in cases of McAfee and MUDA. It is due to the reason that in the case of TRUST-SC, the payment made to the winning task Requesters is higher than the payment made to the winning task requesters in the case of McAfee and MUDA.
\begin{figure}[H]
\begin{subfigure}{0.35\textwidth}
\includegraphics[scale = 0.16]{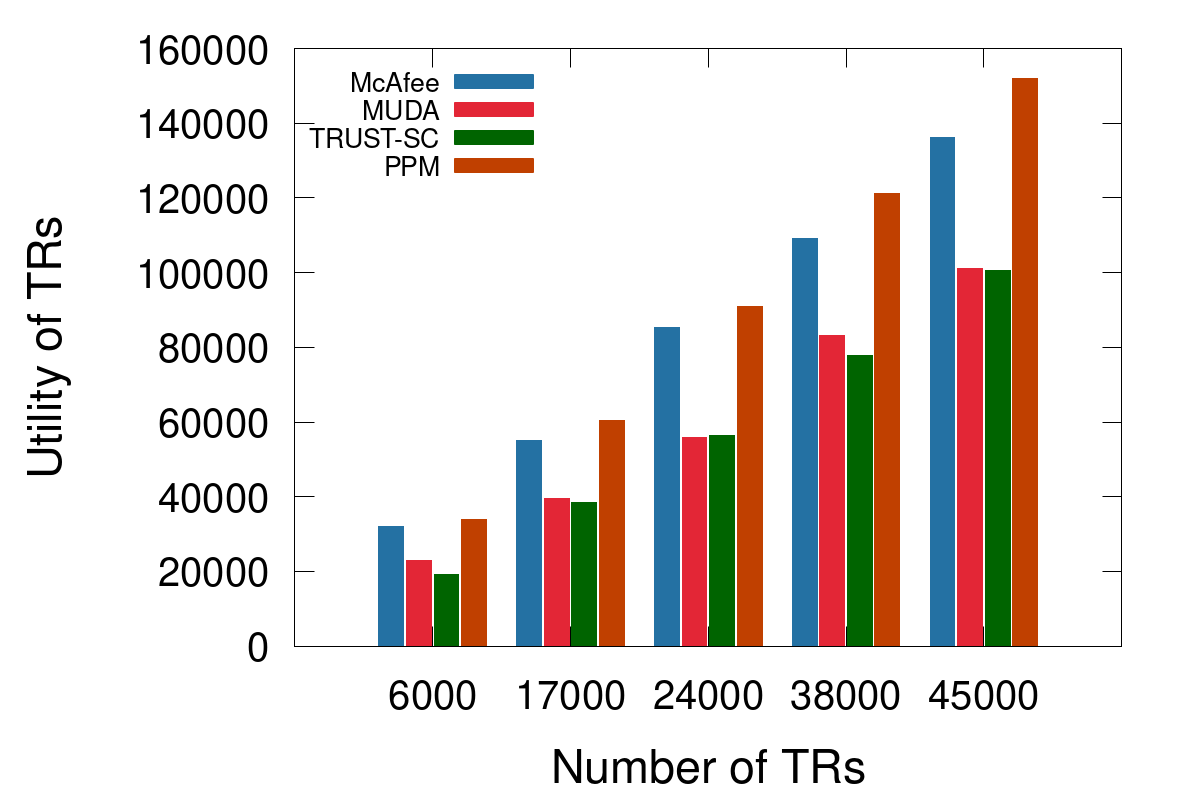}
\subcaption{Utility of TRs with $k = 80$}
\label{sim:1a}
\end{subfigure}%
\begin{subfigure}{0.35\textwidth}
\includegraphics[scale = 0.16]{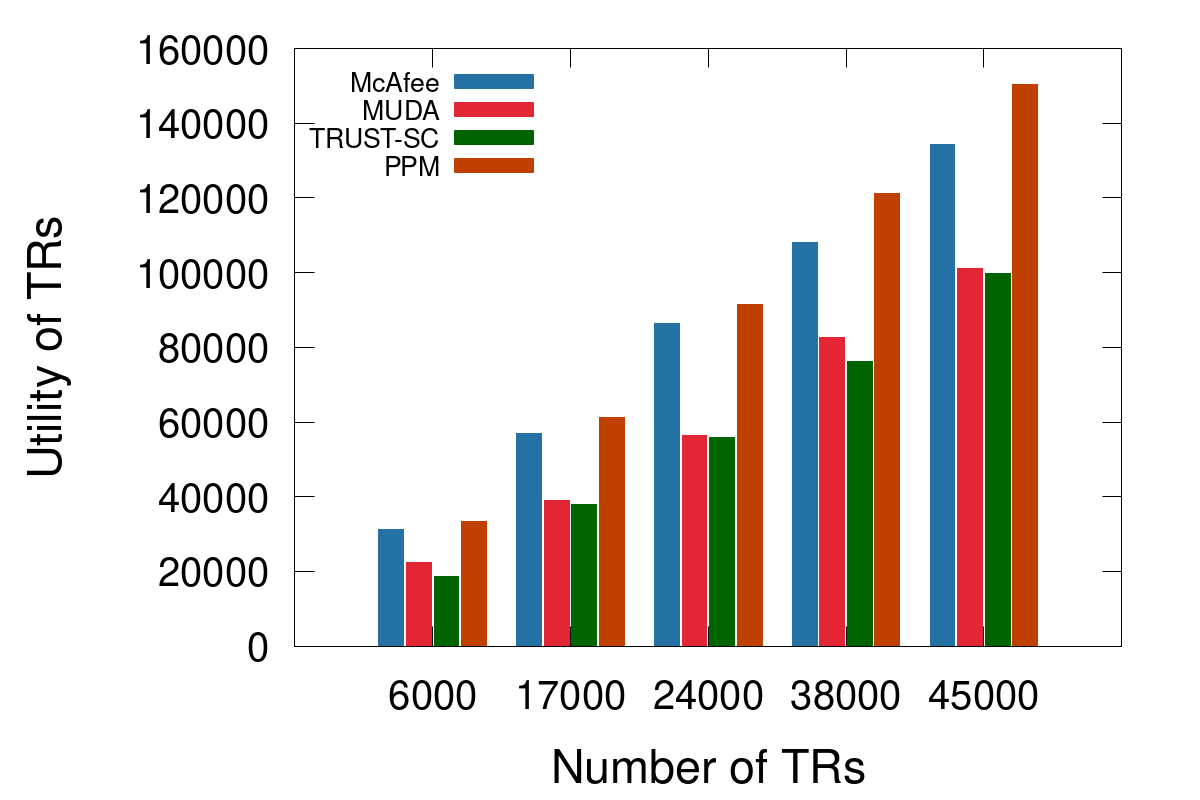}
\subcaption{Utility of TRs with $k = 100$}
\label{sim:11a}
\end{subfigure}%
\begin{subfigure}{0.35\textwidth}
\includegraphics[scale = 0.16]{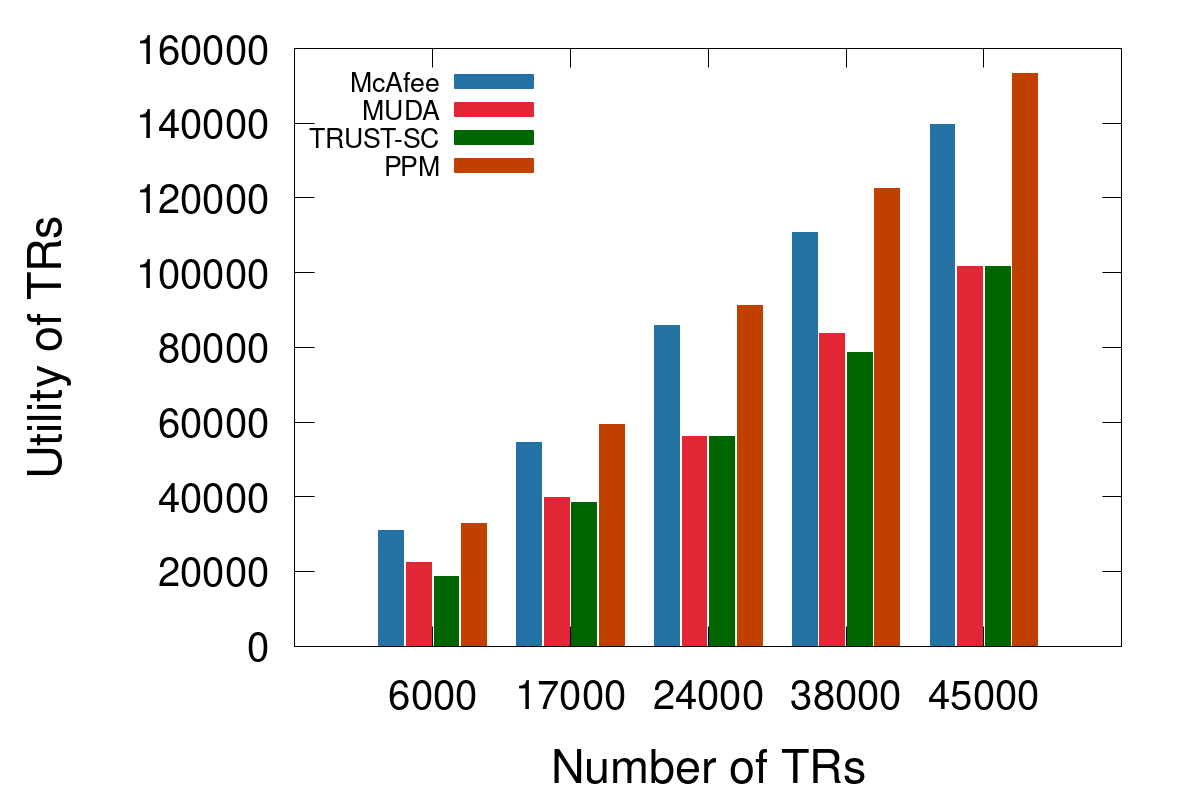}
\subcaption{Utility of TRs with $k = 120$}
\label{sim:111a}
\end{subfigure}%
\caption{Comparison of Utility of TRs with $k = 80, 100$, and $120$.}
\label{Sim:1A}
\end{figure}
On the other hand, when compared with PPM, the utility of TRs in the case of PPM is higher than the utility of TRs in the case of TRUST-SC. It is due to the reason that the payment made to the task requesters in case of PPM is more than the payment made to the task requesters in case of TRUST-SC. Hence, the utility is more in the case of PPM, as shown in Fig. \ref{sim:1a}. Considering the cases with $k-100$ and $120$, similar results are obtained in Fig. \ref{sim:11a} and Fig. \ref{sim:111a} respectively.\\
%\indent Similarly, the same behavior is observed in Fig. \ref{Sim:2A}, which illustrates the performance comparison of TRUST-SC with benchmark mechanisms on the grounds of the utility of TEs with cluster values of $80$, $100$, and $120$. The x-axis of the graphs represents the number of TEs, and the y-axis represents the utility of TEs. From Figs. \ref{sim:2a}, \ref{sim:22a}, and \ref{sim:222a}.
\indent The simulation results presented in Fig. \ref{Sim:2A} illustrate the performance of TRUST-SC with benchmark mechanisms on the grounds of the utility of TEs. In this, the x-axis of the graphs represents the number of TEs, and the y-axis represents the utility of TEs. Here, the utility of TEs is the sum of the utility of winning TEs. In this case, when the cluster number is $80$, then it can be seen in Fig. \ref{sim:2a} that the utility of winning TEs in TRUST-SC is more than the utility of TEs in cases of McAfee and MUDA. It is due to the reason that in the case of TRUST-SC, the payment made to the winning task executors is higher than the payment made to the winning task executors in the case of McAfee and MUDA. On the other hand, when compared with PPM, the utility of TEs in the case of PPM is higher than the utility of TEs in the case of TRUST-SC. It is due to the reason that the payment made to the task executors in case of PPM is more than the payment made to the task executors in case of TRUST-SC. Hence, the utility is more in the case of PPM, as shown in Fig. \ref{sim:2a}. Considering the cases with $k-100$ and $120$, similar results are obtained in Fig. \ref{sim:22a} and Fig. \ref{sim:222a}, respectively.
\begin{figure}[H]
\begin{subfigure}{0.35\textwidth}
\includegraphics[scale = 0.16]{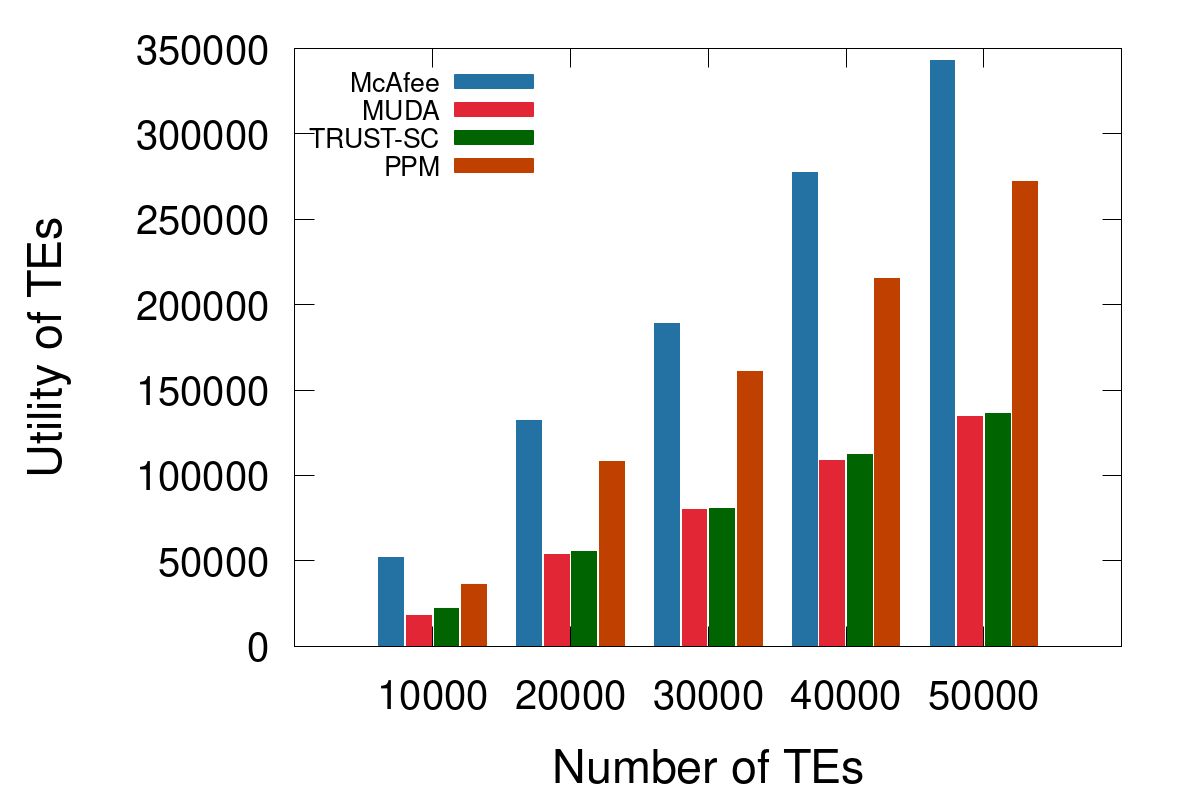}
\subcaption{Utility of TEs with $k = 80$}
\label{sim:2a}
\end{subfigure}%
\begin{subfigure}{0.35\textwidth}
\includegraphics[scale = 0.16]{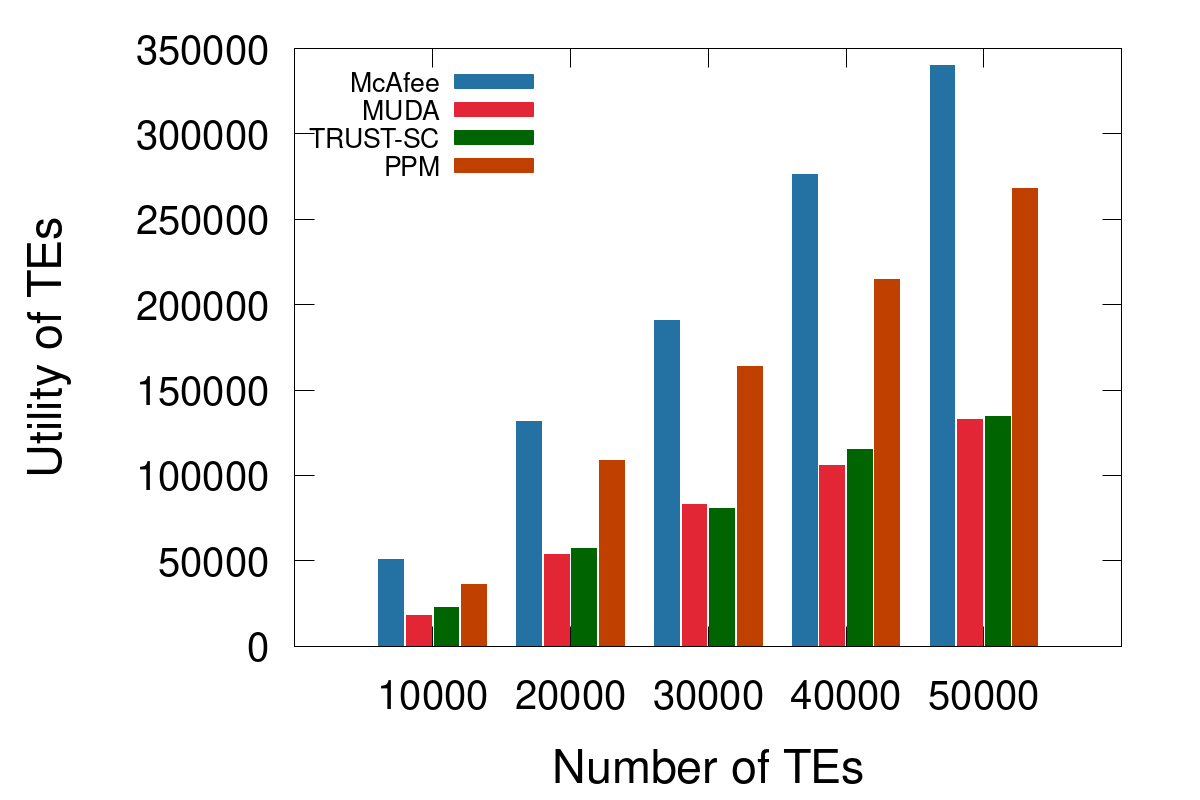}
\subcaption{Utility of TEs with $k = 100$}
\label{sim:22a}
\end{subfigure}%
\begin{subfigure}{0.35\textwidth}
\includegraphics[scale = 0.16]{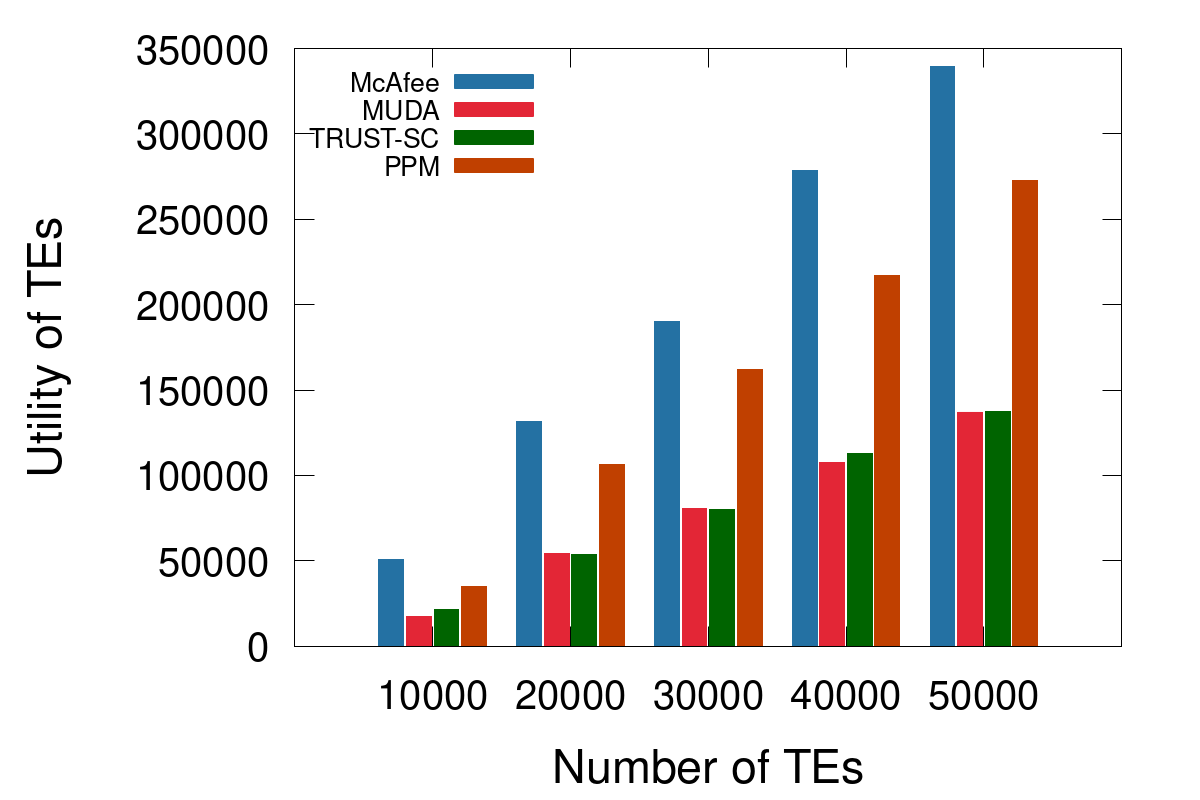}
\subcaption{Utility of TRs with $k = 120$}
\label{sim:222a}
\end{subfigure}%
\caption{Comparison of Utility of TEs with $k = 80, 100$, and $120$.}
\label{Sim:2A}
\end{figure}

 Fig. \ref{Sim:3A} compares TRUST-SC, McAfee, MUDA, and PPM in terms of total payment made to the TEs for cluster values of $80$, $100$, and $120$. The x-axis of the graphs represents the task executors (TEs), and the y-axis represents the total payment. From graphs, it is evident that the total payment to TEs increases steadily with the growth in the number of executors. In this case, when the cluster number is $80$, then it can be seen that in Fig. \ref{sim:3a}, it can be seen that McAfee and PPM yield the highest payments compared to the case of MUDA and TRUST-SC. 
 \begin{figure}[H]
\begin{subfigure}{0.35\textwidth}
\includegraphics[scale = 0.30]{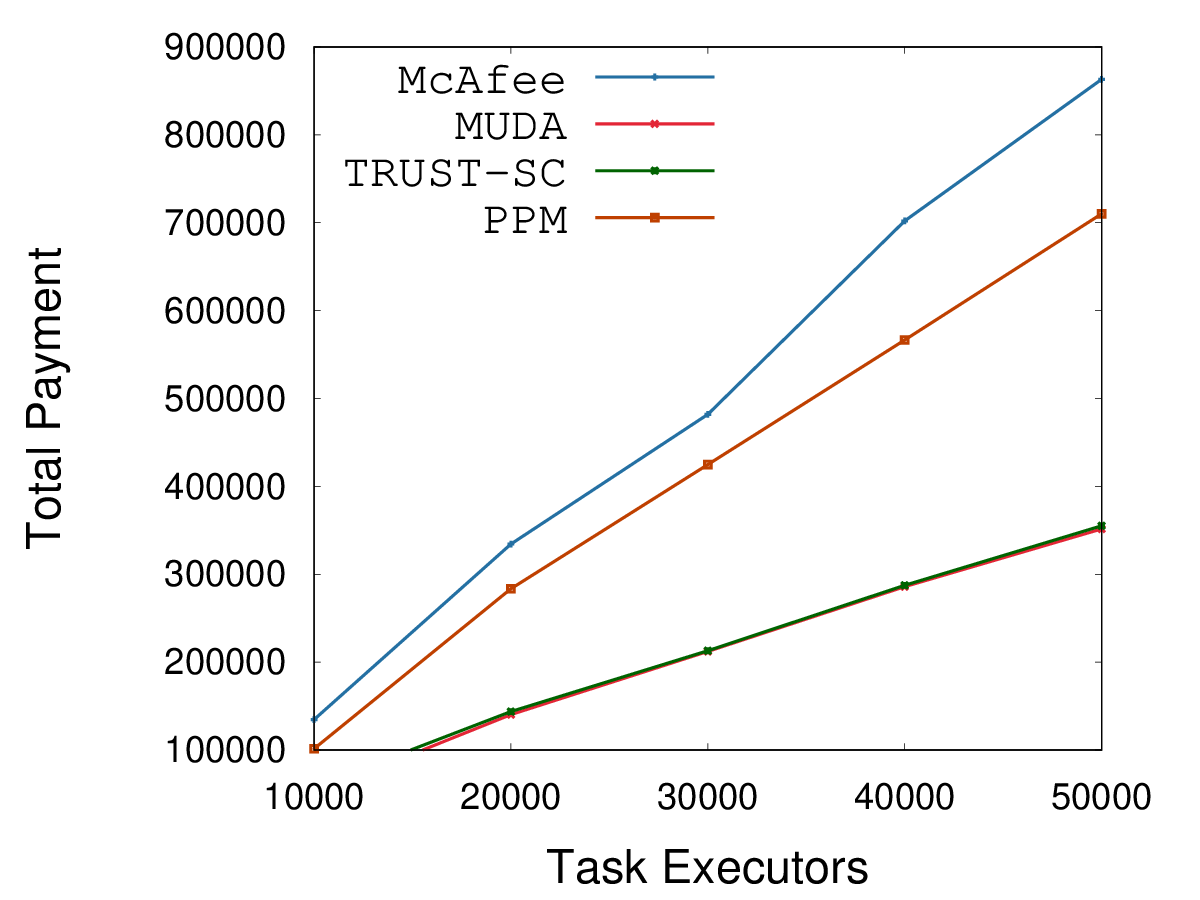}
\subcaption{Payment of TEs with $k = 80$}
\label{sim:3a}
\end{subfigure}%
\begin{subfigure}{0.35\textwidth}
\includegraphics[scale = 0.30]{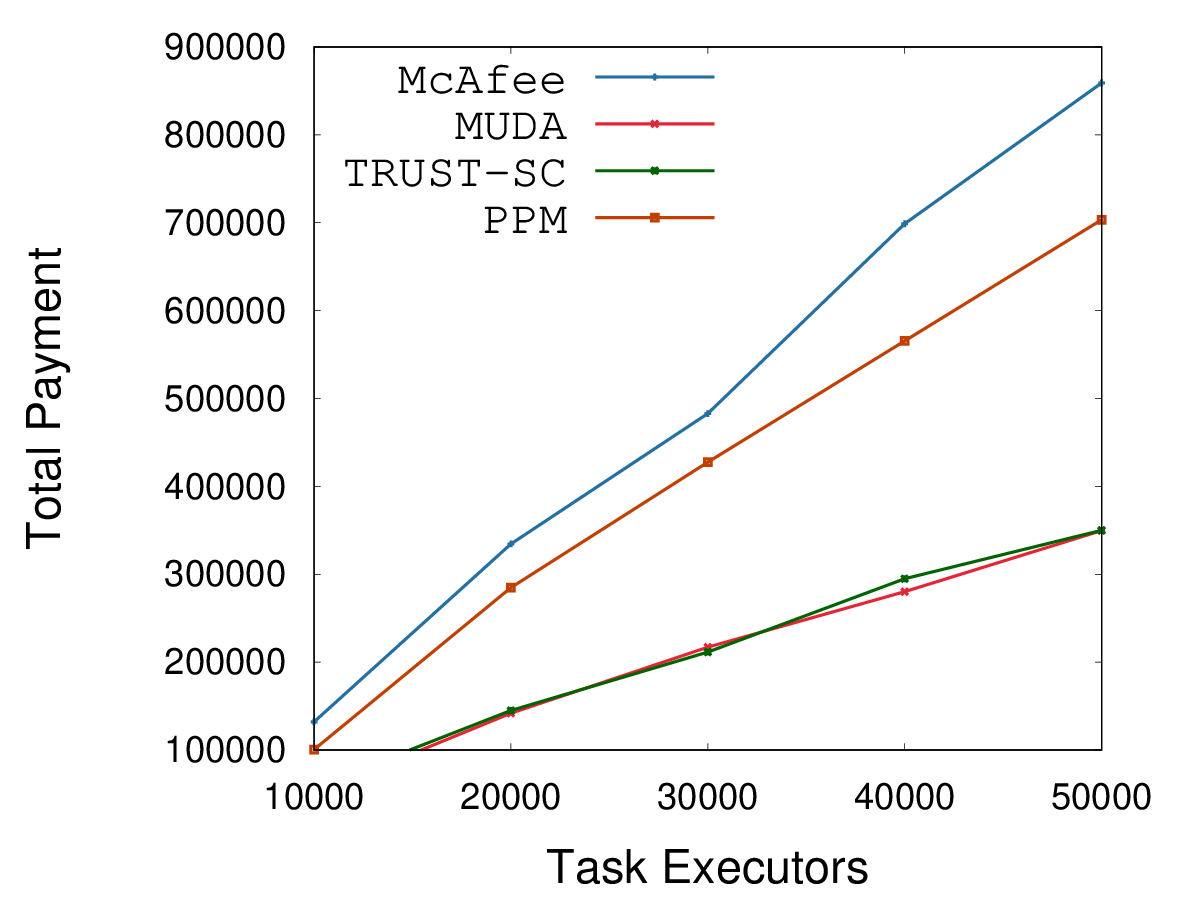}
\subcaption{Payment of TEs with $k = 100$}
\label{sim:33a}
\end{subfigure}%
\begin{subfigure}{0.35\textwidth}
\includegraphics[scale = 0.30]{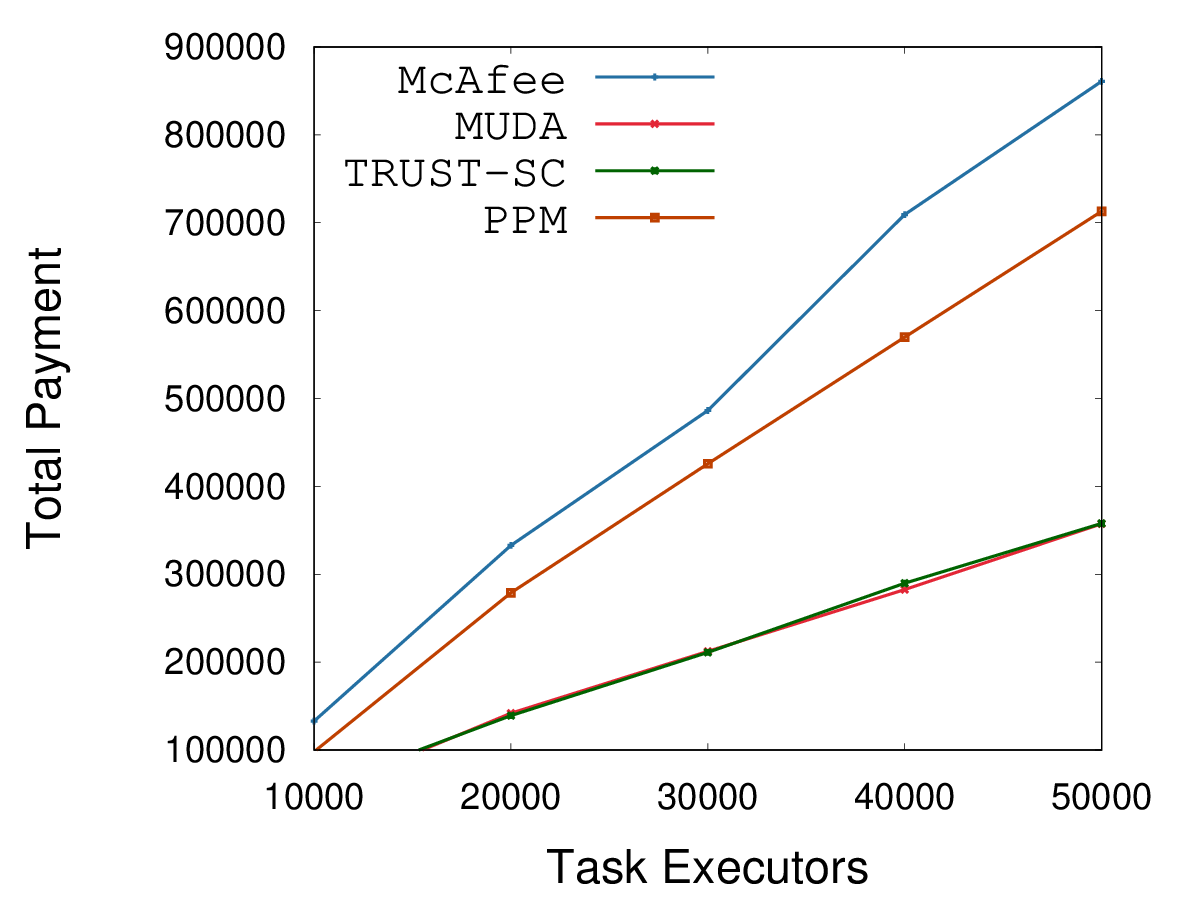}
\subcaption{Payment of TEs with $k = 120$}
\label{sim:333a}
\end{subfigure}%
\caption{Comparison of Payment of TEs with $k = 80, 100$, and $120$.}
\label{Sim:3A}
\end{figure}
 The higher payments in McAfee and PPM can be attributed to their unified market structures without splitting, allowing broader competition and more extensive matching. As a result, more executors participate successfully, leading to greater cumulative payments. MUDA performs slightly below McAfee and PPM, as its pricing and allocation structure, although efficient, does not always maximize total payments to the TEs. TRUST-SC, on the other hand, produces noticeably lower total payments because it operates with market splitting and queue-based trade execution for heterogeneous tasks. Since trades are executed separately within sub-markets, some potential cross-market matches are not realized, which reduces the overall payment distributed to TEs. Similarly, the case with considering $100$ and $120$ number of clusters as shown in Figs. \ref{sim:33a} and \ref{sim:333a}.\\
\indent From Fig. \ref{Sim:4} illustrates a comparison of Tasks Executed by TEs. The x-axis of the graphs represents the task requesters (TRs), and the y-axis represents the number of tasks. The two curves correspond to the total tasks submitted by the requesters and the tasks that were successfully executed. The simulation results in Figs. \ref{sim:4a} and \ref{sim:44a} show that the number of executed tasks closely follows the trend of the total tasks, although it remains consistently lower. Overall, a significant proportion of the submitted tasks is executed across all requesters. 

\begin{figure}[H]
\begin{subfigure}{0.52\textwidth}
\includegraphics[scale = 0.45]{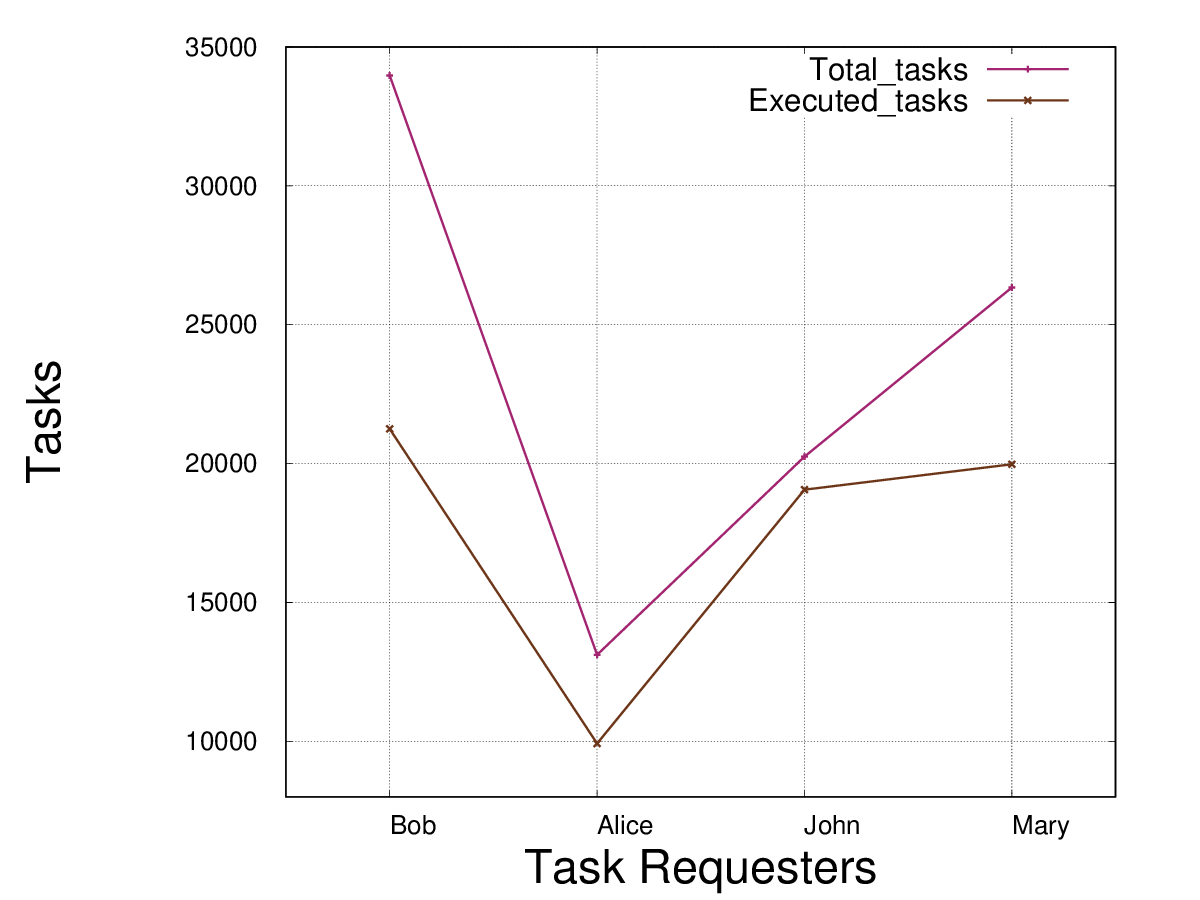}
\subcaption{Tasks Executed by TEs with a large no.of tasks}
\label{sim:4a}
\end{subfigure}%
\begin{subfigure}{0.52\textwidth}
\includegraphics[scale = 0.45]{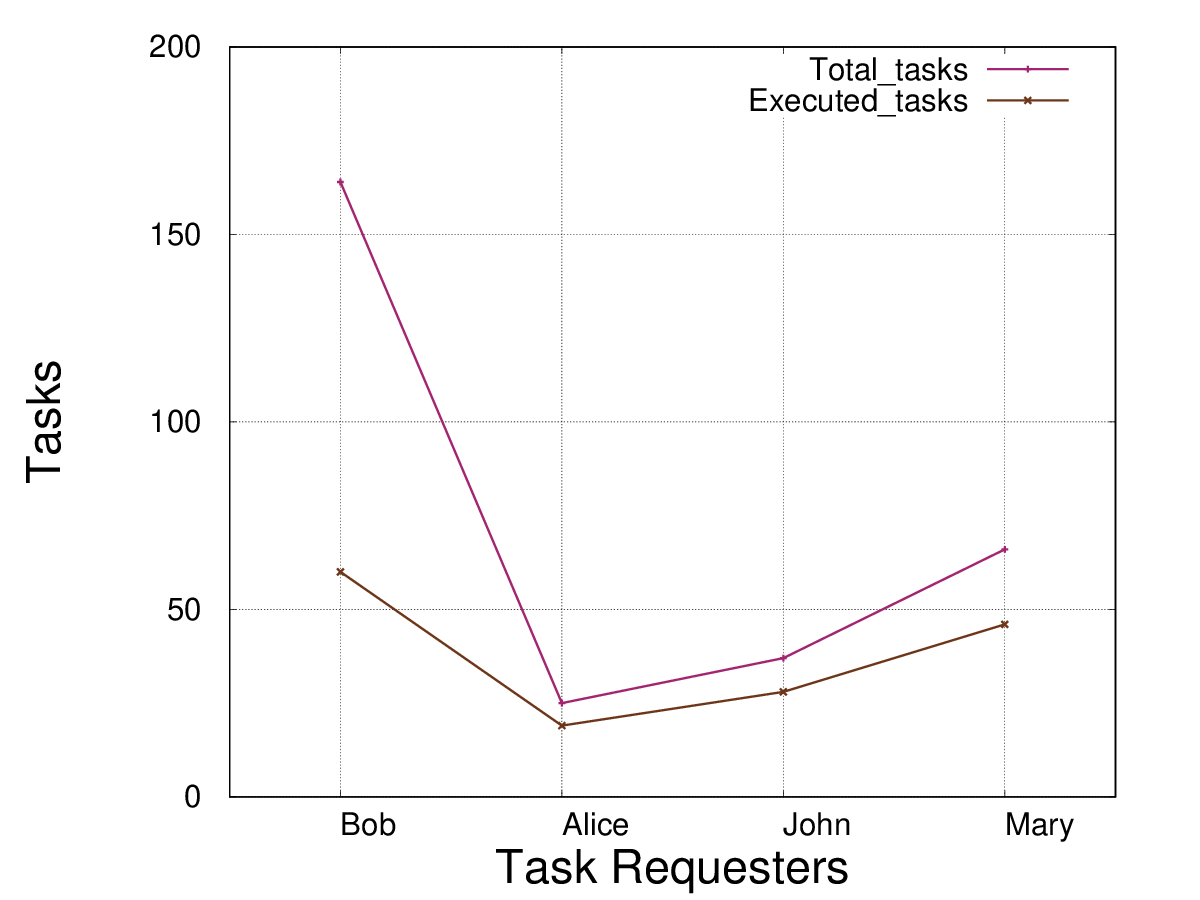}
\subcaption{Tasks Executed by TEs with a small no.of tasks}
\label{sim:44a}
\end{subfigure}%
\caption{Comparison of Tasks Executed by TEs, with large and small no.of tasks}
\label{Sim:4}
\end{figure}
\begin{figure}[H]
\begin{subfigure}{0.35\textwidth}
\includegraphics[scale = 0.30]{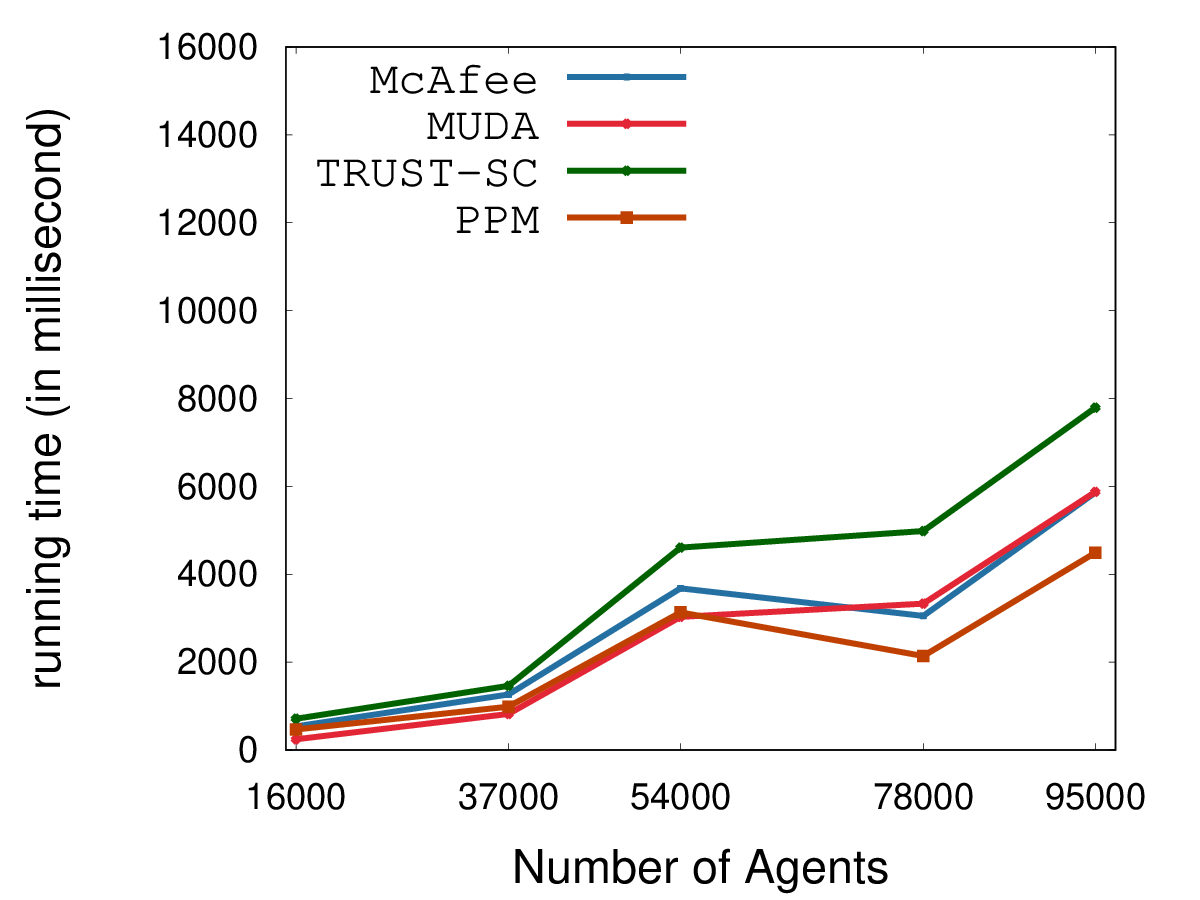}
\subcaption{Running time of agents with $k = 80$}
\label{sim:5a}
\end{subfigure}%
\begin{subfigure}{0.35\textwidth}
\includegraphics[scale = 0.30]{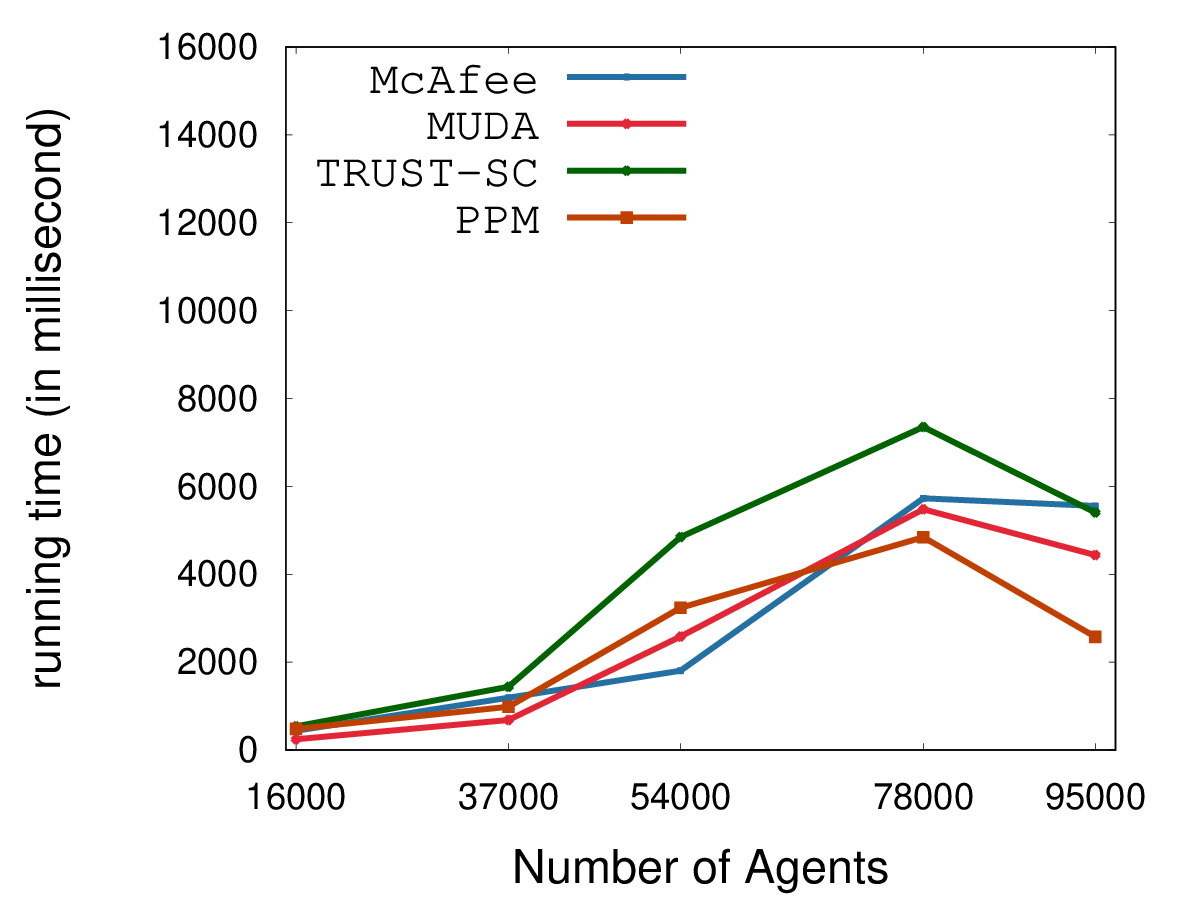}
\subcaption{Running time of agents with $k = 100$}
\label{sim:55a}
\end{subfigure}%
\begin{subfigure}{0.35\textwidth}
\includegraphics[scale = 0.30]{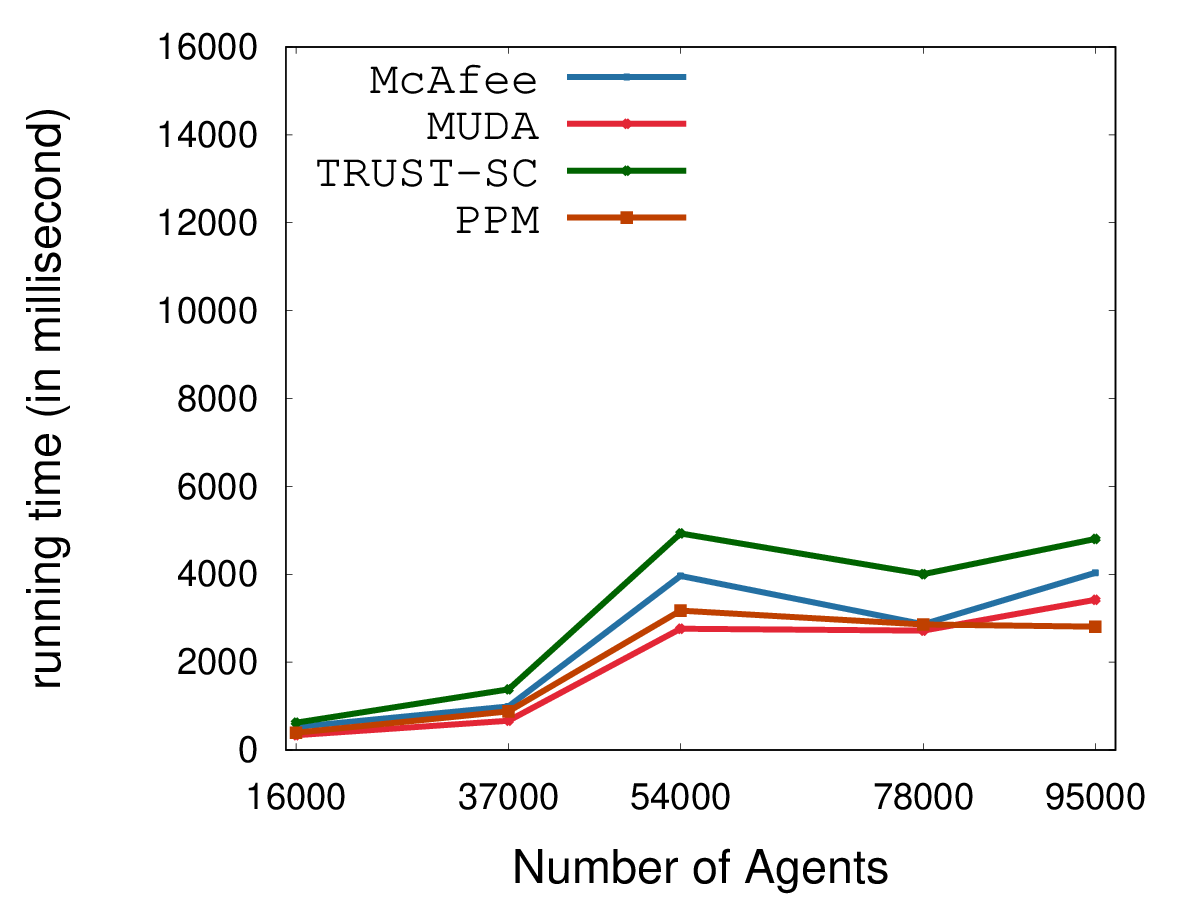}
\subcaption{Running time of agents with $k = 120$}
\label{sim:555a}
\end{subfigure}
\caption{Comparison of running time of Agents with $k = 80, 100$, and $120$.}
\label{sim:5A}
\end{figure}
In Fig. \ref{sim:5A}, the execution time (or running time) of TRUST-SC is compared with the execution time of McAfee, MUDA, and PPM with the cluster values of $80$, $100$, and $120$. The x-axis of the graphs represents the number of agents (both TRs and TEs), and the y-axis represents the running time in milliseconds. In this case, when the cluster number is $80$, then it can be seen that in Figs. \ref{sim:5a} shows that the TRUST-SC takes more time than other benchmark mechanisms. In the case of TRUST-SC, there are three subroutines: 1) Splitting the market, 2) determination of demand and supply, and 3) determination of equilibrium price, that will be executed for allocating the tasks to the task executors and determining their payment in a truthful manner. Similar nature TRUST-SC can be seen for $k-100$ and $120$ in Fig. \ref{sim:55a} and \ref{sim:555a} respectively.

\section{Conclusion and Future Works}
\label{se:conc}
In this paper, one of the spatial crowdsourcing scenarios is studied as a three-tiered, and a novel \emph{Quality-aware Truthful Multi-Unit Double Auction (TRUST-SC)} mechanism is proposed for quality-aware spatial crowdsourcing in strategic environments. In the first tier of the proposed framework, agents are grouped into clusters to reduce complexity and enable efficient localized matching. This improves scalability and structured market organization. Once the cluster formation is done, in the second tier, reliable executors are selected within each cluster based on performance and capability. This ensures accurate and efficient task execution.  In the third tier, tasks are assigned to selected executors, and payment is made to the winning TEs. This guarantees fairness, truthfulness, and efficient resource utilization. Through theoretical analysis, it is shown that the proposed mechanism TRUST-SC is \emph{truthfulness}, \emph{individual rationality}, and \emph{computationally efficient}. In the simulation results, we have seen that the utility of task requesters and executors is higher in the case of  McAfee and PPM than in the case of MUDA and TRUST-SC. The reason is that  McAfee and PPM operate in a unified market without splitting, allowing broader competition in the market. This often increases the total utility of both requesters and executors. While MUDA and TRUST-SC follow market splitting and equilibrium- price based, they may impose allocation constraints with homogeneous and heterogeneous tasks, respectively, which can slightly restrict the surplus utility of requesters and executors.  Same with the case in terms of total payment paid to the TEs. In terms of running time, MUDA takes more execution time than  McAfee, PPM, and MUDA. The reason is that for each task we are splitting the market and calculating the equilibrium price. This will take more time compared to other mechanisms. \\
\indent In the future, one could explore, in addition to the above set-up, each task is endowed with specific start time and finish time deadlines. In such scenarios, designing a time-constrained truthful mechanism for the above-mentioned set-up under spatial crowdsourcing that will ensure the quality of TEs and complete the task within the stipulated deadlines would be a challenging task.

\bibliographystyle{alpha}
\bibliography{phd}

\newpage 
\appendix
\end{document}